\theoremstyle{plain}
\newtheorem{theorem}{Theorem}[section]
\newtheorem{proposition}[theorem]{Proposition}
\newtheorem{lemma}[theorem]{Lemma}
\newtheorem{corollary}[theorem]{Corollary}
\theoremstyle{definition}
\newtheorem{definition}[theorem]{Definition}
\newtheorem*{problem*}{Problem}
\newtheorem{example}[theorem]{Example}
\theoremstyle{remark}
\newtheorem{remark}[theorem]{Remark}
\DeclareMathOperator{\per}{per} 
\newcommand{\Oh}{\mathcal{O}} 
\newcommand{\FF}{\mathbb{F}} 
\newcommand{\CC}{\mathbb{C}} 
\newcommand{\NN}{\mathbb{N}} 
\newcommand{\GL}{\mathrm{GL}} 
\newcommand{\ZZ}{\mathbb{Z}} 
\DeclareMathOperator{\diag}{diag} 
\DeclareMathOperator{\N}{\mathrm{N}} 
\DeclareMathOperator{\tr}{tr} 
\newcommand{\class}[1]{\mathbf{#1}} 
\newcommand{\VP}{\class{VP}} 
\newcommand{\VPk}[1]{\VP_{#1}} 
\newcommand{\VNP}{\class{VNP}} 
\newcommand{\VPe}{\class{VP}_{\!\mathrm{e}}} 
\newcommand{\VPs}{\class{VP}_{\!\mathrm{s}}} 
\newcommand{\VNPe}{\class{VNP}_{\!\mathrm{e}}} 
\newcommand{\VNPk}[1]{\class{VNP}_{#1}} 
\newcommand{\VNPs}{\class{VNP}_{\!\mathrm{s}}} 
\DeclareMathOperator{\homog}{H} 
\newcommand{\weakest}{\mathrm{wst}} 
\newcommand{\weak}{\mathrm{w}} 
\newcommand{\gen}{\mathrm{g}} 
\DeclareMathOperator{\size}{size} 
\DeclareMathOperator{\polyf}{poly} 
\newcommand{\defin}[1]{\emph{#1}} 
\newcommand{\HH}{\mathbb{H}} 
\newcommand{\SSS}{\mathbb{S}} 
\newcommand{\LL}{\mathbb{L}} 
\DeclareMathOperator{\characteristic}{char} 
\newcommand{\eps}{\varepsilon} 
\newcommand{\Eq}{\mathrm{Eq}} 
\newcommand{\xvars}{\mathbf{x}} 
\newcommand{\bvars}{\mathbf{b}} 
\newcommand{\Sop}{\mathrm{S}} 
\def\compactminus{\!-\!} 
\tikzset{node distance=1.1cm and 1.1cm, on grid, semithick, shorten <=2pt, shorten >=2pt,
state/.style ={circle,black, fill, inner sep=0pt, minimum width = 0.3 em}}
\tikzset{every node/.style={circle, fill=white, inner sep = 0.1 em}}
\newcommand{\approxcompl}{\underline{L}}
\newcommand{\Fib}{\mathrm{Fib}}
\newcommand{\approxbar}[1]{\overline{#1}} 
\newcommand{\poly}{\mathrm{poly}} 
\newcommand{\polyapproxbar}[1]{\overline{#1}^{\hspace{0.5pt}\raisebox{-1.2pt}{\scriptsize$\smash{\poly}\!$}}} 
\title{On algebraic branching programs of small width\footnote{This work was partially supported by NWO (617.023.116).}}
\author[1]{Karl Bringmann}
\author[1]{Christian Ikenmeyer}
\author[2]{Jeroen Zuiddam}
\affil[1]{\small Max-Planck-Institut für Informatik, Saarland Informatics Campus, Germany\\
  \texttt{kbringma@mpi-inf.mpg.de}, \texttt{cikenmey@mpi-inf.mpg.de}}
\affil[2]{\small Centrum Wiskunde \& Informatica, Amsterdam, Netherlands\\
  \texttt{j.zuiddam@cwi.nl}}
\begin{document}

\maketitle

\begin{abstract}
In 1979 Valiant showed that the complexity class $\VPe$ of families with polynomially bounded formula size is contained in the class $\VPs$
of families that have algebraic branching programs (ABPs) of polynomially bounded size.
Motivated by the problem of separating these classes we study the
topological closure $\approxbar{\VPe}$, i.e.\ the class of polynomials that can be
approximated arbitrarily closely by polynomials in $\VPe$. We describe
$\approxbar{\VPe}$ with a strikingly simple complete polynomial (in characteristic different from 2) whose recursive
definition is similar to the Fibonacci numbers.
Further understanding
this polynomial seems to be a promising route to new formula lower bounds.

Our methods are rooted in the study of ABPs of small constant width.
In 1992 Ben-Or and Cleve showed that formula size is polynomially equivalent to width-3 ABP size.
We extend their result (in characteristic different from 2)
by showing that approximate formula size is polynomially equivalent to approximate width-2 ABP size.
This is surprising because in 2011 Allender and Wang gave explicit polynomials
that cannot be computed by width-2 ABPs at all!
The details of our construction lead to the aforementioned characterization of~$\approxbar{\VPe}$.

As a natural continuation of this work we prove 
that the class $\VNP$ can be described as the class of families that admit a
hypercube summation of polynomially bounded dimension
over a product of polynomially many affine linear forms.
This gives the first separations of algebraic complexity classes from their nondeterministic analogs. 
\end{abstract}

\section{Introduction}
Let $\VPe$ denote the class of families of polynomials with polynomially bounded formula size
and let $\VPs$ denote the class of families of polynomials that can be written as determinants of matrices of polynomially bounded size whose entries are affine linear forms.
In 1979 Valiant~\cite{Val:79b} proved his famous result $\VPe \subseteq \VPs$.
The question whether this inclusion is strict is a long-standing open question in algebraic complexity theory:
Can the determinant polynomial $\det_n := \sum_{\sigma \in S_n} \text{sgn}(\sigma) \prod_{i=1}^n x_{i,\sigma(i)}$ be computed by formulas of polynomially bounded size?
Motivated by this question we study the class $\overline{\VPe}$
of families of polynomials that can be approximated arbitrarily closely by families in $\VPe$ (see Section\nobreakspace \ref {sec:prelim} for a formal definition).
We present a simple description of the closure $\overline{\VPe}$ and of a
$\approxbar{\VPe}$-complete polynomial whose recursive definition is similar to the Fibonacci numbers, given the characteristic is not 2,
see Theorem~\ref{thm:fibon}.

In algebraic complexity theory, the way of showing a complexity lower bound for a problem $f\in V$ for some $\FF$-vector space $V$ most often goes by (implicitly or explicitly)
finding a function $\mathcal F : V \to \FF$ that is zero on all problems of low complexity while at the same time $\mathcal F(f)\neq 0$.
Grochow \cite{grochow2013unifying} gives a long list (e.g., \cite{Nisan1996, Raz:2009:MFP:1502793.1502797, LO:11v3, approaching-the-chasm-at-depth-four, lamare:13, BI:13})
of settings where complexity lower bounds are obtained in this way.
Moreover, he points out that over the complex numbers these functions $\mathcal F$ can be assumed to be continuous (and even to be so-called highest-weight vector polynomials).
If $\class{C}$ and $\class{D}$ are algebraic complexity classes with $\class{C}\subseteq \class{D}$ (for example, $\class{C}=\VPe$ and $\class{D}=\VPs$),
then any separation of algebraic complexity classes $\class{C} \neq \class{D}$ in this continuous manner would automatically imply the stronger statement
$\class{D} \not\subseteq \overline{\class{C}}$.
It is therefore natural to try to prove the separation
$\VPs \not\subseteq \overline{\VPe}$ instead of the slightly weaker $\VPe \neq \VPs$, which provides further motivation
for studying $\approxbar{\VPe}$.
This is exactly analogous to Mulmuley and Sohoni's geometric complexity approach (see e.g.\ \cite{MR1861288, gct2}
and the exposition \cite[Sec.~9]{MR2861717}) where one tries to prove the separation $\VNP \not\subseteq \approxbar{\VPs}$
to attack Valiant's famous $\VPs \neq \VNP$ conjecture \cite{Val:79b}.
Here $\VNP$ is the class of p-definable families, see Section\nobreakspace \ref {sec:prelim} for a precise definition.

\paragraph*{The generalized Fibonacci polynomial}
We prove that the \emph{generalized Fibonacci polynomial}~$F_n$ is $\overline{\VPe}$-complete under p-degenerations,
where $F_n$ is defined via $F_0\coloneqq 1$, $F_1\coloneqq x_1$, $F_n\coloneqq x_n F_{n-1} + F_{n-2}$, see Section\nobreakspace \ref {subsec:completepoly}.
This means that every family $(f_n)$ in $\overline{\VPe}$
can be obtained as the limit of a sequence $f_n = \lim_{j \to \infty}F_{t(n)}(\ell_1(j),\ldots,\ell_{t(n)}(j))$,
where each $\ell_i(j)$ is a variable or constant and $t(n)$ is a polynomially bounded function.
This is arguably the simplest $\overline{\VPe}$-complete polynomial known today.
Prior to our work the simplest $\approxbar{\VPe}$-complete (and $\VPe$-complete) polynomial was the iterated $3 \times 3$ matrix multiplication polynomial \cite{ben1992computing}.
This immediately motivates the definition of \emph{border Fibonacci complexity} $\approxcompl_\Fib(f)$ of a polynomial~$f$,
which is the smallest number $m$ such that $f$ can be obtained as $\lim_{j \to \infty}(F_m(\ell_1(j),\ldots,\ell_m(j)))_j$.
To make the situation more geometric we allow the $\ell_i(j)$ to be arbitrary affine linear forms.
Our results show that border Fibonacci complexity is polynomially equivalent to border formula size.
This insight is quite striking because a result of Allender and Wang \cite{wang} implies that the Fibonacci complexity \emph{without allowing approximations} can be infinite!

A promising path towards proving formula lower bounds, for example for the determinant or the permanent, is to apply to our setting the following standard geometric ideas.
If we take our field to be the complex numbers and fix the number of variables $n$ and the degree $d$,
then the set of homogeneous degree~$d$ polynomials $\CC[x_1, \ldots, x_n]_d$ contains the set
\[
X_m \coloneqq \{ f \in \CC[x_1, \ldots, x_n]_d \mid \approxcompl_\Fib(f) \leq m \}
\]
as an affine subvariety ($X_m$ is the closure of the set of affine projections of $F_m$ intersected with $\CC[x_1, \ldots, x_n]_d$).
Moreover, since we allowed the $\ell_i(j)$ to be affine linear forms, the group $\GL(\CC^n)$ acts canonically on $X_m$,
making $X_m$ an affine $\GL(\CC^n)$-variety.
If we find a polynomial $\mathcal F$ that vanishes identically on $X_m$, then a nonzero evaluation $\mathcal F(f) \neq 0$ implies that $\approxcompl_\Fib(f)>m$.
This approach looks feasible given the very simple structure of the generalized Fibonacci polynomial.
This is emphasized by the fact that the action of $\GL(\CC^n)$ puts a lot of structure on the coordinate ring of $X_m$, see for example \cite{BI:10, BO:11, LO:11v3, BI:13, HIL:13, Gua:15, OS:16}
where the action of the general linear group on the coordinate ring of a variety is used to classify some of its defining equations.

\subsection{Main Results}

\paragraph*{Algebraic Branching Programs (ABPs) of width 2}
Our main objects of study are the following classes of families of polynomials: the class of families of polynomials with polynomially bounded formula size $\VPe$
(fan-in 2 arithmetic formulas that use additions and multiplications as their operations),
its closure $\overline{\VPe}$, and the nondeterministic variant $\VNP$.
We do so by studying algebraic branching programs of small width.
These are defined as follows.
An \defin{algebraic branching program} (ABP) is a directed acyclic graph with a source vertex $s$ and a sink vertex $t$
that has affine linear forms over the base field $\FF$ as edge labels.
Moreover, we require that each vertex is labeled with an integer (its \emph{layer}) and that edges in the ABP only point from vertices in layer $i$ to vertices in layer $i+1$.
The \emph{width} of an ABP is the cardinality of its largest layer.
The \emph{size} of an ABP is the number of its vertices.
The \defin{value} of an ABP is the sum of the values of all $s$-$t$-paths, where the value of an $s$-$t$-path is the product of its edge labels. We say that an ABP \defin{computes} its value.
The class $\VPs$ coincides with the class of families of polynomials that can be computed by ABPs of polynomially bounded size, see e.g.~\cite{ramprasad}.

For this paper we introduce the class $\VPk{k}$, $k \in \NN$, which is defined as the class of families of polynomials computable by width-$k$ ABPs of polynomially bounded size. It is well-known that $\VPk{k}\subseteq \VPe$ for every $k\geq 1$ (see~Proposition\nobreakspace \ref {prop:vpkinvpe}).
In 1992, Ben-Or and Cleve \cite{ben1992computing} showed that $\VPk{k}=\VPe$ for all $k \geq 3$ (we review the proof, see \protect \MakeUppercase {T}heorem\nobreakspace \ref {benorcleve}).
In 2011 Allender and Wang \cite{wang} showed that width-2 ABPs cannot compute every polynomial, so in particular we have a strict inclusion $\VPk{2} \subsetneq \VPk{3}$.
Let the characteristic of the base field $\FF$ be different from 2.
Our first main result (\protect \MakeUppercase {T}heorem\nobreakspace \ref {approxvpe} and \protect \MakeUppercase {C}orollary\nobreakspace \ref {approxcor}) is that the closure of $\VPk{2}$ and the closure of $\VPe$ are equal,
\begin{equation}\label{eq:vp2bar}
\approxbar{\VPk{2}}=\approxbar{\VPe}.
\end{equation}
Interestingly, as a direct corollary of \eqref{eq:vp2bar} and the result of Allender and Wang, the inclusion $\VPk{2} \subsetneq \overline{\VPk{2}}$ is strict.
It is easy to see that $\VPk{1}$ equals $\overline{\VPk{1}}$ (Proposition\nobreakspace \ref {VP1isVP1bar}), so $\VPk{1}$ and~$\VPk{2}$ are examples of quite similar algebraic complexity classes that behave differently under closure.
Most importantly, from the proof of \eqref{eq:vp2bar} we obtain our results about the generalized Fibonacci polynomial that we mentioned before.

\paragraph*{VNP via affine linear forms}
We define the classes $\VNPe$ and $\VNP$ in the natural way.
In 1980, Valiant~\cite{valiant1980reducibility} showed that $\VNPe = \VNP$
and in this paper we will always view $\VNP$ as the nondeterministic analog of $\VPe$.
To $\VPk{1}$ and $\VPk{2}$ we similarly associate nondeterministic analogs $\VNP_1$ and $\VNP_2$  (see Section\nobreakspace \ref {sec:prelim}).
Using interpolation techniques it is possible to deduce $\VNPk{2}=\VNP$ from~\eqref{eq:vp2bar}, provided the field is infinite.
Using more sophisticated techniques we strengthen this result to get our second main result (\protect \MakeUppercase {T}heorem\nobreakspace \ref {VNPmain}):
\begin{equation}\label{eq:vnp1}
\VNPk{1}=\VNP.
\end{equation}
That is, a family $(f_n)$ is contained in $\VNP$ iff $f_n$ can be written as a hypercube summation of polynomially bounded dimension
over a product of polynomially many affine linear forms.
Using \eqref{eq:vnp1} it is then easy to verify that $\VPk{1} \subsetneq \VNP_1$
and using \cite{wang} yields $\VPk{2} \subsetneq \VNP_2$,
which separates complexity classes from their nondeterministic analogs.
Interestingly $\VNP_1 \subsetneq \VNP$ over the field with 2 elements, see Section\nobreakspace \ref {pro:FIIproof}.

\paragraph*{Restricted ABP edge labels}
Several more results on small-width ABPs, approximation closures, and hypercube summations are proved throughout this paper.
For example, in Section\nobreakspace \ref {sec:differentabpedgelabels} we investigate the subtleties of what happens if we restrict the ABP edge labels to simple affine linear forms, or to variables and constants. The precise relations between complexity classes that we obtain are listed in Figure~\ref{fig:overview} in Appendix\nobreakspace \ref {sec:figures}.
As another example, we strengthen \eqref{eq:vnp1} as follows (\protect \MakeUppercase {T}heorem\nobreakspace \ref {thm:secondmainstronger}):
A family $(f_n)$ is contained in $\VNP$ iff $f_n$ can be written as a 
hypercube summation of polynomially bounded dimension
over a product of polynomially many affine linear forms
that use \emph{at most two variables} each.

\subsection{Related work}
In the boolean setting as well as in the algebraic setting finding lower bounds for the formula size of explicit problems is considered a major open problem.
For the boolean setting we refer the reader to the line of papers \cite{Subbo61, Andreev87, IN:93, PZ93, Hastad98, Tal14},
which results in an explicit function with formula size $\Omega(n^3/(\log^2 n \log \log n))$.

In the algebraic setting the smallest formula for the determinant has size $\Oh(n^{\log n})$, which can be deduced from e.g.~\cite{PI:15}.
The best known lower bound on the formula size of $\det_n$ is $\Omega(n^3)$ by \cite{Kal:85}.
That paper also gives a quadratic lower bound for an explicit polynomial
(note that the lower bound for the determinant is not quadratic in the number of variables).

Toda \cite{toda:92} proved that several definitions for the class $\VPs$ are equivalent, see also~\cite{malod2008}.
In particular $\VPs$ is the class of polynomials that can be written as determinants of matrices of polynomially bounded size whose entries are affine linear forms.
Due to its pure mathematical formulation, lower bounds for this \emph{determinantal complexity} attracted the attention of geometers \cite{MR:04,lamare:13,Alper2015}.
Moreover, Mulmuley and Sohoni's geometric complexity approach \cite{MR1861288, gct2} is also mainly focused on lower bounds for the determinantal complexity
and the symmetries of the determinant polynomial play a key role in their work.
Recently \cite{BIP:16} showed that it is not possible to prove superpolynomial lower bounds on the determinantal complexity
using only information about the occurrences/non-occurrences of irreducible representations in the coordinate rings of the orbit closures of the determinant and the (padded) permanent.
This disproves a major conjecture in geometric complexity theory.
The proof in \cite{BIP:16} is fairly general and also holds for lower bounds on the formula size.
Only very recently the formula size analog to determinantal complexity, the \emph{iterated matrix multiplication complexity} was studied from a geometric perspective \cite{Ges:15}.

There is a large number of publications on lower bounds for constant \emph{depth} circuits and formulas (with superconstant fan-in),
see e.g.\ \cite{AV:08, KOIRAN201256, chasm3, Tavenas20152},
which recently led to the celebrated result \cite{approaching-the-chasm-at-depth-four}
that the permanent does not admit size $2^{o(\sqrt{m})}$ homogeneous $\Sigma\Pi\Sigma\Pi$ circuits in which the bottom fan-in is bounded by $\sqrt{m}$.
In the light of the previous depth-reduction results this seemed very close to separating $\VP$ from $\VNP$.
Several very recent results \cite{2016arXiv160902103E, FSV:17} indicate that new ideas are needed to separate $\VP$ from $\VNP$.

Ben-Or and Cleve \cite{ben1992computing} proved that a family of polynomials has polynomially bounded formula size if and only if it is computable by width-3 ABPs of polynomial size.
An excellent exposition on the history of small-width computation can be found in \cite{wang},
along with an explicit polynomial that cannot be computed by width-2 ABPs: $x_1 x_2 + x_3 x_4 + \cdots + x_{15} x_{16}$.
Saha, Saptharishi and Saxena \cite[Cor.~14]{SSS:09} showed that $x_1 x_2 + x_3 x_4 + x_5 x_6$ cannot be computed by width-2 ABPs that correspond to the iterated matrix multiplication of upper triangular matrices.

B\"urgisser \cite{MR2097213} studied approximations in the model of general algebraic circuits, finding general upper bounds on the error degree.
For most specific algebraic complexity classes~$\class{C}$ the relation between $\class{C}$ and $\overline{\class{C}}$ has not been an active object of study.
As pointed out recently by Forbes \cite{For:16},
Nisan's result \cite{Nisan:Lower_bounds_non_commutative_compts} implies that $\class{C}=\overline{\class{C}}$ for $\class{C}$ being the class of size-$k$ algebraic branching programs on noncommuting variables.
Recently, a structured study of $\overline{\VP}$ and $\overline{\VPs}$ has been started, see \cite{grochow_et_al:LIPIcs:2016:6314}.
By far the most work in lower bounds for topological approximation algorithms has been done in the area of bilinear complexity,
dating back to \cite{BCRL:79, stra:83-2, Lic:84} and more recently \cite{Lan:05, LO:11v3, HIL:13, Zui:15, LM:16}, to list a few.

\subsection{Paper outline} In Section\nobreakspace \ref {sec:prelim} we introduce in more detail the approximation closure and the nondeterminism closure of a complexity class.
In Section\nobreakspace \ref {sec:firstmainresult} we prove the first main result: border formula size is polynomially equivalent to border width-2 ABP size
and the generalized Fibonacci polynomial is $\approxbar{\VPe}$-complete under p-degenerations.
In Section\nobreakspace \ref {sec:secondmain} we prove the second main result: a new description of $\VNP$ as the nondeterminism closure of families that have polynomial-size width-1 ABPs.
The later sections contain details on how to strengthen the result from Section\nobreakspace \ref {sec:secondmain}
and results on the power of ABPs with restricted edge labels.

\section{Nondeterminism and approximation closure}\label{sec:prelim}
In this section we introduce the approximation closure and the nondeterminism analog of a class. A \defin{family} is a sequence of polynomials $(f_n)_{n\in \NN}$. A \defin{class} is a set of families and will be written in boldface,~$\class{C}$.
For an introduction to the algebraic complexity classes $\VPe$, $\VP$, and $\VNP$ we refer the reader to \cite{burgisser1997algebraic}.
We denote by $\poly(n)$ the set of polynomially bounded functions $\NN\to\NN$. 
We define the norm of a complex multivariate polynomial as the sum of the absolute values of its coefficients.
This defines a topology on the polynomial ring $\CC[x_1,\ldots,x_m]$.
Given a complexity measure $L$, say ABP size or formula size, there is a natural notion of approximate complexity that is called \emph{border complexity}.
Namely, a polynomial $f \in \CC[\xvars]$ has \emph{border complexity}~$\underline L^{\text{top}}$ at most~$c$ if there is a sequence of polynomials $g_1, g_2, \ldots$ in $\CC[\xvars]$ converging to $f$ such that each $g_i$ satisfies $L(g_i) \leq c$.
It turns out that for reasonable classes over the field of complex numbers $\CC$, this \emph{topological} notion of approximation is equivalent to what we call \emph{algebraic} approximation (see e.g.\ \cite{MR2097213}).
Namely, a polynomial $f \in \CC[\xvars]$ satisfies $\underline L(f)^{\text{alg}} \leq c$ iff there are polynomials $f_1, \ldots, f_e \in \CC[\xvars]$ such that the polynomial
\[
h := f + \eps f_1 + \eps^2 f_2 + \cdots + \eps^e f_e \in \CC[\eps, \xvars]
\]
has complexity $L_{\CC(\eps)}(h) \leq c$, where $\eps$ is a formal variable and $L_{\CC(\eps)}(h)$ denotes the complexity of $h$ over the field extension $\CC(\eps)$.
This algebraic notion of approximation makes sense over any base field and we will use it in the statements and proofs of this paper.

\begin{definition}\label{barclass}
Let $\class{C}(\FF)$ be a class over the field $\FF$. We define the \defin{approximation closure}~$\approxbar{\class{C}}(\FF)$ as follows: a family $(f_n)$ over $\FF$ is in $\approxbar{\class{C}}(\FF)$ if there are polynomials $f_{n;i}(\xvars) \in \FF[\xvars]$ and a function $e:\NN\to\NN$
such that the family $(g_n)$ defined by
\[
g_n(\xvars) \coloneqq f_n(\xvars) + \eps f_{n;1}(\xvars) + \eps^2 f_{n;2}(\xvars) + \cdots + \eps^{e(n)} f_{n; e(n)}(\xvars)
\]
is in $\class{C}(\FF(\eps))$. We define the \defin{poly-approximation closure} $\polyapproxbar{\class{C}}(\FF)$ similarly, but with the additional requirement that $e(n) \in \polyf(n)$. We call $e(n)$ the \defin{error degree}.
\end{definition}
Interestingly, for subtle reasons, taking the approximation closure $\class C \mapsto \approxbar{\class C}$ is \emph{not} idempotent in general
and hence not a closure operator, but for reasonable classes (like $\VPk{k}$, $\VPe$, and~$\VP$) it is.

\medskip

One can think of $\VNP$ as a ``nondeterminism closure'' of $\VP$. We want to use the nondeterminism closure for general classes.
\begin{definition}\label{Ndef} Let $\class{\class{C}}$ be a class.
The class~$\N(\class{C})$ consists of families $(f_n)$ with the following property: there is a family $(g_n) \in \class{C}$ and $p(n), q(n) \in \polyf(n)$ such that
\[
f_n(\xvars) = \sum_{\mathclap{\bvars\in \{0,1\}^{p(n)}}} g_{q(n)}(\bvars,\xvars),
\]
where $\xvars$ and $\bvars$ denote sequences of variables $x_1, x_2, \ldots$ and $b_1, b_2, \ldots, b_{p(n)}$.
We will sometimes say that $f(\xvars)$ is a \defin{hypercube sum over $g$} and that $b_1, b_2, \ldots, b_{p(n)}$ are the \defin{hypercube variables}.
For any $s, t$, we will use the standard notation $\VNP_{\!s}^t$ to denote $\N(\VP_{\!s}^t)$, where the superscript~$t$ will become relevant in Section\nobreakspace \ref {sec:differentabpedgelabels}.
We remark that the map $\class C \mapsto \N(\class C)$ trivially satisfies all properties of being a closure operator.
\end{definition}

\section{Approximate width-2 ABPs and formula size}\label{sec:firstmainresult}\label{subsec:completepoly}

As mentioned in the introduction, Allender and Wang \cite{wang} showed that there exist polynomials that cannot be computed by any width-2 ABP, for example
the polynomial $x_1 x_2 + x_3 x_4 + \cdots + x_{15} x_{16}$. Therefore, we have a separation $\VPk{2} \subsetneq \VPk{3} = \VPe$.
We show that allowing approximation changes the situation completely: every polynomial can be approximated by a width-2 ABP.
In fact, every polynomial can be approximated by a width-2 ABP of size polynomial in the formula size, and with error degree polynomial in the formula size.
This is the main result of this section.

\begin{theorem}\label{approxvpe}
$\VPe \subseteq \polyapproxbar{\VPk{2}}$ when $\characteristic(\FF)\neq 2$.
\end{theorem}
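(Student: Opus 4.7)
The plan is to adapt Ben-Or and Cleve's proof of $\VPe = \VPk{3}$ (Theorem~\ref{benorcleve}) to width $2$, using approximation to compensate for the missing third dimension. Ben-Or and Cleve encode each subformula $g$ by a product of constantly many elementary unipotent $3 \times 3$ matrices whose overall product equals $I + g\cdot E_{1,3}$, and realize multiplication gates by the exact commutator identity $[I+p E_{1,2},\, I+q E_{2,3}] = I + pq\,E_{1,3}$. In width $2$ this identity has no analog, but over the field extension $\FF(\varepsilon)$ the diagonal matrices $D_c := \diag(c, c^{-1})$ with $c = \varepsilon^{\pm k}$ are legal width-$2$ edge labels, and their conjugation action gives enough flexibility to simulate Ben-Or--Cleve's gadgets up to a controllable error in $\varepsilon$.

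The core ingredient is an approximate width-$2$ multiplication gadget. Writing $U(t) := I + t E_{1,2}$ and $L(t) := I + t E_{2,1}$, a direct calculation gives
\[
U(\varepsilon p)\,L(\varepsilon q)\,U(-\varepsilon p)\,L(-\varepsilon q) \;=\; I + \varepsilon^2 pq\,(E_{1,1} - E_{2,2}) + \varepsilon^3 R(p,q,\varepsilon),
\]
so that $pq$ appears at order $\varepsilon^2$ on the diagonal. Pre- and post-conjugating by appropriate scalings $D_{\varepsilon^{-1}}$ and by a fixed unipotent rotation lets one move this diagonal content into the $(1,2)$-entry while simultaneously cancelling the counter-contribution at $(2,2)$; this latter cancellation requires dividing by $2$, which is exactly where $\characteristic(\FF)\neq 2$ enters. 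Addition is essentially free since $U(p)\,U(q) = U(p+q)$ holds exactly in width $2$.

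Equipped with these two gadgets, I would prove by induction on a formula $\phi$ computing $f$ the following invariant: there is a width-$2$ ABP over $\FF(\varepsilon)$ of size $O(\lvert\phi\rvert)$ whose matrix product equals $I + f\cdot E_{1,2} + \varepsilon\cdot R_\phi(\xvars,\varepsilon)$ for a polynomial matrix $R_\phi$. At a sum gate one concatenates the sub-ABPs through the addition gadget; at a product gate one plugs the sub-ABPs into the multiplication gadget above, surrounded by the appropriate diagonal rescalings. The $(s,t)$-entry of the resulting matrix product is then $f + \varepsilon\cdot(\text{polynomial in }\varepsilon)$, certifying $f\in\polyapproxbar{\VPk{2}}$ provided the most negative power of $\varepsilon$ appearing can be bounded polynomially. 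Each gadget adds only a constant to the largest power of $\varepsilon^{-1}$ used, so after traversing a formula of size $s$, clearing denominators gives an error degree polynomial in $s$.

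The main obstacle is the composition step. In width $3$ the multiplication gadget returns exactly a unipotent matrix $I + pq\,E_{1,3}$, which slots into any enclosing gadget without side effects; in width $2$ the output is only unipotent \emph{modulo} $\varepsilon$, with an $O(\varepsilon)$ perturbation in all four entries. When this output is fed into an enclosing multiplication gadget, those perturbations get multiplied by $\varepsilon^{-k}$ factors from the outer diagonal rescalings, and one must verify that they remain small enough not to corrupt the leading-order computation of the enclosing gadget. Carrying out this bookkeeping --- showing that an inductive ``$I + g\,E_{1,2} + O(\varepsilon)$'' normal form is preserved at every level of the formula --- is the technically delicate part of the argument, and is precisely what forces the error degree to be polynomial rather than constant.
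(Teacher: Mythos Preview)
Your overall strategy --- adapt Ben-Or--Cleve to width~2 using $\varepsilon$-scalings as a surrogate for the missing third row --- is the same as the paper's, but the concrete realization differs. The paper does not work with unipotents $U(t)=I+tE_{1,2}$ and a commutator; it uses the anti-triangular matrices $Q(f)=\bigl(\begin{smallmatrix}f&1\\1&0\end{smallmatrix}\bigr)$, builds an explicit \emph{squaring} gadget $A\cdot Q(f)\cdot B\cdot Q(f)\cdot C \in Q(\pm f^2)+O(\varepsilon)$ from three fixed constant matrices, and then obtains multiplication via the polarization identity $fg = (f/2+g)^2-(f/2)^2-g^2$ (this is where $\characteristic(\FF)\neq 2$ enters). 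So your commutator route is a genuine alternative, and the commutator calculation you sketch is correct.

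There are, however, two real gaps. First, the size bound $O(|\phi|)$ is wrong as stated: your commutator uses \emph{two} copies of each sub-ABP (for $U(\pm\varepsilon p)$ and $L(\pm\varepsilon q)$), so the size blows up by a constant factor at every multiplication gate and is $2^{\Theta(\mathrm{depth})}$, not linear. You must first apply Brent's depth reduction to get depth $O(\log|\phi|)$; the paper does exactly this (Proposition~\ref{pro:brent}), and without it the construction is exponential.

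Second, the composition issue you flag is not just ``delicate bookkeeping'' --- as written, it fails. With the invariant $U(f)+O(\varepsilon)$ in all four entries, the very first step of your multiplication gadget (conjugating by $\diag(\varepsilon,1)$ to turn $U(p)$ into $U(\varepsilon p)$) scales the $(2,1)$-entry error by $\varepsilon^{-1}$, promoting an $O(\varepsilon)$ error to $O(1)$ and destroying the signal before the commutator is even formed. The paper's fix is a substitution trick you do not mention: before feeding a sub-ABP into a multiplication, \emph{replace $\varepsilon$ by $\varepsilon^3$ throughout} so that the input error is $O(\varepsilon^3)$ rather than $O(\varepsilon)$, which survives the $\varepsilon^{-2}$ amplification inside the gadget (see the squaring Lemma~\ref{lem:squaring}, which requires input accuracy $O(\varepsilon^3)$, and the case $f=g\cdot h$ in Proposition~\ref{prop:constructQ}). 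This substitution is also what drives the error degree up to $25^{O(\log n)}=\poly(n)$; your proposal asserts the error degree is polynomial but gives no mechanism that would make the inductive invariant survive a single level of nesting.
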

We leave as an open question what happens in characteristic 2.

%
In order to understand the following proofs and the corresponding figures 
it is advisable to recall that
an ABP corresponds naturally to an iterated product of matrices if we number the vertices in each layer consecutively, starting with 1.
Namely, consider two consecutive layers $i$ and $i+1$ and let~$M_i$ be the matrix
whose entry at position $(v,w)$ is the label of the edge from vertex $v$ in layer $i$ to vertex $w$ in layer $i+1$
(or 0 if there is no edge between these vertices).
Then the ABP's value equals the product $M_k\cdots M_2 M_1$.

For a polynomial $f$ over $\FF(\varepsilon)$ define the matrix
$
Q(f) \coloneqq
\bigl(\begin{smallmatrix}
f & 1\\
1 & 0
\end{smallmatrix}
\bigr).
$
A \defin{parametrized affine linear form} is an affine linear form over the field $\FF(\eps)$.
A \defin{primitive Q-matrix} is any matrix~$Q(\ell)$, where $\ell$ is a parametrized linear form.
For a $2\times 2$ matrix $M$ with entries in $\FF(\eps)[\xvars]$,  we use the shorthand notation $M + \Oh(\varepsilon^k)$ for
$M + \Big(\begin{smallmatrix}\Oh(\varepsilon^k)&\Oh(\varepsilon^k)\\\Oh(\varepsilon^k)&\Oh(\varepsilon^k)\end{smallmatrix}\Big)$,
where $\Oh(\varepsilon^k)$ denotes the set $\eps^k\, \FF[\eps, \xvars]$.
As a product of matrices, the ABP construction in our proof of \protect \MakeUppercase {T}heorem\nobreakspace \ref {approxvpe} will be of the form $\begin{psmallmatrix}1 & 0\end{psmallmatrix} M_\ell \cdots M_2 M_1 \begin{psmallmatrix}1 \\ 0\end{psmallmatrix}$ where the $M_i$ are primitive Q-matrices $Q(f)$ for which $f$ is either a constant from $\FF(\varepsilon)$ or a variable. We are thus proving a slightly stronger statement than the statement of \protect \MakeUppercase {T}heorem\nobreakspace \ref {approxvpe}.


\begin{lemma}[Addition]\label{lem:addition}
Let $k\geq 1$. Let $f,g \in \FF[\xvars]$ be polynomials such that some $F \in Q(f) + \Oh(\eps^k)$ and some $G \in Q(g) + \Oh(\eps^k)$ can be written as a product of $n$ and $m$ primitive Q-matrices, respectively. Then some matrix  $H \in Q(f+g) + \Oh(\eps^k)$ can be written as the product of $n + m + 1$ primitive Q-matrices. Moreover, if the error degrees in $F, G$ are $e_f, e_g$, respectively, then the error degree of $H$ is at most $e_f + e_g$.
\end{lemma}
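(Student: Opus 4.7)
The plan is to exhibit an explicit $2\times 2$ matrix identity that combines two $Q$-matrices into the $Q$-matrix of their sum, using one extra primitive Q-matrix as glue, and then to observe that the identity is preserved under $O(\varepsilon^k)$ perturbations.

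\medskip

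First I would record the key identity. Take the primitive Q-matrix $Q(0)=\bigl(\begin{smallmatrix}0&1\\1&0\end{smallmatrix}\bigr)$, which is the ``swap'' matrix. A direct $2\times 2$ computation gives
\[
Q(f)\,Q(0) \;=\; \begin{pmatrix}f & 1\\ 1 & 0\end{pmatrix}\begin{pmatrix}0&1\\1&0\end{pmatrix} \;=\; \begin{pmatrix}1 & f\\ 0 & 1\end{pmatrix},
\]
and therefore
\[
Q(f)\,Q(0)\,Q(g) \;=\; \begin{pmatrix}1 & f\\ 0 & 1\end{pmatrix}\begin{pmatrix}g & 1\\ 1 & 0\end{pmatrix} \;=\; \begin{pmatrix}f+g & 1\\ 1 & 0\end{pmatrix} \;=\; Q(f+g).
\]
Since $Q(0)$ is itself a primitive Q-matrix, this identity shows that $Q(f+g)$ is a product of exactly one more primitive Q-matrix than $Q(f)$ and $Q(g)$ combined.

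\medskip

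Next I would lift this to the approximate setting. Write $F = Q(f) + \varepsilon^k A$ and $G = Q(g)+\varepsilon^k B$, where the entries of $A$ and $B$ lie in $\FF[\varepsilon,\xvars]$. Define $H \coloneqq F\cdot Q(0)\cdot G$; by assumption this is a product of $n+m+1$ primitive Q-matrices. Expanding the product gives
\[
H \;=\; Q(f)\,Q(0)\,Q(g)\;+\;\varepsilon^k\bigl(A\,Q(0)\,Q(g) + Q(f)\,Q(0)\,B\bigr)\;+\;\varepsilon^{2k}\,A\,Q(0)\,B,
\]
and by the identity above the first term equals $Q(f+g)$, while the remaining terms lie in $\Oh(\varepsilon^k)$. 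Hence $H \in Q(f+g)+\Oh(\varepsilon^k)$, as required.

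\medskip

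Finally the error degree bound follows by tracking $\varepsilon$-degrees entrywise. Since $f\in \FF[\xvars]$ the matrix $Q(f)$ has no $\varepsilon$ in its entries, so the entries of $F$ have $\varepsilon$-degree at most $e_f$ and those of $G$ at most $e_g$. The middle factor $Q(0)$ is constant in $\varepsilon$, so every entry of $H=F\,Q(0)\,G$ has $\varepsilon$-degree at most $e_f+e_g$, and the same bound holds for the perturbation $H-Q(f+g)$.

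\medskip

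There is essentially no obstacle here: the entire content of the lemma is the algebraic identity $Q(f)\,Q(0)\,Q(g)=Q(f+g)$, and the rest is a formal check that matrix multiplication respects the $\Oh(\varepsilon^k)$ filtration and adds $\varepsilon$-degrees. The only mild thing to notice is that, although one might hope to save the extra primitive factor, a direct computation shows $Q(f)Q(g)=\bigl(\begin{smallmatrix}fg+1 & f\\ g & 1\end{smallmatrix}\bigr)\neq Q(f+g)$, so some gluing matrix is genuinely needed and $Q(0)$ is the cleanest choice.
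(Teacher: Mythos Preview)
Your proof is correct and follows exactly the same approach as the paper: define $H \coloneqq F\cdot Q(0)\cdot G$ and use the identity $Q(f)\,Q(0)\,Q(g)=Q(f+g)$, then observe that the $\Oh(\eps^k)$ terms are preserved and that $\eps$-degrees add under multiplication. The paper's proof is just a terser version of what you wrote.
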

\begin{proof}
Note that $(Q(f) + \Oh(\eps^k)) \cdot Q(0) \cdot (Q(g) + \Oh(\eps^k)) = Q(f+g) + \Oh(\eps^k)$, so we have $H\coloneqq F\cdot Q(0)\cdot G \in Q(f+g) + \Oh(\eps^k)$. Moreover, the largest power of $\eps$ occurring in $H$ is $\eps^{e_f + e_g}$. See Fig.\nobreakspace \ref {fig:addition}.
\end{proof}

\begin{figure}[h]
\centering
\begin{minipage}{7em}
\begin {tikzpicture}
\node[state] (11) [label={$u_1$}]{};
\node[state] (12) [right=of 11, label={$u_2$}] {};
\node[state] (21) [below=of 11, label={-90:$v_1$}] {};
\node[state] (22) [below=of 12, label={-90:$v_2$}] {};
\path (11) edge (22);
\path (12) edge (21);
\path (11) edge node[left, xshift=-1pt] {$f + g$} (21);
\path (12) edge[draw=none] node[right] {$\hspace{0.5em}+\, \Oh(\eps^k)$} (22);
\end{tikzpicture}
\end{minipage}
\hspace{5em}$\sim$\hspace{2em}
\begin{minipage}{10em}
\begin {tikzpicture}
\node[state] (11) [label={$u_1$}]{};
\node[state] (12) [right=of 11, label={$u_2$}] {};
\node[state] (21) [below=of 11] {};
\node[state] (22) [below=of 12] {};
\node[state] (31) [below=of 21] {};
\node[state] (32) [below=of 22] {};
\node[state] (41) [below=of 31, label={-90:$v_1$}] {};
\node[state] (42) [below=of 32, label={-90:$v_2$}] {};
\path (11) edge (22);
\path (12) edge (21);
\path (12) edge[draw=none] node[right] {$\hspace{0.5em}+\, \Oh(\eps^k)$} (22);
\path (11) edge node[left, xshift=-1pt] {$g$} (21);
\path (21) edge (32);
\path (22) edge (31);
\path (31) edge (42);
\path (32) edge (41);
\path (32) edge[draw=none] node[right] {$\hspace{0.5em}+\, \Oh(\eps^k)$} (42);
\path (31) edge node[left, xshift=-1pt] {$f$} (41);
\end{tikzpicture}
\end{minipage}
\caption{Addition construction for \protect \MakeUppercase {L}emma\nobreakspace \ref {lem:addition}}\label{fig:addition}
\end{figure}
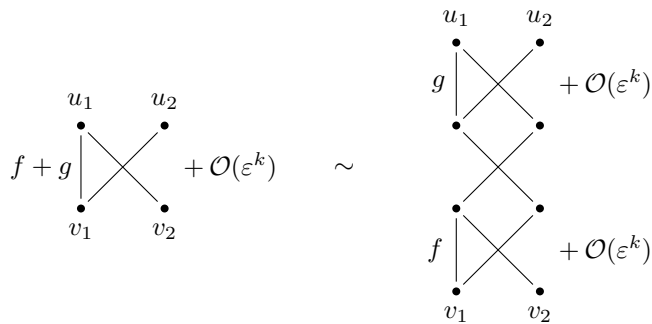

\begin{lemma}[Squaring]\label{lem:squaring}
Let $f\in \FF[\xvars]$ be a polynomial such that some $F \in Q(f) + \Oh(\eps^3)$ can be written as the product of $n$ primitive Q-matrices. Then some matrix $H \in Q(f^2) + \Oh(\eps)$ and some matrix $H' \in Q(-f^2) + \Oh(\eps)$ can be written as the product of $2n + 11 $ primitive Q-matrices. Moreover, if the error degree in $F$ is $e_f$ then the error degree of $H$ and $H'$ is at most $2\cdot e_f + 4$. 
\end{lemma}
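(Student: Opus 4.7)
The plan is to build a constant-size ``squaring gadget'' -- a fixed product of primitive Q-matrices wrapping two copies of $F$ -- whose total value lies in $Q(f^{2})+\Oh(\eps)$. I aim for a product of the shape
\[
H \;=\; M_{1}\, F\, M_{2}\, F\, M_{3},
\]
where each $M_{i}$ is a product of primitive Q-matrices with labels in $\FF(\eps)$ (not involving the variables $\xvars$), with total length $|M_{1}|+|M_{2}|+|M_{3}|=11$. The matrix count is then immediate: two copies of $F$ contribute $2n$ primitive Q-matrices and the three gadget pieces contribute $11$, giving $2n+11$. The basic building blocks for the $M_{i}$'s are two easy identities: the ``addition'' move $Q(a)\,Q(0)\,Q(b) = Q(a+b)$ already exploited in Lemma~\ref{lem:addition}, and the ``scaling sandwich'' $Q(\eps)\,Q(-1/\eps)\,Q(\eps) = \diag(\eps,-1/\eps)$ (together with its sign variant $Q(-\eps)\,Q(1/\eps)\,Q(-\eps) = \diag(-\eps,1/\eps)$), each realising a diagonal rescaling at the cost of three primitive Q-matrices.

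The substantive step is to find an algebraic identity
\[
M_{1}\,Q(f)\,M_{2}\,Q(f)\,M_{3} \;=\; Q(f^{2}) + \Oh(\eps)
\]
(over $\FF(\eps)[\xvars]$), i.e.\ that the gadget is already correct when $F$ is replaced by its leading approximation $Q(f)$. The strategy is to conjugate each $Q(f)$ by appropriately chosen scaling sandwiches so that the $f$-dependent off-template entries of $Q(f)\,Q(f) = \bigl(\begin{smallmatrix}f^{2}+1 & f \\ f & 1\end{smallmatrix}\bigr)$ get rescaled into $1+\Oh(\eps)$ in the $(1,2),(2,1)$ positions and $\Oh(\eps)$ in the $(2,2)$ position, while the spurious $+1$ in the $(1,1)$ entry is absorbed by a $Q(0)$-addition of a compensating constant. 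The 11-matrix budget splits naturally into two scaling sandwiches (six matrices) plus five further primitives used for $Q(0)$-additions, parity corrections, and fine-tuning of the leading $f^{2}$ coefficient. The variant identity for $-f^{2}$ (yielding $H'$) is obtained by swapping one of the two sandwiches for its $-\eps$-twin; this flips the sign of the $f^{2}$ coefficient while leaving the rest of the gadget unchanged, so the size $2n+11$ is preserved.

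Once the identity is in hand, substituting $F=Q(f)+E$ with $E=\Oh(\eps^{3})$ and expanding shows that $H$ differs from $M_{1}Q(f)M_{2}Q(f)M_{3}$ by a sum of three terms, each containing at least one factor of $E$. Since the inverse powers of $\eps$ inside the two scaling sandwiches contribute at most $\Oh(1/\eps^{2})$ overall, the total propagated error is $\Oh(\eps^{3})\cdot\Oh(1/\eps^{2}) = \Oh(\eps)$, matching the required bound $H\in Q(f^{2})+\Oh(\eps)$ (and likewise $H'\in Q(-f^{2})+\Oh(\eps)$). The same calculation tracks the error degree: each copy of $F$ contributes at most $e_{f}$, and the sandwiches together contribute at most $4$ further powers of $\eps$, for a total of $2e_{f}+4$.

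The main obstacle is the identity in the second paragraph: the stringent $Q$-template $\bigl(\begin{smallmatrix}\ast & 1 \\ 1 & 0\end{smallmatrix}\bigr)$ must be respected in all four entries simultaneously (modulo $\Oh(\eps)$), using only $\xvars$-free glue. No single matrix with constant entries in $\FF(\eps)$ can turn the three $f$-dependent entries of $Q(f)^{2}$ into $f$-independent ones on the nose, and the delicate combinatorial trick is that contributions from the two copies of $F$, living at the ``opposite scales'' $\eps$ and $1/\eps$ set by the two sandwiches, cancel all off-template $f$-dependent terms up to $\Oh(\eps)$ while reinforcing constructively in the $(1,1)$ entry to produce the desired $f^{2}$.
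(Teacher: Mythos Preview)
Your overall architecture is exactly the paper's: the gadget has the shape $M_1\,F\,M_2\,F\,M_3$ with eleven variable-free primitives, the error propagation is $\Oh(\eps^3)\cdot\Oh(\eps^{-2})=\Oh(\eps)$, and the degree count $2e_f+4$ is obtained for the same reason you give. So the plan is right.

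What is missing is precisely the step you yourself flag as ``the main obstacle'': you never exhibit $M_1,M_2,M_3$, and the entire content of this lemma is that such matrices exist. The paper's proof is nothing more than writing them down and checking. Concretely, it takes
\[
A=\begin{pmatrix}-\eps^{-1}&0\\0&\eps\end{pmatrix},\qquad
B=\begin{pmatrix}\eps^{2}&1\\-1&0\end{pmatrix},\qquad
C=\begin{pmatrix}\eps^{-1}&0\\0&\eps\end{pmatrix},
\]
realised respectively as $Q(-\eps^{-1})Q(\eps)Q(-\eps^{-1})$, as $Q(1)Q(-1)Q(1)Q(\eps^{2})$, and as $Q(-\eps^{-1})Q(\eps-1)Q(1)Q(\eps^{-1}-1)$, and verifies directly that $A\,Q(f)\,B\,Q(f)\,C\in Q(-f^{2})+\Oh(\eps)$; replacing $Q(\eps^{2})$ by $Q(-\eps^{2})$ in $B$ flips the sign to give $Q(f^{2})+\Oh(\eps)$. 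Note that the split is $3+4+4$, not your proposed $6+5$: only one of the two outer diagonals can be a three-matrix sandwich, because $\det Q(\cdot)=-1$ forces any product of three primitive $Q$-matrices to have determinant $-1$, whereas $C=\diag(\eps^{-1},\eps)$ has determinant $+1$ and so genuinely needs four. Your intuition about the mechanism (two opposite-scale rescalings making the off-template $f$-terms cancel) is correct, but until the matrices are on the page the lemma is not proved.
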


\begin{proof}
We set
\begin{align*}
A &\coloneqq \biggl(\begin{matrix} -\eps^{-1} & 0 \\ 0 & \eps \end{matrix}\biggr) = Q(-\eps^{-1}) \cdot Q(\eps)\cdot Q(-\eps^{-1}),\\
B &\coloneqq \biggl(\begin{matrix} \eps^{2} & 1 \\ -1 & 0 \end{matrix}\biggr) = Q(1) \cdot Q(-1)\cdot Q(1) \cdot Q(\eps^{2}),\\
C &\coloneqq \biggl(\begin{matrix} \eps^{-1} & 0 \\ 0 & \eps \end{matrix}\biggr) = Q(-\eps^{-1}) \cdot Q(\eps - 1)\cdot Q(1) \cdot Q(\eps^{-1} -1).
\intertext{Then one can check that
\[
H \ \coloneqq \ A\cdot F \cdot B \cdot F \cdot C \ \in \ A \cdot (Q(f) + \Oh(\eps^3)) \cdot B \cdot (Q(f) + \Oh(\eps^3)) \cdot C \ \in \ Q(-f^2) + \Oh(\eps).
\]
To obtain $H' \in Q(f^2) + \Oh(\eps)$, we replace $B$ by}
B' &\coloneqq \biggl(\begin{matrix} -\eps^{2} & 1 \\ -1 & 0 \end{matrix}\biggr) = Q(1) \cdot Q(-1)\cdot Q(1) \cdot Q(-\eps^{2}).
\end{align*}
One checks that the highest power of $\eps$ appearing in $H$ and $H'$ is at most $2\cdot e_f + 4$. See Fig.\nobreakspace \ref {fig:squaring} and Fig.\nobreakspace \ref {fig:squaringsub} for a pictorial description.
\end{proof}

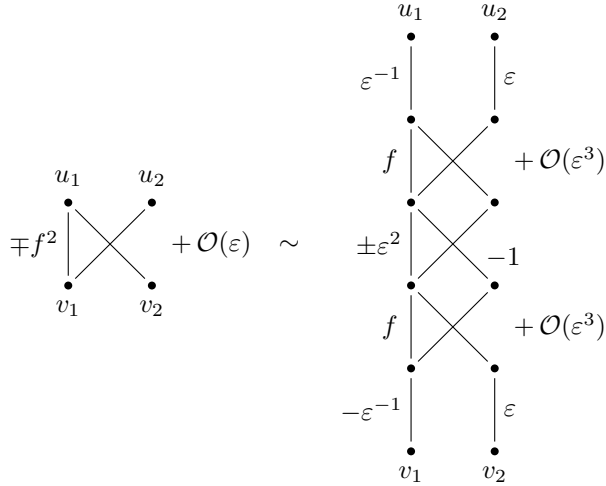
\begin{figure}[h!]
\centering
\begin{minipage}{7em}
\begin {tikzpicture}
\node[state] (11) [label={$u_1$}]{};
\node[state] (12) [right=of 11, label={$u_2$}] {};
\node[state] (21) [below=of 11, label={-90:$v_1$}] {};
\node[state] (22) [below=of 12, label={-90:$v_2$}] {};
\path (11) edge (22);
\path (12) edge (21);
\path (11) edge node[left, xshift=-1pt] {$\mp f^2$} (21);
\path (12) edge[draw=none] node[right] {$\hspace{0.5em}+\, \Oh(\eps)$} (22);
\end{tikzpicture}
\end{minipage}
\hspace{3em}$\sim$\hspace{1em}
\begin{minipage}{10em}
\begin {tikzpicture}
\node[state] (01) [label={$u_1$}]{};
\node[state] (02) [right=of 01, label={$u_2$}] {};
\node[state] (11) [below=of 01]{};
\node[state] (12) [right=of 11] {};
\node[state] (21) [below=of 11] {};
\node[state] (22) [below=of 12] {};
\node[state] (31) [below=of 21] {};
\node[state] (32) [below=of 22] {};
\node[state] (41) [below=of 31] {};
\node[state] (42) [below=of 32] {};
\node[state] (51) [below=of 41, label={-90:$v_1$}] {};
\node[state] (52) [below=of 42, label={-90:$v_2$}] {};
\path (01) edge node[left, xshift=-1pt] {$\eps^{-1}$} (11);
\path (02) edge node[right, xshift=1pt] {$\eps$} (12);
\path (11) edge (22);
\path (12) edge (21);
\path (12) edge[draw=none] node[right] {$\hspace{0.5em}+\, \Oh(\eps^3)$} (22);
\path (11) edge node[left, xshift=-1pt] {$f$} (21);
\path (21) edge node[right, xshift=10pt, yshift=-5pt] {$-1$} (32);
\path (22) edge (31);
\path (21) edge node[left, xshift=-1pt] {$\pm\eps^2$} (31);
\path (31) edge (42);
\path (32) edge (41);
\path (32) edge[draw=none] node[right] {$\hspace{0.5em}+\, \Oh(\eps^3)$} (42);
\path (31) edge node[left, xshift=-1pt] {$f$} (41);
\path (41) edge node[left, xshift=-1pt] {$-\eps^{-1}$} (51);
\path (42) edge node[right, xshift=1pt] {$\eps$} (52);
\end{tikzpicture}
\end{minipage}
\caption{Squaring construction for \protect \MakeUppercase {L}emma\nobreakspace \ref {lem:squaring}}\label{fig:squaring}
\end{figure}
\begin{figure}[h!]
\centering
\begin{minipage}{10em}
\begin {tikzpicture}
\node[state] (01) [label={$u_1$}]{};
\node[state] (02) [right=of 01, label={$u_2$}] {};
\node[state] (11) [below=of 01, label={-90:$v_1$}]{};
\node[state] (12) [right=of 11, label={-90:$v_2$}] {};
\node[state] (21) [below=of 11, yshift=-3em, label={$u_1$}] {};
\node[state] (22) [right=of 21, label={$u_2$}] {};
\node[state] (31) [below=of 21] {};
\node[state] (32) [below=of 22] {};
\node[state] (41) [below=of 31] {};
\node[state] (42) [below=of 32] {};
\node[state] (51) [below=of 41] {};
\node[state] (52) [below=of 42] {};
\node[state] (61) [below=of 51, label={-90:$v_1$}] {};
\node[state] (62) [below=of 52, label={-90:$v_2$}] {};
\path (01) edge node[left, xshift=-1pt] {$\eps^{-1}$} (11);
\path (02) edge node[right, xshift=1pt] {$\eps$} (12);
\path (12) edge[draw=none] node[rotate=90] {$\sim$} (21);
\path (21) edge (32);
\path (22) edge (31);
\path (21) edge node[left, xshift=-1pt] {$\eps^{-1} - 1$} (31);
\path (31) edge (42);
\path (32) edge (41);
\path (31) edge node[left, xshift=-1pt] {$1$} (41);
\path (41) edge (52);
\path (42) edge (51);
\path (41) edge node[left, xshift=-1pt] {$\eps-1$} (51);
\path (51) edge node[left, xshift=-1pt] {$-\eps^{-1}$} (61);
\path (51) edge (62);
\path (52) edge (61);
\end{tikzpicture}
\end{minipage}
\hspace{1em}
\begin{minipage}{10em}
\begin {tikzpicture}
\node[state] (01) [label={$u_1$}]{};
\node[state] (02) [right=of 01, label={$u_2$}] {};
\node[state] (11) [below=of 01, label={-90:$v_1$}]{};
\node[state] (12) [right=of 11, label={-90:$v_2$}] {};
\node[state] (21) [below=of 11, yshift=-3em, label={$u_1$}] {};
\node[state] (22) [right=of 21, label={$u_2$}] {};
\node[state] (31) [below=of 21] {};
\node[state] (32) [below=of 22] {};
\node[state] (41) [below=of 31] {};
\node[state] (42) [below=of 32] {};
\node[state] (51) [below=of 41] {};
\node[state] (52) [below=of 42] {};
\node[state] (61) [below=of 51, label={-90:$v_1$}] {};
\node[state] (62) [below=of 52, label={-90:$v_2$}] {};
\path (01) edge node[left, xshift=-1pt] {$\pm\eps^{2}$} (11);
\path (01) edge node[right, xshift=10pt, yshift=-5pt] {$-1$} (12);
\path (02) edge (11);
\path (12) edge[draw=none] node[rotate=90] {$\sim$} (21);
\path (21) edge (32);
\path (22) edge (31);
\path (21) edge node[left, xshift=-1pt] {$\pm\eps^2$} (31);
\path (31) edge (42);
\path (32) edge (41);
\path (31) edge node[left, xshift=-1pt] {$1$} (41);
\path (41) edge (52);
\path (42) edge (51);
\path (41) edge node[left, xshift=-1pt] {$-1$} (51);
\path (51) edge node[left, xshift=-1pt] {$1$} (61);
\path (51) edge (62);
\path (52) edge (61);
\end{tikzpicture}
\end{minipage}
\begin{minipage}{10em}
\begin {tikzpicture}
\node[state] (01) [label={$u_1$}]{};
\node[state] (02) [right=of 01, label={$u_2$}] {};
\node[state] (11) [below=of 01, label={-90:$v_1$}]{};
\node[state] (12) [right=of 11, label={-90:$v_2$}] {};
\node[state] (21) [below=of 11, yshift=-3em, label={$u_1$}] {};
\node[state] (22) [right=of 21, label={$u_2$}] {};
\node[state] (31) [below=of 21] {};
\node[state] (32) [below=of 22] {};
\node[state] (41) [below=of 31] {};
\node[state] (42) [below=of 32] {};
\node[state] (51) [below=of 41, label={-90:$v_1$}] {};
\node[state] (52) [below=of 42, label={-90:$v_2$}] {};
\path (01) edge node[left, xshift=-1pt] {$-\eps^{-1}$} (11);
\path (02) edge node[right, xshift=1pt] {$\eps$} (12);
\path (12) edge[draw=none] node[rotate=90] {$\sim$} (21);
\path (21) edge (32);
\path (22) edge (31);
\path (21) edge node[left, xshift=-1pt] {$-\eps^{-1}$} (31);
\path (31) edge (42);
\path (32) edge (41);
\path (31) edge node[left, xshift=-1pt] {$\eps$} (41);
\path (41) edge (52);
\path (42) edge (51);
\path (41) edge node[left, xshift=-1pt] {$-\eps^{-1}$} (51);
\end{tikzpicture}
\end{minipage}
\caption{Squaring construction subroutines for $C$, $B$, and $A$ for \protect \MakeUppercase {L}emma\nobreakspace \ref {lem:squaring}}\label{fig:squaringsub}
\end{figure}
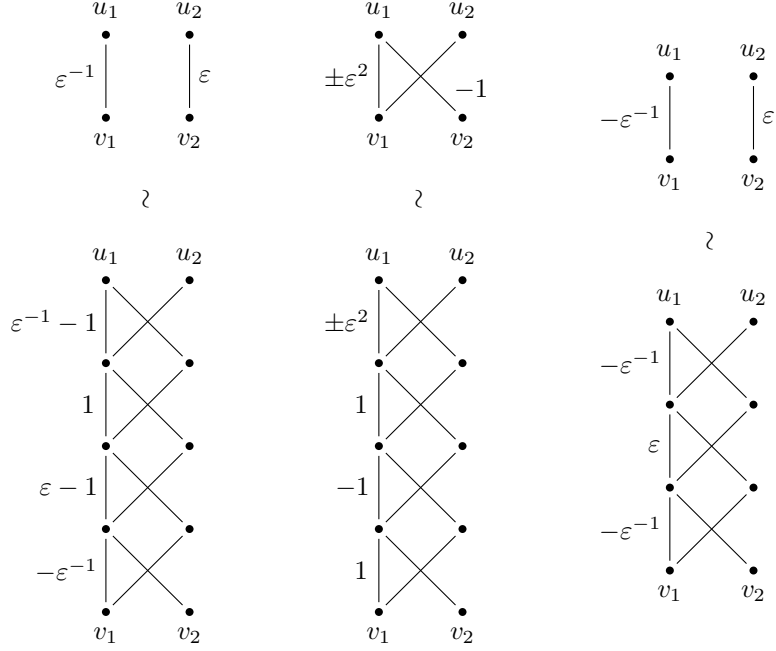

\begin{lemma}[Multiplication]\label{lem:mult}
Let $f,g \in \FF[\xvars]$ be polynomials such that some $F \in Q(f/2) + \Oh(\eps^3)$ and some $G \in Q(g) + \Oh(\eps^3)$ can be written as the product of $n$ and $m$ primitive Q-matrices respectively. Then some $H \in Q(f\cdot g) + \Oh(\eps)$ can be written as the product of $4n + 4m + 37$ primitive Q-matrices. Moreover, if the error degrees in $F$, $G$ are $e_f$, $e_g$, respectively, then the error degree of $H$ is at most $4\cdot e_f + 4\cdot e_g + 12$.
\end{lemma}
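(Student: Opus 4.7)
The plan is to mimic a polarization identity inside the $Q(\cdot)$ calculus developed in \protect \MakeUppercase {L}emmas\nobreakspace \ref {lem:addition} and\nobreakspace  \ref {lem:squaring}. Writing things out, $(f/2+g)^2 = (f/2)^2 + fg + g^2$, so
\[
f\cdot g \ = \ (f/2 + g)^2 \ - \ (f/2)^2 \ - \ g^2,
\]
which expresses $fg$ using only one addition, one squaring, and two more squarings followed by a further addition. This is exactly the kind of identity the previous two lemmas are built to realize, and it uses the multiplicative factor $1/2$ to absorb the inconvenient constant $4$ in the standard polarization identity $(a+b)^2-(a-b)^2 = 4ab$. (This is where the hypothesis $\characteristic(\FF)\neq 2$ is used implicitly: $F$ represents $f/2$.) Note that Squaring allows us to realize either sign of the square, so we can freely combine $+(f/2+g)^2$ with $-(f/2)^2$ and $-g^2$.

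First I would apply \protect \MakeUppercase {L}emma\nobreakspace \ref {lem:addition} (with $k=3$) to $F$ and $G$ to obtain a product of $n+m+1$ primitive Q-matrices lying in $Q(f/2+g)+\Oh(\eps^3)$, with error degree $e_f+e_g$. Then \protect \MakeUppercase {L}emma\nobreakspace \ref {lem:squaring} applied to this product yields a product of $2(n+m+1)+11 = 2n+2m+13$ primitive Q-matrices in $Q((f/2+g)^2)+\Oh(\eps)$, with error degree $2(e_f+e_g)+4 = 2e_f+2e_g+4$. Separately, applying \protect \MakeUppercase {L}emma\nobreakspace \ref {lem:squaring} to $F$ produces a product of $2n+11$ primitive Q-matrices in $Q(-(f/2)^2)+\Oh(\eps)$ with error degree $2e_f+4$, and applied to $G$ produces a product of $2m+11$ primitive Q-matrices in $Q(-g^2)+\Oh(\eps)$ with error degree $2e_g+4$.

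Finally I would glue these three pieces together by two successive applications of \protect \MakeUppercase {L}emma\nobreakspace \ref {lem:addition} (this time with $k=1$, which is the error level delivered by Squaring), each costing exactly one additional $Q(0)$ matrix. The resulting product lies in
\[
Q\bigl((f/2+g)^2 - (f/2)^2 - g^2\bigr) + \Oh(\eps) \ = \ Q(fg) + \Oh(\eps),
\]
as required. The total matrix count is
\[
(2n+2m+13) + 1 + (2n+11) + 1 + (2m+11) \ = \ 4n+4m+37,
\]
matching the bound in the statement. For the error degree, the first combination adds $(f/2+g)^2$ (degree $2e_f+2e_g+4$) and $-(f/2)^2$ (degree $2e_f+4$), giving degree $4e_f+2e_g+8$; combining with $-g^2$ (degree $2e_g+4$) then yields degree $4e_f+4e_g+12$, as claimed.

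There is no conceptual obstacle; the work is bookkeeping. The one point deserving care is that \protect \MakeUppercase {L}emma\nobreakspace \ref {lem:squaring} requires its input to be of the form $Q(\cdot)+\Oh(\eps^3)$, which is exactly why Addition is invoked with $k=3$ at the start and why the Squaring lemma is stated at that precision; dually, after squaring the outputs are only in $Q(\cdot)+\Oh(\eps)$, but that is precisely the form needed for the final pair of additions. A picture analogous to Fig.\nobreakspace \ref {fig:squaring} could be included to make the concatenation transparent, but the counts above already suffice.
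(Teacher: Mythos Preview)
Your proof is correct and follows essentially the same approach as the paper: the same polarization identity $fg = (f/2+g)^2 - (f/2)^2 - g^2$, the same sequence of one Addition (at precision $\Oh(\eps^3)$), three Squarings, and two final Additions (at precision $\Oh(\eps)$), and the same bookkeeping for both the matrix count and the error degree.
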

\begin{proof}
We make use of the identity $( -(f/2)^2) + (-g^2) + (f/2 + g)^2 = f\cdot g$. By the addition lemma (\protect \MakeUppercase {L}emma\nobreakspace \ref {lem:addition}), $(f/2 + g) + \Oh(\eps^3)$ can be written as the product of $n + m + 1$ primitive Q-matrices with error degree at most $e_f + e_g$. By the squaring lemma (\protect \MakeUppercase {L}emma\nobreakspace \ref {lem:squaring}), $Q(-(f/2)^2) + \Oh(\eps)$, $Q(-g^2) + \Oh(\eps)$, and $Q((f/2 + g)^2) + \Oh(\eps)$ can be written as the product of $2n + 11$, $2m + 11$, and $2(n + m + 1) + 11$ primitive Q-matrices, respectively. The corresponding error degrees are at most $2\cdot e_f + 4$, $2\cdot e_g + 4$, and $2(e_f + e_g) + 4$. Finally, by the addition lemma again, $Q(f\cdot g)+\Oh(\eps) = Q( -(f/2)^2 + (-g^2) + (f/2 + g)^2) + \Oh(\eps)$ can be written as the product of  $(2n + 11) + 1 + (2m + 11) + 1 + (2(n + m + 1) + 11) = 4n + 4m + 37$ primitive Q-matrices. The corresponding error degree is at most $(2\cdot e_f + 4) + (2\cdot e_g + 4) + (2 (e_f + e_g) + 4) = 4\cdot e_f + 4 \cdot e_g + 12$.
 See Fig.\nobreakspace \ref {fig:mult} for a pictorial description.
\end{proof}

\begin{figure}[h!]
\centering
\begin{minipage}{7em}
\begin {tikzpicture}
\node[state] (11) [label={$u_1$}]{};
\node[state] (12) [right=of 11, label={$u_2$}] {};
\node[state] (21) [below=of 11, label={-90:$v_1$}] {};
\node[state] (22) [below=of 12, label={-90:$v_2$}] {};
\path (11) edge (22);
\path (12) edge (21);
\path (11) edge node[left, xshift=-1pt] {$f \cdot  g$} (21);
\path (12) edge[draw=none] node[right] {$\hspace{0.5em}+\, \Oh(\eps)$} (22);
\end{tikzpicture}
\end{minipage}
\hspace{4em}$\sim$\hspace{1em}
\begin{minipage}{10em}
\begin {tikzpicture}
\node[state] (11) [label={$u_1$}]{};
\node[state] (12) [right=of 11, label={$u_2$}] {};
\node[state] (21) [below=of 11] {};
\node[state] (22) [below=of 12] {};
\node[state] (31) [below=of 21] {};
\node[state] (32) [below=of 22] {};
\node[state] (41) [below=of 31] {};
\node[state] (42) [below=of 32] {};
\node[state] (51) [below=of 41] {};
\node[state] (52) [below=of 42] {};
\node[state] (61) [below=of 51, label={-90:$v_1$}] {};
\node[state] (62) [below=of 52, label={-90:$v_2$}] {};
\path (11) edge (22);
\path (12) edge (21);
\path (12) edge[draw=none] node[right] {$\hspace{0.5em}+\, \Oh(\eps)$} (22);
\path (11) edge node[left, xshift=-1pt] {$-(f/2)^2$} (21);
\path (21) edge (32);
\path (22) edge (31);
\path (31) edge (42);
\path (32) edge[draw=none] node[right] {$\hspace{0.5em}+\, \Oh(\eps)$} (42);
\path (31) edge node[left, xshift=-1pt] {$-g^2$} (41);
\path (32) edge (41);
\path (41) edge (52);
\path (42) edge (51);
\path (51) edge (62);
\path (52) edge (61);
\path (52) edge[draw=none] node[right] {$\hspace{0.5em}+\, \Oh(\eps)$} (62);
\path (51) edge node[left, xshift=-1pt] {$(f/2 + g)^2$} (61);
\end{tikzpicture}
\end{minipage}
\caption{Multiplication construction for \protect \MakeUppercase {L}emma\nobreakspace \ref {lem:mult}}\label{fig:mult}
\end{figure}
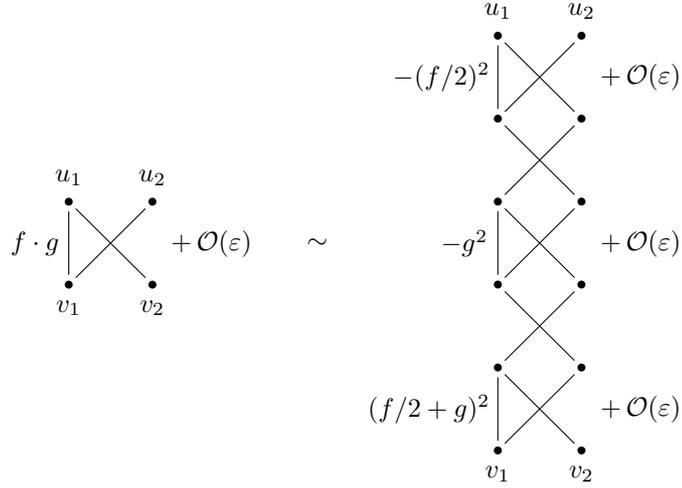

\begin{proposition}\label{prop:constructQ}
Let $f$ be a polynomial computed by a formula of depth $d$. For every constant $\alpha \in \FF$, some matrix in $F \in Q(\alpha f) + \Oh(\eps)$ can be written as a product of at most $45\cdot 9^d$ primitive Q-matrices. Moreover, $F$ has error degree at most $12\cdot 25^d$.
\end{proposition}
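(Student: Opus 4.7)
The plan is to proceed by induction on the depth $d$ of the formula, using the Addition and Multiplication lemmas (\protect \MakeUppercase {L}emma\nobreakspace \ref {lem:addition} and \protect \MakeUppercase {L}emma\nobreakspace \ref {lem:mult}) in the inductive step. The statement is phrased with the extra scalar $\alpha$ precisely so that the induction can pass rescalings down to subformulas (for the ``divide by $2$'' step demanded by the Multiplication Lemma).

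In the base case $d=0$, the formula is a single leaf, so $f$ is a variable or a constant. Then $\alpha f$ is already a parametrized affine linear form, so $Q(\alpha f)$ is itself a primitive Q-matrix --- a product of length $1 \leq 45\cdot 9^0$ with error degree $0 \leq 12\cdot 25^0$.

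For the inductive step, write $f = g \star h$ with $\star \in \{+,\times\}$ and with $g, h$ computed by formulas of depth at most $d-1$. In the additive case, I would apply the induction hypothesis to $g$ and $h$ (both with constant $\alpha$) and feed the two resulting matrices, each a product of at most $45\cdot 9^{d-1}$ primitive Q-matrices with error degree at most $12\cdot 25^{d-1}$, into \protect \MakeUppercase {L}emma\nobreakspace \ref {lem:addition} with $k=1$. This uses at most $2\cdot 45\cdot 9^{d-1}+1 \leq 45\cdot 9^d$ primitive Q-matrices and yields error degree at most $2\cdot 12\cdot 25^{d-1} \leq 12\cdot 25^d$, as required.

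The multiplicative case is the only place where a small trick is needed, and it is where I expect the main (and essentially only) obstacle to lie. To compute $\alpha f = (\alpha g)\cdot h$ via \protect \MakeUppercase {L}emma\nobreakspace \ref {lem:mult}, we need matrices in $Q(\alpha g/2) + \Oh(\eps^3)$ and $Q(h) + \Oh(\eps^3)$, but the induction hypothesis only produces $\Oh(\eps)$-approximations. I would resolve this $\Oh(\eps)$ versus $\Oh(\eps^3)$ gap by the standard substitution $\eps \mapsto \eps^3$ applied to the ABPs furnished by the induction hypothesis (invoked with constant $\alpha/2$ on $g$ and with constant $1$ on $h$). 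This substitution leaves the number of primitive Q-matrices unchanged while multiplying the error degree by $3$, giving products of at most $45\cdot 9^{d-1}$ primitive Q-matrices with error degree at most $3\cdot 12\cdot 25^{d-1} = 36\cdot 25^{d-1}$. Applying \protect \MakeUppercase {L}emma\nobreakspace \ref {lem:mult} then produces a matrix in $Q(\alpha f) + \Oh(\eps)$ as a product of at most $4\cdot 45\cdot 9^{d-1} + 4\cdot 45\cdot 9^{d-1} + 37 = 360\cdot 9^{d-1} + 37 \leq 405\cdot 9^{d-1} = 45\cdot 9^d$ primitive Q-matrices, with error degree at most $8\cdot 36\cdot 25^{d-1} + 12 = 288\cdot 25^{d-1} + 12 \leq 300\cdot 25^{d-1} = 12\cdot 25^d$. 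The constants $45,9,12,25$ in the statement appear to have been chosen precisely so that both inequalities close at $d=1$ and then propagate upward; once the substitution trick is in hand, the rest is bookkeeping.
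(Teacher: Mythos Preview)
Your proof is correct and the inductive step is identical to the paper's, including the $\eps\mapsto\eps^3$ substitution trick and the arithmetic that closes the constants. The only difference is in the base case: you observe that for a leaf $f$ the polynomial $\alpha f$ is itself an affine linear form, so $Q(\alpha f)$ is already a single primitive Q-matrix; the paper instead writes $Q(f)$ and $Q(\alpha/2)$ separately (each a primitive Q-matrix with label a bare variable or constant) and then applies the Multiplication Lemma to combine them into $45$ matrices. Your base case is simpler and perfectly valid for the proposition as stated, but the paper's version buys something extra: every primitive Q-matrix produced has label a single variable or a constant from $\FF(\eps)$, not a general affine linear form. This stronger conclusion is what the paper later exploits in Remark~\ref{moreprecise} to get $\VPe \subseteq \polyapproxbar{\VPk{2}^\weakest}$ and the completeness of the Fibonacci polynomial under p-degenerations (where the $\ell_i$ are variables or constants).
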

\begin{proof}
The proof is by induction on $d$. For $d = 0$, that is, $f$ is a constant $\beta \in \FF$ or a variable~$x$, note that $Q(f)$ can be written directly as a primitive Q-matrix (with error degree~0). Since also $Q(\alpha/2)$ can be written directly (also with error degree 0), we can use the multiplication lemma (\protect \MakeUppercase {L}emma\nobreakspace \ref {lem:mult}), to write $Q(\alpha f) + \Oh(\eps)$ as a product of $4 + 4 + 37 = 45$ primitive Q-matrices (with error degree at most 12).

For $d\geq 1$, fix a constant $\alpha$. We know that either $f = g + h$ or $f = g\cdot h$ with formulas $g,h$ of depth $< d$. By the induction hypothesis, for any constant $\beta, \gamma$, we can write $Q(\beta g) + \Oh(\eps)$ and $Q(\gamma h) + \Oh(\eps)$ as a product of $n_g, n_h \leq 45\cdot 9^{d-1}$ primitive Q-matrices, with error degrees $e_g, e_h \leq 12\cdot 25^{d-1}$.

\emph{Case $f = g+ h$.} We set $\beta = \gamma = \alpha$ and use the addition lemma (\protect \MakeUppercase {L}emma\nobreakspace \ref {lem:addition}) to obtain $Q(\alpha f) + \Oh(\eps) = Q(\alpha g + \alpha h) + \Oh(\eps)$ as a product of $n_g + n_h + 1 \leq 2 \cdot 45 \cdot 9^{d-1} + 1 \leq 45 \cdot 9^d$ primitive Q-matrices, with error degree at most $e_g + e_h \leq 2\cdot 12 \cdot 25^{d-1} \leq 12 \cdot 25^d$.

\emph{Case $f = g\cdot h$.} By replacing $\eps$ by $\eps^3$ in all primitive Q-matrices, we obtain matrices in $Q(\beta g) + \Oh(\eps^3)$ and $Q(\gamma h) + \Oh(\eps^3)$ as a product of $n_g$ and $n_h$ primitive Q-matrices with error degree at most $3\cdot e_g$ and $3\cdot e_h$ respectively.
Now we set $\beta = \alpha/2$ and $\gamma = 1$ and use the multiplication lemma (\protect \MakeUppercase {L}emma\nobreakspace \ref {lem:mult}) to obtain $Q(\alpha f) + \Oh(\eps) = Q((\alpha\cdot g)\cdot h) + \Oh(\eps)$ as a product of $4 n_g + 4 n_h + 37 \leq 8 \cdot 45\cdot 9^{d-1} + 37 \leq 45 \cdot 9^d$ primitive Q-matrices. The error degree is at most $4 ( 3\cdot e_g) + 4(3\cdot e_h) + 12 = 12(e_g + e_h + 1) \leq 24 \cdot 12 \cdot 25^{d-1} + 12 \leq 12 \cdot 25^d$.
\end{proof}

\begin{proposition}\label{pro:brent}
If $(f_n) \in \VPe$, then for each $n$ 
a matrix in $F \in Q(f_n) + \Oh(\eps)$ can be written as a product of $\poly(n)$ many primitive Q-matrices. Moreover, $F$ has error degree at most $\poly(n)$.
\end{proposition}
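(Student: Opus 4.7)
The plan is to reduce the statement to Proposition \ref{prop:constructQ} by first balancing the formula. Proposition \ref{prop:constructQ} gives bounds that are exponential in the formula \emph{depth} ($45 \cdot 9^d$ primitive Q-matrices with error degree at most $12 \cdot 25^d$), so applying it directly to an arbitrary size-$s$ formula is useless when $s$ is only polynomially bounded but $d$ might be linear. The natural way around this is to invoke the classical depth-reduction theorem for arithmetic formulas: any formula of size $s$ computes a polynomial that also admits an equivalent formula of depth $\Oh(\log s)$ (this is Brent's theorem, which is what the label \texttt{pro:brent} refers to, and it works over any field since the argument is purely a structural rebalancing of the formula tree).

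Given this, the proof would proceed as follows. Since $(f_n) \in \VPe$, the polynomial $f_n$ is computed by some formula of size $s(n) \in \poly(n)$. By Brent's theorem, $f_n$ is also computed by a formula $\Phi_n$ of depth $d(n) = \Oh(\log s(n)) = \Oh(\log n)$. Apply Proposition \ref{prop:constructQ} to $\Phi_n$ with $\alpha = 1$: we obtain a matrix $F \in Q(f_n) + \Oh(\eps)$ expressible as a product of at most $45 \cdot 9^{d(n)}$ primitive Q-matrices, with error degree at most $12 \cdot 25^{d(n)}$. Since $9^{\Oh(\log n)}$ and $25^{\Oh(\log n)}$ are both polynomial in $n$, both quantities are in $\poly(n)$, which is exactly the desired conclusion.

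There is essentially no real obstacle here; the whole content lies in Proposition \ref{prop:constructQ} combined with the well-known depth-balancing result. The one thing to double-check is that the statement of Brent's theorem one cites is for fan-in-2 formulas over an arbitrary field and that it preserves the polynomial computed exactly (not only up to approximation), which is the standard formulation, so no additional care is needed to combine it with our approximate construction. I would therefore present this as a short one-paragraph proof: state the size bound, invoke Brent, plug the resulting depth into Proposition \ref{prop:constructQ}, and observe that $9^{\Oh(\log n)}, 25^{\Oh(\log n)} \in \poly(n)$.
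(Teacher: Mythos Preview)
Your proposal is correct and follows essentially the same approach as the paper: invoke Brent's depth-reduction theorem to obtain a formula of depth $\Oh(\log n)$, then apply Proposition~\ref{prop:constructQ} so that the $9^{d}$ and $25^{d}$ bounds become polynomial in $n$.
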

\begin{proof}
The construction uses the classical depth-reduction theorem for formulas by Brent~\cite{brent1974parallel}, for which a modern proof can be found in the survey of Saptharishi \cite[Lemma~5.5]{ramprasad}:
If a family $(f_n)$ has polynomially bounded formula size, then
there are formulas computing~$f_n$ that have size $\poly(n)$ and depth $\Oh(\log n)$.
Applying Proposition\nobreakspace \ref {prop:constructQ} now yields the result.
\end{proof}

\begin{proof}[\upshape\bfseries Proof of \protect \MakeUppercase {T}heorem\nobreakspace \ref {approxvpe}]
This follows directly from Proposition\nobreakspace \ref {pro:brent}. Namely, let $(f_n) \in \VP_e$.
By Proposition\nobreakspace \ref {pro:brent} there is an $F \in Q(f_n)+ \Oh(\eps)$ which is a product of polynomially many primitive Q-matrices such that $F$ has polynomially bounded error degree.
The width-2 ABP computing $f_n+\Oh(\eps)$ is given by
$\begin{psmallmatrix}1 & 0\end{psmallmatrix} F \begin{psmallmatrix}1 \\ 0\end{psmallmatrix}$.
\end{proof}

\begin{example}
Following the construction in \protect \MakeUppercase {T}heorem\nobreakspace \ref {approxvpe} we get the following ABP for approximating the polynomial $x_1 x_2 + x_3 x_4 + \cdots + x_{15} x_{16}$, which cannot be computed by any width-2 ABP.
Let 
\begin{multline*}
F(x_1, x_2) = \begin{pmatrix} \frac{1}{\eps}-\frac{\eps x_1}{2} & -\frac{x_1}{2 \eps} \\
 \eps^3 & \eps \\
\end{pmatrix}
\begin{pmatrix} \frac{1}{2} (x_1-2 x_2) \eps^2+1 & \frac{1}{2} (x_1-2 x_2) \\
 \eps^2 & 1 \\
\end{pmatrix}\\
\cdot
\begin{pmatrix} \frac{x_1 \eps^2}{2}+1 & -\frac{x_1}{2} \\
 -\eps^2 & 1 \\
\end{pmatrix}\begin{pmatrix} \frac{x_1+2 x_2}{2 \eps} & \eps \\
 \eps^{-1} & 0 \\
\end{pmatrix}.
\end{multline*}
Then 
\[
F(x_1,x_2) = \begin{pmatrix} x_1 x_2 & 1 \\ 1 & 0 \end{pmatrix} + \Oh(\eps).
\]
Using the addition lemma \protect \MakeUppercase {L}emma\nobreakspace \ref {lem:addition} we get 
\[
\begin{psmallmatrix}1 & 0\end{psmallmatrix} F(x_1,x_2)  \begin{psmallmatrix}0 & 1\\ 1 & 0\end{psmallmatrix} F(x_3,x_4) \cdots  \begin{psmallmatrix}0 & 1\\ 1 & 0\end{psmallmatrix} F(x_{15},x_{16}) \begin{psmallmatrix}1 \\ 0\end{psmallmatrix} = x_1 x_2 + x_3x_4 + \cdots + x_{15} x_{16} + \Oh(\eps).
\]
\end{example}

\begin{corollary}\label{approxcor}
$\approxbar{\VPk{2}} = \approxbar{\VPe}$ and $\polyapproxbar{\VPk{2}} = \polyapproxbar{\VPe}$ when $\characteristic(\FF)\neq2$.
\end{corollary}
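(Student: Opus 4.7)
The plan is to combine Theorem \ref{approxvpe} with the elementary inclusion $\VPk{2}\subseteq\VPe$ (Proposition \ref{prop:vpkinvpe}) by applying the monotone approximation-closure operators and then collapsing the resulting iterated closures. Applied to $\VPk{2}\subseteq\VPe$, monotonicity gives at once $\approxbar{\VPk{2}}\subseteq\approxbar{\VPe}$ and $\polyapproxbar{\VPk{2}}\subseteq\polyapproxbar{\VPe}$, which handles one direction of each equality. Applied to the inclusion $\VPe\subseteq\polyapproxbar{\VPk{2}}$ from Theorem \ref{approxvpe}, monotonicity gives $\polyapproxbar{\VPe}\subseteq\polyapproxbar{\polyapproxbar{\VPk{2}}}$ and $\approxbar{\VPe}\subseteq\approxbar{\polyapproxbar{\VPk{2}}}$, so the remaining task is to show that the approximation closure is idempotent on $\VPk{2}$.

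To collapse the iterated closure, I would take a family $(f_n)\in\polyapproxbar{\polyapproxbar{\VPk{2}}}(\FF)$ and unfold the definition to obtain a polynomial-size width-2 ABP over $\FF(\eps,\delta)$ whose value has the form $f_n(\xvars) + \eps\cdot a_n(\eps,\xvars) + \delta\cdot b_n(\eps,\delta,\xvars)$, with both $\eps$- and $\delta$-degrees polynomially bounded. Substituting $\delta\mapsto\eps^N$ for a sufficiently large integer $N$ merges the two approximation parameters: each edge label is an affine linear form over $\FF(\eps,\delta)$ that becomes an affine linear form over $\FF(\eps)$ after substitution, so we obtain a polynomial-size width-2 ABP over $\FF(\eps)$, whose value has the form $f_n + \eps\cdot c_n(\eps,\xvars)$ with $c_n\in\FF[\eps,\xvars]$. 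This witnesses $(f_n)\in\polyapproxbar{\VPk{2}}$ provided $N$ (and hence the new error degree) stays polynomial in $n$. The same argument, without bookkeeping on degrees, establishes $\approxbar{\approxbar{\VPk{2}}}=\approxbar{\VPk{2}}$ and hence the non-poly equality.

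The main obstacle is the substitution $\delta\mapsto\eps^N$ itself: it must simultaneously clear every negative power of $\eps$ appearing in any edge label of the inner ABP, and for $N$ to remain polynomial in $n$ one must know that the largest such negative $\eps$-exponent is polynomially bounded. This forces the intermediate formula computing $g_n\in\VPe(\FF(\eps))$ to be chosen so that its constants lie in $\FF[\eps,\eps^{-1}]$ with polynomially bounded $\eps$-degree; that such a formula always exists is exactly the ``reasonable classes are closed under iterated approximation closure'' phenomenon pointed out after Definition \ref{barclass}. Once that technical point is in place, inspection of the construction in Theorem \ref{approxvpe} shows that the inner width-2 ABP introduces at most polynomially many negative $\eps$-exponents, so $N=\poly(n)$ suffices and the corollary follows.
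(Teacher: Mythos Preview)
Your approach is essentially the same as the paper's: both directions come from applying the closure operators to $\VPk{2}\subseteq\VPe$ and to $\VPe\subseteq\polyapproxbar{\VPk{2}}$, with the latter requiring idempotence of the closure on $\VPk{2}$. The paper simply invokes this idempotence as a black box (the ``reasonable classes'' remark after Definition~\ref{barclass}), whereas you spell out the standard $\delta\mapsto\eps^N$ argument and correctly flag the bookkeeping on $\eps$-exponents needed for the poly version.
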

\begin{proof}
The inclusion $\VPk{2} \subseteq \VPe$ is standard (see Proposition\nobreakspace \ref {prop:vpkinvpe}).
Taking closures on both sides, we obtain $\approxbar{\VPk{2}} \subseteq \approxbar{\VPe}$ and $\polyapproxbar{\VPk{2}} \subseteq \polyapproxbar{\VPe}$.

On the other hand, when $\characteristic(\FF) \neq 2$, we have the inclusion $\VPe \subseteq \polyapproxbar{\VPk{2}}$ (\protect \MakeUppercase {T}heorem\nobreakspace \ref {approxvpe}).
By taking closures this implies $\approxbar{\VPe} \subseteq \approxbar{\VPk{2}}$ and $\polyapproxbar{\VPe} \subseteq \polyapproxbar{\VPk{2}}$.
\end{proof}

\begin{corollary}\label{cor:interpolation}
$\polyapproxbar{\VPk{2}} = \VPe$ when $\characteristic(\FF) \neq 2$ and $\FF$ is infinite.
\end{corollary}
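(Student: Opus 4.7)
The inclusion $\VPe \subseteq \polyapproxbar{\VPk{2}}$ has already been established in Theorem~\ref{approxvpe}, so only the reverse inclusion $\polyapproxbar{\VPk{2}} \subseteq \VPe$ needs a proof. The plan is a standard Lagrange interpolation argument, using crucially that $\FF$ is infinite (characteristic $\neq 2$ is only needed for the already-established direction).

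Suppose $(f_n) \in \polyapproxbar{\VPk{2}}(\FF)$. By definition, there exist polynomials $f_{n;i} \in \FF[\xvars]$ and a polynomially bounded error-degree function $e(n)$ such that
\[
g_n(\xvars) \coloneqq f_n(\xvars) + \eps f_{n;1}(\xvars) + \cdots + \eps^{e(n)} f_{n;e(n)}(\xvars)
\]
lies in $\VPk{2}(\FF(\eps))$. Applying the standard inclusion $\VPk{2} \subseteq \VPe$ (\protect \MakeUppercase {P}roposition\nobreakspace \ref {prop:vpkinvpe}) over the field $\FF(\eps)$, we obtain formulas of size $s(n) \in \poly(n)$ computing $g_n$ whose leaves are labeled by affine linear forms with coefficients in $\FF(\eps)$, i.e.\ rational functions in $\eps$ over $\FF$. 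Let $D(\eps) \in \FF[\eps]$ be the product of the denominators of all these coefficients; its degree is bounded by the total bit-complexity of the formula and is in particular $\poly(n)$. Then for every $\alpha \in \FF$ with $D(\alpha) \neq 0$, specializing $\eps \mapsto \alpha$ in the formula yields a formula over $\FF$ of size $s(n)$ computing $g_n(\alpha, \xvars) \in \FF[\xvars]$.

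Since $\FF$ is infinite and $D$ has only finitely many roots, we may pick $e(n)+1$ distinct values $\alpha_0, \alpha_1, \ldots, \alpha_{e(n)} \in \FF$ avoiding these roots. Now $g_n$, viewed as an element of $\FF[\xvars][\eps]$, has $\eps$-degree at most $e(n)$, so by Lagrange interpolation there exist scalars $\lambda_0, \ldots, \lambda_{e(n)} \in \FF$ (depending only on $\alpha_0, \ldots, \alpha_{e(n)}$) such that
\[
f_n(\xvars) \;=\; g_n(0, \xvars) \;=\; \sum_{i=0}^{e(n)} \lambda_i \cdot g_n(\alpha_i, \xvars).
\]
Combining the $e(n)+1$ polynomial-size formulas for $g_n(\alpha_i, \xvars)$ into this linear combination (using $e(n)$ additions and $e(n)+1$ scalar multiplications at the root) produces a formula for $f_n$ of size at most $(e(n)+1)\cdot s(n) + \Oh(e(n))$, which is $\poly(n)$. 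Hence $(f_n) \in \VPe$.

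The only subtle step is ensuring that specialization of $\eps$ is valid, i.e.\ that one can avoid the poles of the rational edge labels; infinity of $\FF$ and a polynomial bound on the total denominator-degree handle this at once. The interpolation step itself is entirely routine, and the whole argument depends on neither the characteristic nor the width-$2$ structure beyond the already-established inclusion $\VPk{2} \subseteq \VPe$, so it will also immediately give $\polyapproxbar{\VPe} = \VPe$ under the same hypotheses.
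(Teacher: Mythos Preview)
Your argument is correct and follows essentially the same route as the paper: reduce to a poly-size formula over $\FF(\eps)$ (the paper does this via Corollary~\ref{approxcor}, you do it by applying Proposition~\ref{prop:vpkinvpe} over $\FF(\eps)$ directly, which amounts to the same thing), then recover $f_n$ by Lagrange interpolation at $e(n)+1$ substitution points for $\eps$. One small quibble: your claim that $\deg D(\eps)$ is bounded by the ``bit-complexity'' of the formula and hence is $\poly(n)$ is neither justified nor needed---all you use (and all the paper uses) is that $D$ is a nonzero univariate polynomial, so it has finitely many roots and an infinite $\FF$ lets you avoid them.
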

\begin{proof}
By \protect \MakeUppercase {C}orollary\nobreakspace \ref {approxcor} we have $\polyapproxbar{\VPk{2}} = \polyapproxbar{\VPe}$.
It remains to show the equality $\polyapproxbar{\VPe} = \VPe$. We give a proof of this via a standard interpolation argument in Section\nobreakspace \ref {sec:cor:interpolation}.
\end{proof}

As a consequence of Proposition\nobreakspace \ref {prop:constructQ}, we obtain a new description of $\approxbar{\VPe}$ as follows. We define the \defin{generalized Fibonacci polynomial} $F_n(x_1,\ldots, x_n)$ by
$F_0 \coloneqq 1$, $F_1 \coloneqq x_1$, and $F_n \coloneqq x_n F_{n-1} + F_{n-2}$ for all $n\geq 2$.
The name comes from the fact that $F_n(1, 1, \ldots, 1)$ is the $n$th Fibonacci number and $F_n(x, x, \ldots, x)$ is the $n$th Fibonacci polynomial. Another description of the polynomial $F_n$ is that it is the upper left entry of a product of Q-matrices~$Q(x_i)$, that is, $F_n(x_1, \ldots, x_n) = (Q(x_n) Q(x_{n-1}) \cdots Q(x_1))_{1,1}$.

\begin{definition}\label{def:fibcompl}
A polynomial $f$ is a \defin{projection} of $F_m$ if there exist affine linear forms $\ell_1, \ldots, \ell_m$ such that $f = F_m(\ell_1, \ldots, \ell_m)$. The smallest $m$ such that $f$ is a projection of $F_m$ we call the \defin{Fibonacci complexity of $f$}. A polynomial is a \defin{degeneration of $F_m$} if there exist parametrized affine linear forms $\ell_1(\eps), \ldots, \ell_m(\eps)$ such that $f = F_m(\ell_1(\eps), \ldots, \ell_m(\eps))$. The smallest~$m$ such that $f+ \Oh(\eps)$ is a degeneration of $F_m$ we call the \defin{border Fibonacci complexity} of $f$, and is denoted by $\approxcompl_\Fib(f)$.
A family $(h_n)$ of polynomials is called \emph{$\approxbar{\VPe}$-complete under p-degenerations} if $(h_n) \in \approxbar{\VPe}$ and for every $(f_n) \in \approxbar{\VPe}$ there exists a polynomially bounded function $t$ such that some polynomial in $f_n + \Oh(\eps)$ is a degeneration of $F_{t(n)}$.
\end{definition}

The Fibonacci complexity is not always finite (\cite{wang}), but Proposition\nobreakspace \ref {pro:brent} shows that the border Fibonacci complexity $\approxcompl_\Fib(f)$ is always finite
and that $\approxbar{\VPe}$ can be characterized as the class of families with polynomially bounded border Fibonacci complexity:
\begin{theorem}\label{thm:fibon}
$\approxbar{\VPe} = \{ (f_n) \mid \approxcompl_\Fib(f_n) \in \polyf(n)\}$.
\end{theorem}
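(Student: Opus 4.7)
The plan is to establish the equality by two inclusions. The easier inclusion $\{(f_n)\mid \approxcompl_\Fib(f_n) \in \polyf(n)\} \subseteq \approxbar{\VPe}$ follows from the fact that $F_m$ itself lies in $\VPe$. The description $F_m = (Q(x_m)\cdots Q(x_1))_{1,1}$ exhibits $F_m$ as the value of a width-2 ABP of size $O(m)$, so $F_m \in \VPk{2} \subseteq \VPe$ (Proposition~\ref{prop:vpkinvpe}) with formula size polynomial in $m$. Given $\approxcompl_\Fib(f_n) \leq p(n)$ for a polynomial $p$, write $f_n + \Oh(\eps) = F_{p(n)}(\ell_1(\eps), \ldots, \ell_{p(n)}(\eps))$ for parametrized affine linear forms $\ell_i(\eps)$, and let $g_n$ be the right-hand side. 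Substituting the $\ell_i(\eps)$ into the polynomial-size formula for $F_{p(n)}$ yields a polynomial-size formula over $\FF(\eps)$ for $g_n$, so $(g_n) \in \VPe(\FF(\eps))$, and the relation $g_n = f_n + \Oh(\eps)$ witnesses $(f_n) \in \approxbar{\VPe}$.

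For the reverse inclusion I would start with $(f_n) \in \approxbar{\VPe}$ and an approximating family $(g_n) \in \VPe(\FF(\eps))$ with $g_n = f_n + \eps f_{n;1} + \cdots + \eps^{e(n)} f_{n;e(n)}$. Proposition~\ref{pro:brent} (and its supporting results Lemmas~\ref{lem:addition}--\ref{lem:mult} and Proposition~\ref{prop:constructQ}) is field-agnostic, so applying it to $(g_n)$ over the base field $\FF(\eps)$ with a fresh approximation variable $\delta$ in place of $\eps$ produces a product of $m \in \polyf(n)$ primitive Q-matrices $Q(\alpha_1), \ldots, Q(\alpha_m)$, with each $\alpha_i$ either a variable or a constant from $\FF(\eps,\delta)$, whose $(1,1)$ entry equals $g_n + \Oh(\delta)$. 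By the characterization of $F_m$ as the upper-left entry of a product of Q-matrices, this rewrites as $F_m(\alpha_1, \ldots, \alpha_m) = g_n + \Oh(\delta)$.

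The final step is to collapse the two approximation parameters via the substitution $\delta \mapsto \eps$, which should produce $F_m(\tilde\alpha_1, \ldots, \tilde\alpha_m) = g_n + \Oh(\eps) = f_n + \Oh(\eps)$ with $\tilde\alpha_i \coloneqq \alpha_i|_{\delta=\eps}$ being parametrized affine linear forms over $\FF(\eps)$, hence $\approxcompl_\Fib(f_n) \leq m \in \polyf(n)$. The main difficulty is making this substitution rigorous: one has to verify that every primitive Q-matrix produced by the construction involves only constants that are Laurent polynomials in $\delta$ with coefficients in $\FF(\eps)$, so that no denominator vanishes at $\delta = \eps$. A direct inspection of the matrices $A, B, B', C$ appearing in Lemma~\ref{lem:squaring}, together with the remaining constants introduced in Lemmas~\ref{lem:addition}--\ref{lem:mult} and Proposition~\ref{prop:constructQ}, confirms this, completing the plan.
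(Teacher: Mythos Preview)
Your overall strategy matches the paper's: show the easy inclusion directly, then for the hard inclusion apply Proposition~\ref{pro:brent} over the extended field $\FF(\eps)$ and collapse the two approximation parameters into one. The paper phrases the collapsing step abstractly as ``the right-hand side is closed under the approximation closure'' and dismisses it with ``a moment's thought''; you make it concrete with the substitution $\delta \mapsto \eps$.

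However, that substitution has a gap, and it is not the one you flag. It is true that the constants in the primitive $Q$-matrices produced by the construction are Laurent polynomials in $\delta$ with coefficients in $\FF(\eps)$, so plugging in $\delta=\eps$ causes no division by zero. The problem is the conclusion $F_m(\tilde\alpha_1,\ldots,\tilde\alpha_m)=g_n+\Oh(\eps)$. Write the $(1,1)$-entry before substitution as $g_n+\delta h_1+\cdots+\delta^e h_e$. The error terms $h_i$ are polynomial expressions in the variables together with the \emph{formula constants} of $g_n$, and those constants are arbitrary elements of $\FF(\eps)$ that may have poles at $\eps=0$. Hence $h_i\in\FF(\eps)[\xvars]$, and after $\delta\mapsto\eps$ you only get $g_n+\eps h_1+\cdots\in f_n+\eps\,\FF(\eps)[\xvars]$, which need not lie in $f_n+\Oh(\eps)=f_n+\eps\,\FF[\eps,\xvars]$. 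The standard repair is to substitute $\delta\mapsto\eps^{N}$ with $N$ exceeding the maximal pole order (in $\eps$) appearing among the finitely many $h_i$; then each $\eps^{iN}h_i$ lands in $\eps\,\FF[\eps,\xvars]$ and the $\tilde\alpha_i$ are still parametrized affine linear forms over $\FF(\eps)$. With this correction your argument goes through and is precisely the content of the paper's ``moment's thought''.
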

\begin{proof}
Clearly the right-hand side is contained in the left-hand side.
$\VPe$ is contained in the right-hand side by Proposition\nobreakspace \ref {pro:brent}.
A moment's thought reveals that the right-hand side is closed under the approximation closure in the sense of \protect \MakeUppercase {D}efinition\nobreakspace \ref {barclass}.
Thus taking the closure on both sides yields the result.
\end{proof}

\protect \MakeUppercase {T}heorem\nobreakspace \ref {thm:fibon} says that $(F_n)$ is $\approxbar{\VPe}$-complete under p-degenerations. From the proof of Proposition\nobreakspace \ref {prop:constructQ} it follows that also $(F_{2n+1})$ is $\approxbar{\VPe}$-complete under p-degenerations, that is, we only need the $F_m$ with odd index~$m$ (this follows from $\det(Q(f))=-1)$.

\begin{remark}[Symmetry]
Define the polynomial 
$C_n(x_1, \ldots, x_n)$ as
\[
C_n(x_1, \ldots, x_n) \coloneqq \mathrm{trace}( Q(x_n) \cdot Q(x_{n-1}) \cdots Q(x_1)).
\]
Since the trace of a matrix product is invariant under cyclic shifts of the matrices, the polynomial $C_n(x_1, \ldots, x_n)$ is invariant under cyclic shifts of the variables $x_1, \ldots, x_n$. Thus~$C_n$ can be viewed as a cyclically symmetric version of $F_n$. (Note that $C_n$ and $F_n$ are also both invariant under reversing the order of the variables $x_1, \ldots, x_n$, that is, mapping $(x_1, \ldots, x_n)$ to $(x_n, \ldots, x_1)$.)

Define the \defin{border cyclic Fibonacci complexity} analogously to the border Fibonacci complexity by replacing $F_n$ by $C_n$ in \protect \MakeUppercase {D}efinition\nobreakspace \ref {def:fibcompl}.
Analogously to \protect \MakeUppercase {T}heorem\nobreakspace \ref {thm:fibon} we now see that the families $(C_n)$ and $(C_{2n+1})$ are both $\approxbar{\VPe}$-complete under p-degenerations.
\end{remark}

\begin{remark}[A closed form for $F_n$ and $C_n$]
We describe another way to write $F_n$ and~$C_n$.
An \emph{adjacent pair} is a set of two numbers $\{i,i+1\}$ with $1 \leq i < n$.
A \emph{supporting set} is the set $\{1,2,\ldots,n\}$ after removing a disjoint (possibly empty) union of adjacent pairs.
For a supporting set~$S$ define $x_S \coloneqq \prod_{i \in S} x_i$.
Then $F_n(x_1, \ldots, x_n) = \sum_S x_S$, where the sum is over all supporting sets.

We define a \emph{cyclicly adjacent pair} as a set that is either an adjacent pair or the set $\{1,n\}$, if $1 \neq n$.
We define a \emph{cyclic supporting set} as the set $\{1,2,\ldots,n\}$ after removing a disjoint (possibly empty) union of cyclicly adjacent pairs.
Then
$C_n(x_1, \ldots, x_n) = \sum_S x_S$, where the sum is over all cyclic supporting sets.
\end{remark}

\begin{remark}[Planarity]
We remark that the product of two Q-matrices $Q(x)Q(y)$ can be rewritten as $Q(x)Q(y) = \bigl(Q(x)\begin{psmallmatrix}0&1\\1&0\end{psmallmatrix}\bigr)\bigl(\begin{psmallmatrix}0&1\\1&0\end{psmallmatrix}Q(y)\bigr)$. We also have $Q(x)\begin{psmallmatrix}a\\b\end{psmallmatrix} = \bigl(Q(x)\begin{psmallmatrix}0&1\\1&0\end{psmallmatrix}\bigr) \begin{psmallmatrix}b\\a\end{psmallmatrix}$.
Consider a width-2 ABP that is a product of primitive Q-matrices,
\[
\begin{psmallmatrix}a & b\end{psmallmatrix} Q(\ell_1) Q(\ell_2) \cdots Q(\ell_k) \begin{psmallmatrix}c \\ d\end{psmallmatrix}.
\]
By pairing up the $i$th Q-matrix with the $(i+1)$th Q-matrix for each odd $i$, and using the above equations, we can rewrite this ABP into a width-2 ABP whose underlying graph has no crossing edges, that is, a \emph{planar} with-2 ABP. See Fig.\nobreakspace \ref {fig:planar} for an example with three Q-matrices.
\end{remark}

\begin{figure}[h!]
\centering
\begin{minipage}{7em}
\begin{tikzpicture}[node distance=1.1cm and 0.6*1.1cm]
\node[state] (21)  {};
\node (22) [right=of 21] {};
\node[state] (31) [below left=of 21] {};
\node[state] (32) [below=of 22] {};
\node[state] (41) [below=of 31] {};
\node[state] (42) [below=of 32] {};
\node[state] (51) [below=of 41] {};
\node[state] (52) [below=of 42] {};
\node[state] (61) [below=of 51] {};
\node[state] (62) [below=of 52] {};
\node[state] (71) [below right=of 61] {};
\path (21) edge node[left, xshift=-0.4em] {$a$} (31);
\path (21) edge node[right, xshift=0.4em] {$b$} (32);
\path (32) edge (41);
\path (31) edge (42);
\path (41) edge (52);
\path (42) edge (51);
\path (31) edge node[left] {$\ell_1$}(41);
\path (41) edge node[left] {$\ell_2$}(51);
\path (51) edge node[left] {$\ell_3$} (61);
\path (51) edge (62);
\path (52) edge (61);
\path (62) edge node[right, xshift=0.4em] {$d$} (71);
\path (61) edge node[left, xshift=-0.4em] {$c$} (71);
\end{tikzpicture}
\end{minipage}
\hspace{1em}$\sim$\hspace{1em}
\begin{minipage}{7em}
\begin{tikzpicture}[node distance=1.1cm and 0.6*1.1cm]
\node[state] (21)  {};
\node (22) [right=of 21] {};
\node[state] (31) [below left=of 21] {};
\node[state] (32) [below=of 22] {};
\node[state] (41) [below=of 31] {};
\node[state] (42) [below=of 32] {};
\node[state] (51) [below=of 41] {};
\node[state] (52) [below=of 42] {};
\node[state] (61) [below=of 51] {};
\node[state] (62) [below=of 52] {};
\node[state] (71) [below right=of 61] {};
\path (21) edge node[left, xshift=-0.4em] {$a$} (31);
\path (21) edge node[right, xshift=0.4em] {$b$} (32);
\path (32) edge (42);
\path (31) edge (41);
\path (41) edge (51);
\path (42) edge (52);
\path (31) edge node {$\ell_1$}(42);
\path (42) edge node {$\ell_2$}(51);
\path (51) edge node {$\ell_3$}(62);
\path (51) edge (61);
\path (52) edge (62);
\path (62) edge node[right, xshift=0.4em] {$c$} (71);
\path (61) edge node[left, xshift=-0.4em] {$d$} (71);
\end{tikzpicture}
\end{minipage}
\caption{Making an ABP consisting of three primitive Q-matrices planar.}\label{fig:planar}
\end{figure}
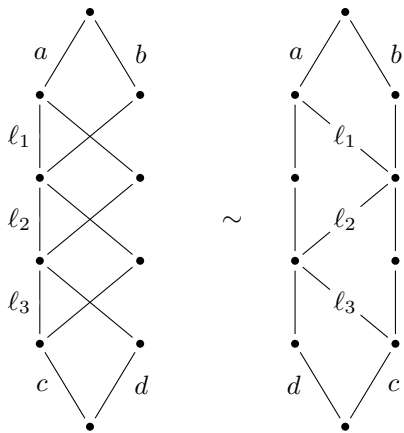

\section{VNP via products of affine linear forms}\label{sec:secondmain}
Valiant proved the following characterization of $\VNP$ \cite{valiant1980reducibility} (see also \cite[Thm.~21.26]{burgisser1997algebraic},\\ \cite[Thm.~2.13]{burgisser2000completeness} and \cite[Thm.~2]{malod2008}).

\begin{theorem}[Valiant \cite{valiant1980reducibility}]\label{valiant}
$\VNPe = \VNP$.
\end{theorem}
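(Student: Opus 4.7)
The inclusion $\VNPe \subseteq \VNP$ is immediate: $\VPe \subseteq \VP$, so a hypercube sum of a formula-computable family is in particular a hypercube sum of a circuit-computable family. The substance of the theorem is the reverse inclusion $\VNP \subseteq \VNPe$, and the plan is to invoke Valiant's completeness theorem for the permanent, which states that every family in $\VNP$ is a p-projection of $(\per_n)$. Granting this, the problem reduces to two elementary points: (i) $(\per_n) \in \VNPe$, and (ii) $\VNPe$ is closed under p-projections.

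For (i), the natural tool is Ryser's identity
\[
\per_n(X) = (-1)^n \sum_{S \subseteq [n]} (-1)^{|S|} \prod_{i=1}^n \sum_{j \in S} X_{i,j}.
\]
Encoding each subset $S \subseteq [n]$ by its indicator vector $\bvars = (b_1, \ldots, b_n) \in \{0,1\}^n$ turns $\sum_{j \in S} X_{i,j}$ into the bilinear form $\sum_j b_j X_{i,j}$ and $(-1)^{|S|}$ into the formula $\prod_j (1 - 2 b_j)$, producing
\[
\per_n(X) = \sum_{\bvars \in \{0,1\}^n} (-1)^n \prod_{j=1}^n (1 - 2 b_j) \prod_{i=1}^n \sum_{j=1}^n b_j X_{i,j},
\]
where the summand is an explicit formula of size $O(n^2)$ in $\bvars$ and the entries of $X$. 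This witnesses $(\per_n) \in \VNPe$.

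For (ii), suppose $(h_n) \in \VNPe$ is witnessed by $h_n(\mathbf{y}) = \sum_{\bvars} g_{q(n)}(\bvars, \mathbf{y})$ with $(g_n) \in \VPe$, and suppose $(f_n)$ is a p-projection of $(h_n)$, so $f_n(\xvars) = h_{r(n)}(\phi(\xvars))$ for some $r \in \polyf$ and some substitution $\phi$ replacing each $\mathbf{y}$-variable of $h_{r(n)}$ by a variable of $\xvars$ or a constant in $\FF$. Since $\phi$ does not touch the hypercube variables, it pushes into the sum: $f_n(\xvars) = \sum_\bvars g_{q(r(n))}(\bvars, \phi(\xvars))$. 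Substituting variables or constants into a formula of size $s$ yields a formula of size at most $s$, so the inner polynomial is still formula-computable of polynomially bounded size, and $(f_n) \in \VNPe$. Combining (i), (ii), and the completeness theorem closes the argument.

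The main obstacle packaged into this plan is the $\VNP$-completeness of $(\per_n)$ under p-projections itself: a nontrivial combinatorial reduction of Valiant that, given a polynomial-size arithmetic circuit, constructs a polynomial-size matrix with entries in $\{0,1\} \cup \xvars$ whose permanent realizes the intended polynomial. I would treat this as a black-box classical input and devote the explicit work to the elementary steps (i) and (ii); a self-contained proof would need to carry out Valiant's graph-theoretic construction, which is the technically hardest ingredient.
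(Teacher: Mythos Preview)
The paper does not supply its own proof of this theorem; it is quoted as a classical result of Valiant, with pointers to textbook treatments (B\"urgisser, Malod--Portier). So there is no in-paper argument to compare against, and your proposal must be judged on its own merits.

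Your approach is correct in spirit but has a genuine gap: the theorem $\VNPe = \VNP$ is stated (and holds) over an arbitrary field, whereas the $\VNP$-completeness of $(\per_n)$ under p-projections is only available when $\characteristic(\FF) \neq 2$. Over characteristic~$2$ the permanent coincides with the determinant and lies in $\VP$, so it cannot be $\VNP$-hard unless $\VP = \VNP$. Your argument therefore establishes $\VNP \subseteq \VNPe$ only for $\characteristic(\FF) \neq 2$; in characteristic~$2$ you would need a different complete family (e.g.\ the Hamiltonian cycle polynomial), and then Ryser's identity is no longer the right tool for step~(i).

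A second concern is circularity. Several standard expositions of the $\VNP$-hardness of the permanent first invoke $\VNPe = \VNP$ to replace the inner $\VP$ family by a formula before carrying out the graph-theoretic encoding. If you import the completeness theorem as a black box, you must be sure the particular proof you rely on does not already consume the statement you are proving. This can be arranged, but it is not automatic. The direct proofs referenced in the paper avoid both issues by working with parse trees of arithmetic circuits and never passing through a complete polynomial; that route is characteristic-free and logically prior to the completeness results.
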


We strengthen Valiant's characterization of $\VNP$ from $\VNPe$  to $\VNPk{1}$.

\begin{theorem}\label{VNPmain}
$\VNP_{1} = \VNP$ when $\characteristic(\FF) \neq 2$.
\end{theorem}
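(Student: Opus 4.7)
The plan is to prove $\VNPk{1} \subseteq \VNP$ and $\VNP \subseteq \VNPk{1}$ separately.

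The inclusion $\VNPk{1} \subseteq \VNP$ is immediate: a width-1 ABP is a product of affine linear forms, so $\VPk{1} \subseteq \VPe$, and applying the nondeterminism closure $\N(\cdot)$ to both sides together with Valiant's equality $\N(\VPe) = \VNP$ (Theorem~\ref{valiant}) gives $\VNPk{1} = \N(\VPk{1}) \subseteq \N(\VPe) = \VNP$.

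For $\VNP \subseteq \VNPk{1}$, the plan is to reduce to showing $\mathrm{perm}_n \in \VNPk{1}$. This reduction uses two facts. First, $\VNPk{1}$ is closed under $p$-projections: substituting variables or constants into affine linear forms yields affine linear forms, and the hypercube-sum structure is preserved. Second, Valiant's classical theorem provides $\VNP$-completeness of $\mathrm{perm}_n$ under $p$-projections whenever $\characteristic(\FF) \neq 2$. Hence $\mathrm{perm}_n \in \VNPk{1}$ will force $\VNP \subseteq \VNPk{1}$.

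To construct a $\VNPk{1}$-representation of $\mathrm{perm}_n$ the plan starts from Ryser's formula
\[
\mathrm{perm}_n(X) = (-1)^n \sum_{\bvars \in \{0,1\}^n} \prod_{j=1}^n (1 - 2 b_j) \cdot \prod_{i=1}^n \Bigl( \sum_{j=1}^n b_j x_{ij} \Bigr).
\]
The factors $(1 - 2 b_j)$ are already affine linear forms in $\bvars$, but the inner factors $\sum_j b_j x_{ij}$ contain bilinear monomials $b_j x_{ij}$ that are not affine linear in the joint variables $(\bvars, \xvars)$. The central tool for eliminating these is the identity, valid in characteristic different from $2$,
\[
b\, x = \tfrac{1}{4} \sum_{c, d \in \{0,1\}} \bigl(b + c - \tfrac{1}{2}\bigr) \bigl(x + d - \tfrac{1}{2}\bigr),
\]
which writes any bilinear monomial as a hypercube sum, over two auxiliary bits, of a product of two affine linear forms. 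I plan to expand $\prod_i \bigl(\sum_j b_j x_{ij}\bigr)$ into $\sum_{\mathbf{j} \in [n]^n} \prod_i b_{j_i} x_{i, j_i}$ by distributivity, encode each component $j_i$ by $\lceil \log n \rceil$ hypercube selector bits so that the sum over $\mathbf{j}$ is absorbed into the hypercube summation, and linearize each resulting bilinear term $b_{j_i} x_{i, j_i}$ by the identity above.

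The main obstacle is that $\VNPk{1}$ is not obviously closed under addition: the sum of two $\VNPk{1}$-representations is a hypercube sum of a \emph{sum} of products rather than of a single product, so one cannot simply apply the linearization factor-by-factor and paste the pieces together. The key technical difficulty will be choosing selector encodings and interleaving them with the auxiliary bits introduced by the linearization identity so that, at every hypercube point, only one clean product of affine linear forms contributes. The assumption $\characteristic(\FF) \neq 2$ enters critically through the $\tfrac{1}{2}$ constants in the linearization identity.
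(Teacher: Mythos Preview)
Your overall architecture matches the paper's first proof: reduce to $\per_n \in \VNPk{1}$ via completeness, start from Ryser's formula, and linearize. But the step you flag as ``the key technical difficulty'' is in fact the entire content of the argument, and your proposal does not resolve it.

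First, your ``central tool'' is vacuous in this context. The identity
\[
b\,x \;=\; \tfrac14\sum_{c,d\in\{0,1\}}(b+c-\tfrac12)(x+d-\tfrac12)
\]
factors as $\tfrac14\bigl(\sum_c(b+c-\tfrac12)\bigr)\bigl(\sum_d(x+d-\tfrac12)\bigr)=\tfrac14(2b)(2x)$; it merely rewrites $bx$ as itself. More to the point, $b\cdot x$ is \emph{already} a product of two affine linear forms, so no linearization is needed for a single term $b_{j_i}x_{i,j_i}$. The genuine obstacle is the \emph{sum} $\sum_j b_j x_{ij}$, and equivalently the dependence of $b_{j_i}$ and $x_{i,j_i}$ on your selector bits. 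With $\lceil\log n\rceil$ binary selector bits encoding $j_i$, the ``selected'' variable $b_{j_i}$ is a multilinear polynomial of degree $\lceil\log n\rceil$ in the selectors---not affine linear---so your identity cannot be applied, and you give no mechanism to cut this dependence down.

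The paper's proof solves exactly this by abandoning binary selectors in favor of a \emph{unary} staircase encoding: for each column $j$ it introduces Boolean variables $a[1,j],\ldots,a[n{-}1,j]$ (with $a[0,j]=1$, $a[n,j]=0$) and writes the whole inner product as
\[
\prod_{i,j}\bigl(1+(x_{i,j}-1)(a[i{-}1,j]-a[i,j])\bigr)\bigl(1+(b[i]-1)(a[i{-}1,j]-a[i,j])\bigr)\bigl(1+(a[i{-}1,j]-1)a[i,j]\bigr).
\]
The third factor kills every $a$-column that is not of the form $1\cdots10\cdots0$; on valid columns the first two factors are $1$ except at the unique transition $i=i_j$, where they become $x_{i_j,j}$ and $b[i_j]$. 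Crucially, every factor is of the shape $1+XY$ with $X,Y$ \emph{affine linear}, and the paper's Lemma~\ref{tricklem},
\[
1+xy=\tfrac12\sum_{c\in\{0,1\}}(x+1-2c)(y+1-2c),
\]
turns each such factor into a hypercube sum of a product of two affine linear forms. This is the step that actually requires $\characteristic(\FF)\neq 2$. Your proposal needs an analogous encoding that forces every factor into $1+(\text{affine})(\text{affine})$ form; without it the plan does not go through.
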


We give two proofs.
The idea of the first proof is to show that the $\VNP$-complete permanent family
$\per_n \coloneqq \sum_{\sigma \in S_n} \prod_{i\in [n]} x_{i,\sigma(i)}$
is in $\VNP_1$.
The idea of the second proof is to simulate in $\VNP_1$ the primitives that are used in the proof of $\VPe = \VPk{3}$ by \cite{ben1992computing}.
We present the second proof in Section\nobreakspace \ref {sec:VNPdirect}.
The advantage of the second proof is that we can restrict the ABP edge labels to affine linear forms that have at most 2 variables, see \protect \MakeUppercase {T}heorem\nobreakspace \ref {thm:secondmainstronger}.
Both proofs use the following lemma to write expressions of the form $1+xy$ as a hypercube sum of a product of affine linear forms.

\begin{lemma}\label{tricklem}
$\tfrac12 \sum_{b\in \{0,1\}} (x + 1 - 2b) (y + 1 - 2b) = 1+xy$ when $\characteristic(\FF)\neq 2$.
\end{lemma}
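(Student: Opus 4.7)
The plan is simply to expand the sum directly: the index set $\{0,1\}$ has only two elements, so the statement reduces to a one-line polynomial identity. First I would substitute $b=0$ to obtain the term $(x+1)(y+1) = xy + x + y + 1$, then substitute $b=1$ to obtain $(x-1)(y-1) = xy - x - y + 1$. Adding these two products yields $2xy + 2$, and multiplication by $\tfrac12$ gives $xy+1$, which is the right-hand side.

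The only subtle point — and the reason for the hypothesis $\characteristic(\FF)\neq 2$ — is that the prefactor $\tfrac12$ must make sense in $\FF$, i.e.\ $2$ must be invertible. Equivalently, one could phrase the identity as $\sum_{b\in\{0,1\}}(x+1-2b)(y+1-2b) = 2(1+xy)$, which is an identity in $\ZZ[x,y]$ and thus holds over any ring; dividing by $2$ is exactly the step that requires $\characteristic(\FF)\neq 2$. There is no real obstacle here — the entire argument is a two-term expansion — so no induction or structural argument is needed.
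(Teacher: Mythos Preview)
Your proof is correct and is essentially the same as the paper's: the paper's proof is the single line ``Expanding the left side gives the right side,'' and your two-term expansion carries this out explicitly. Your remark on why $\characteristic(\FF)\neq 2$ is needed is also accurate.
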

\begin{proof}
Expanding the left side gives the right side.
\end{proof}

\begin{proof}[\bfseries\upshape Proof of \protect \MakeUppercase {T}heorem\nobreakspace \ref {VNPmain}]
The permanent family $(\per_n)$ is well-known to be $\VNP$-complete under p-projections, see for example \cite[Thm.~2.10]{burgisser2000completeness}.
Therefore, to show that $\VNP\subseteq \VNPk{1}$, it suffices to show that $(\per_n) \in \VNPk{1}$.
We begin by writing $\per_n$ as an inclusion-exclusion-type expression due to Ryser \cite[Thm.~4.1]{MR0150048},
\[
\per_n = (-1)^n \sum_{S\subseteq [n]} (-1)^{|S|} \prod_{j\in [n]} \sum_{i\in S} x_{i,j}.
\]
Encoding every subset $S\subseteq [n]$ by a bit string $b = (b[1],\ldots,b[n]) \in \{0,1\}^n$, we can rewrite the above as
\begin{align*}
\per_n &= (-1)^n \sum_{b\in \{0,1\}^n} \Bigl(\prod_{k\in [n]}(1-2b[k])\Bigr) \prod_{j\in [n]} \sum_{i\in [n]}  b[i]\, x_{i,j}\\
&= (-1)^n \sum_{b\in \{0,1\}^n} \Bigl(\prod_{k\in [n]}(1-2b[k])\Bigr) \sum_{i_1, \ldots, i_n\in [n]} \prod_{j\in [n]} b[i_j]\, x_{i_j,j}
\end{align*}
For notational convenience we use square brackets not only to refer to sets ($[n]:=\{1,\ldots,n\}$), but also to entries in a list ($b[k]:=b_k$).
We now introduce new Boolean variables $a[i,j]$, $1\leq i\leq n-1$, $1 \leq j\leq n$, and fix the values $a[0,j] = 1$, $a[n,j] = 0$.
(This gives an $(n+1)\times n$ matrix of variables and constants in which the first row consists of all 1s and the last row contains only 0s.)
We claim that the above expression equals
\begin{align}\label{eqstar}
\per_n &=  (-1)^n  \sum_{b\in \{0,1\}^n}\bigg(\prod_{k\in [n]}(1-2b[k]) \cdot \sum_a \prod_{i,j\in [n]} \bigl(1 + (x_{i,j}-1)(a[i\compactminus1,j] - a[i,j])\bigr)\\
&  \cdot \bigl(1 + (b[i]-1)(a[i\compactminus1,j] - a[i,j])\bigr)\cdot \bigl(1 + (a[i\compactminus1,j] - 1)a[i,j]\bigr) \bigg),\notag
\end{align}
where the second sum is over all Boolean assignments of $a[i,j]$.
The idea is to encode the indices $i_1,\ldots,i_n$ in the boolean variables $a[i,j]$ in unary. For example, for $n=4$, if $i_1=4$, $i_2=3$, $i_3=1$, $i_4=4$,
then the corresponding matrix $a$ is
\[
\begin{pmatrix}
1&1&1&1\\
1&1&0&1\\
1&1&0&1\\
1&0&0&1\\
0&0&0&0
\end{pmatrix}.
\]
We prove the claim \eqref{eqstar} in three steps. Fix $j$.
\begin{itemize}
\item  If $a[i-1,j] = 0$ and $a[i,j] = 1$, then $1 + (a[i-1,j]-1)a[i,j] = 0$. Thus if in the sequence $a[0,j], \ldots, a[n,j]$ a 0 is followed by a 1, then $\prod_{i\in [n]} (1 + (a[i-1,j]-1)a[i,j] = 0$.
Conversely, if $(a[0,j], \ldots, a[n,j]) = (1,\ldots, 1, 0,\ldots, 0)$, then $\prod_{i\in [n]} (1 + (a[i-1,j]-1)a[i,j]) = 1$.
The nontrivial assignments of $(a[0,j],\ldots, a[n,j])$ are thus exactly of the form $(1,\ldots, 1,0,\ldots, 0)$ where the first 0 occurs at some index $1\leq z \leq n$ (since we have set $a[0,j] = 1$ and $a[n,j] = 0$).
Fix such an assignment with first 0 occurring at index $z$.
\item If $i=z$, then $1+(x_{i,j}-1)(a[i-1,j]-a[i,j])$ equals $x_{i,j}$. If $i\neq z$, it equals $1$.
\item If $i=z$, then $1+(b[i]-1)(a[i-1,j]-a[i,j])$ equals $b[i]$. If $i\neq z$, it equals $1$.
\end{itemize}
This proves \eqref{eqstar}.

Next we apply \protect \MakeUppercase {L}emma\nobreakspace \ref {tricklem}, introducing fresh hypercube variables $c_1[i,j]$, $c_2[i,j]$, and $c_3[i,j]$, for $1\leq i,j\leq n$, to obtain
\begin{align*}
\per_n &= (-1)^n(\tfrac{1}{2})^{3n^2} \sum_{b} \Bigl(\prod_{k\in [n]}(1-2\,b[k])\Bigr) \cdot \sum_{a} \bigg( \prod_{i,j\in [n]}\\
&{}\phantom{{}\cdot{}}\sum_{c_1[i,j]}\Big[ (x_{i,j} - 2\, c_1[i,j]) 
\cdot (a[i\compactminus1,j] - a[i,j] + 1-2\,c_1[i,j])\Big]\\
&{}\cdot{} \sum_{c_2[i,j]}\Big[ (b[i] - 2\,c_2[i,j])
\cdot (a[i\compactminus1, j] - a[i,j] + 1-2\,c_2[i,j])\Big]\\
&{}\cdot{} \sum_{c_3[i,j]}\Big[(a[i\compactminus1, j] - 2\,c_3[i,j])
\cdot (a[i,j] + 1-2\,c_3[i,j])\Big] \bigg),
\end{align*}
where the sum goes over all Boolean assignments of $b[i]$, $a[i,j]$, $c_1[i,j]$, $c_2[i,j]$, $c_3[i,j]$, for all indices $1\leq i,j \leq n$, except for $a[n,j] \coloneqq 0$, and $a[0,j]\coloneqq 1$.
After a rearrangement we obtain the expression
\begin{align*}
\per_n &=  \sum_{\substack{a,b\\c_1,c_2,c_3}} \bigg( (-1)^n (\tfrac{1}{2})^{3n^2} \Bigl(\prod_{k\in [n]}(1-2\,b[k])\Bigr) \cdot  \prod_{i,j\in [n]}\\
&{}\phantom{{}\cdot{}} (x_{i,j} - 2\, c_1[i,j]) \cdot (a[i\compactminus1,j] - a[i,j] + 1-2\,c_1[i,j])\\[0.5ex]
&{}\cdot{} (b[i] - 2\,c_2[i,j]) \cdot (a[i\compactminus1, j] - a[i,j] + 1-2\,c_2[i,j]) \\
&{}\cdot{} (a[i\compactminus1, j] - 2\,c_3[i,j]) \cdot (a[i,j] + 1-2\,c_3[i,j]) \bigg),
\end{align*}
where the sum goes over all Boolean assignments of $a[i,j]$, $b[i]$, $c_1[i,j]$, $c_2[i,j]$, $c_3[i,j]$ for all indices $1\leq i,j \leq n$, again except for $a[n,j] \coloneqq 0$, and $a[0,j]\coloneqq 1$.
This shows that $(\per_n) \in \VNPk{1}$.
\end{proof}


In Section\nobreakspace \ref {pro:FIIproof} we will prove that the statement of \protect \MakeUppercase {T}heorem\nobreakspace \ref {VNPmain} does not hold over $\FF_2$, that is, $\VNPk{1} \subsetneq \VNP$ when $\FF = \FF_2$.
We leave the situation over other fields of characteristic 2 as an open problem.


\section{ABPs with restricted edge labels}\label{sec:differentabpedgelabels}

So far the edge labels of our ABPs were allowed to be arbitrary affine linear forms.
This section is about ABPs in which the edge labels are restricted to be simple affine linear forms (``weak ABPs''), or variables and constants (``weakest ABPs''). These edge label types were also studied in \cite{wang}. 

\begin{definition}\label{modelsubsets}
A $\weakest$-ABP (weakest ABP) is an ABP with edges labeled by variables or constants.
A $\weak$-ABP (weak ABP) is an ABP with edges labeled by simple affine linear forms $\alpha x_i + \beta$, $\alpha,\beta \in \FF$.
A $\gen$-ABP (general ABP) is an ABP with edges labeled by general affine linear forms $\sum_i \alpha_i x_i + \beta$, $\alpha_i, \beta \in \FF$.
For $*$ equal to $\weakest$, $\weak$ or $\gen$, the class $\VPk{k}^*$ consists of all families
of polynomials over polynomially many variables that are
computed by polynomial-size width-$k$ $*$-ABPs.
In the rest of this paper the star will act as a variable from $\{\weakest,\weak,\gen\}$.
We write $\VPk{k}$ if we mean $\VPk{k}^\gen$.
\end{definition}

From the above definition it follows that $\VPk{k}^\weakest\subseteq \VPk{k}^{\weak} \subseteq \VPk{k}^{\gen}$.

\begin{remark}\label{moreprecise}
One checks that the construction in the proof of \protect \MakeUppercase {T}heorem\nobreakspace \ref {approxvpe} actually proves the inclusion  $\VPe \subseteq \polyapproxbar{\VPk{2}^\weakest}$ when $\characteristic(\FF)\neq 2$. The inclusion $\VPe \subseteq \polyapproxbar{\VPk{2}^\weakest}$ implies the equalities $\approxbar{\VPk{2}^\weakest} = \approxbar{\VPe}$ and $\polyapproxbar{\VPk{2}^\weakest} = \polyapproxbar{\VPe}$.
\end{remark}

In the following sections we will prove all inclusions and separations that are listed in Figure~\ref{fig:overview}.

\subsection{Comparing different types of edge labels in width-2 ABPs}\label{subsec:comp2}

The aim of this subsection is to prove the following separation.

\begin{theorem}\label{th:sepweakgen}
$\VPk{2}^\weak \subsetneq \VPk{2}^\gen$.
\end{theorem}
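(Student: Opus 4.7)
The containment $\VPk{2}^\weak \subseteq \VPk{2}^\gen$ is immediate from Definition~\ref{modelsubsets}, since every simple affine linear form $\alpha x_i + \beta$ is a general affine linear form with all other coefficients vanishing; hence every width-$2$ weak-ABP is already a width-$2$ gen-ABP with the same size.

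For the strict inclusion my plan is to exhibit an explicit family $(f_n)$ in $\VPk{2}^\gen$ that admits no polynomial-size width-$2$ weak-ABP. The guiding observation is that certain $2\times 2$ matrices with general affine entries do not admit factorizations into $2\times 2$ matrices with simple affine entries that stay polynomial: for instance, writing $M = \bigl(\begin{smallmatrix} x_1+x_2 & 0\\ 0 & 1 \end{smallmatrix}\bigr)$ as a product $AB$ of weak matrices would force $\det A \cdot \det B = x_1+x_2$, so one factor, say $A$, must have $\det A$ divisible by $x_1+x_2$, and then $B = A^{-1}M$ is forced to have rational non-polynomial entries. My plan is to engineer a family $(f_n)$ whose most compact width-$2$ gen-ABP applies a sequence of such non-decomposable gen matrices of this kind (e.g.\ sitting inside a Fibonacci-style $Q$-matrix product $F_{\poly(n)}(\ell_1,\ldots,\ell_{\poly(n)})$ from Section~\ref{subsec:completepoly}, with the $\ell_i$ chosen to be genuine two-variable affine forms), so that membership in $\VPk{2}^\gen$ is evident by construction.

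For the lower bound, I would analyze the partial product vectors $v_i = M_i \cdots M_1 \bigl(\begin{smallmatrix}1\\0\end{smallmatrix}\bigr) \in \FF[\xvars]^2$ produced by any hypothetical weak width-$2$ ABP of size $s$. Since each $M_i$ has all four entries equal to simple affine forms, the pair $(v_i[1],v_i[2])$ satisfies a strong inductive rigidity: transitioning from $v_{i-1}$ to $v_i$ multiplies by a matrix that exposes at most one ``fresh'' variable to the existing state. The plan is to turn this into a concrete invariant - for example, a bound on the number of distinct variables that may occur jointly in a single monomial of a fixed entry of $v_i$, or a dimension bound on the space spanned by certain partial derivatives - and to show that the specific form of $f_n$ forces $s$ to be super-polynomial.

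The main obstacle is finding the right invariant: it must be preserved under left multiplication by an arbitrary weak $2\times 2$ matrix \emph{and} strong enough to exclude the target $f_n$. This parallels the rigidity argument of Allender and Wang~\cite{wang} that yields $\VPk{2}^\gen \subsetneq \VPk{3}$, but here the constraint being exploited is the single-variable restriction on each edge label rather than the width-$2$ restriction on the state space.
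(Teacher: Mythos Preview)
Your proposal is not a proof but a plan, and the plan is aimed at a harder target than necessary. You set out to find a family $(f_n)\in\VPk{2}^\gen$ for which every width-$2$ weak-ABP has \emph{super-polynomial size}, and you acknowledge that the required invariant is still to be discovered. The paper avoids this entirely: it exhibits a single explicit polynomial
\[
p=(x_{11}+\cdots+x_{17})(x_{21}+\cdots+x_{27})+(x_{31}+\cdots+x_{37})(x_{41}+\cdots+x_{47})
\]
that is trivially computed by a width-$2$ gen-ABP but is not computed by \emph{any} width-$2$ weak-ABP, regardless of size. The constant family $(p,p,\ldots)$ then separates the classes.

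The mechanism is an adaptation of the Allender--Wang argument you allude to, but applied to weak-ABPs rather than gen-ABPs. The key structural fact (Theorem~\ref{twoabpnec}) is: if $q$ is computed by a width-$2$ weak-ABP, then some assignment $\pi$ of at most $6$ variables makes $\pi(q)$ affine linear or makes $\homog(\pi(q))$ factor nontrivially. The number $6$ comes from the weak restriction: each inner matrix contains at most $4$ distinct variables (one per entry), so a matrix containing $4$ variables can be zeroed by $4$ assignments (Lemma~\ref{zerolem}), and otherwise one can rank-drop a not-indg matrix with at most $3$ assignments; at most two such steps are ever needed. The polynomial $p$ is engineered with $7$ variables in each linear form precisely so that after any $6$ assignments one can still project $\homog(\pi(p))$ to $x_ix_j+x_kx_\ell$, which is irreducible.

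Your determinant remark about $M=\bigl(\begin{smallmatrix} x_1+x_2 & 0\\ 0 & 1\end{smallmatrix}\bigr)$ does not do the work you suggest: even if $M$ itself fails to split as a product of two weak matrices, nothing prevents a longer product of weak matrices from realizing the same linear map (indeed $\det A=x_1+x_2$ is attainable for a single weak matrix such as $\bigl(\begin{smallmatrix} x_1 & -1\\ x_2 & 1\end{smallmatrix}\bigr)$). The right obstruction is not a local factorization barrier on one matrix but the global necessary condition above.
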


In fact, we will show the following stronger statement.

\begin{theorem}\label{notwabp}
The polynomial
\begin{align*}
p(\xvars) &= (x_{11} + x_{12} + \cdots + x_{17})(x_{21} + x_{22} + \cdots + x_{27})\\ &\phantom{{}={}}{}+ (x_{31} + x_{32} + \cdots + x_{37})(x_{41} + x_{42} + \cdots + x_{47})
\end{align*}
is computable by a width-2 $\gen$-ABP, but not computable by any width-2 $\weak$-ABP.
\end{theorem}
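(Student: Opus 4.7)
The proof has two directions. For the upper bound ($p \in \VPk{2}^\gen$), the width-$2$ $\gen$-ABP corresponding to
\[
\begin{pmatrix}1&0\end{pmatrix}\begin{pmatrix}\ell_1 & \ell_3 \\ 0 & 1\end{pmatrix}\begin{pmatrix}\ell_2 & 0 \\ \ell_4 & 1\end{pmatrix}\begin{pmatrix}1 \\ 0\end{pmatrix}
\]
computes $\ell_1\ell_2+\ell_3\ell_4 = p$ by a one-line matrix calculation, where $\ell_1,\ldots,\ell_4$ are the four $7$-variable linear forms appearing in $p$. The middle matrices have general affine linear forms as entries, so this is a $\gen$-ABP (and not a weak ABP).

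For the lower bound ($p \notin \VPk{2}^\weak$) I argue by contradiction. Suppose $p = (1,0)\,M_k \cdots M_1\,(1,0)^T$ for $2\times 2$ matrices $M_i$ whose four entries are each simple affine forms $\alpha x_j + \beta$ (so each entry involves at most one variable). Let $v_i := M_i \cdots M_1 (1,0)^T = (f_i, g_i)^T$ be the intermediate state vector, so that $v_0 = (1,0)^T$ and the first coordinate of $v_k$ equals $p$.

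The plan is to identify a structural invariant on $(f_i,g_i)$ that weak matrix multiplication can only increase by a bounded amount per layer, yet that the target state must exceed. Partition the $28$ variables into the four blocks $X_r := \{x_{r1},\ldots,x_{r7}\}$ for $r \in \{1,2,3,4\}$, and for each unordered pair $\{r,s\}$ with $r \neq s$ and polynomial $h$, let $A_{rs}(h)$ denote the $7\times 7$ matrix whose $(i,j)$-entry is the coefficient of $x_{ri}x_{sj}$ in $h$. Then $p$ satisfies $A_{12}(p) = A_{34}(p) =$ the all-ones matrix, and $A_{rs}(p) = 0$ for all other pairs. I then want to show that each application of a weak matrix $M_{i+1}$ adds only restricted content to $A_{rs}(f_{i+1}), A_{rs}(g_{i+1})$: any non-constant entry of $M_{i+1}$ can contribute only to blocks indexed by the variable group of its single variable. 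The structural conclusion to aim for is that no weak-ABP state can carry rank-$1$ all-ones content simultaneously in $A_{12}$ and in $A_{34}$ of the same coordinate without also incurring nonzero content in at least one of the ``cross'' blocks $A_{13}, A_{14}, A_{23}, A_{24}$, contradicting the shape of $p$.

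The main obstacle is the case analysis over all orderings in which variables from the four blocks can be introduced across the $k$ layers. A careful layer-by-layer induction is needed to track how $A_{rs}(f_i)$ and $A_{rs}(g_i)$ evolve under each possible choice of which block is touched by each entry of $M_{i+1}$. The choice of $7$ variables per linear form is what makes the counting tight: with fewer variables the bilinear content required by $p$ is small enough to be produced without forbidden cross-block contributions, while $7$ is enough to force the contradiction. Ruling out clever sign cancellations between different paths through the ABP is where most of the technical work will lie.
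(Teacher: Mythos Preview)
Your upper bound is fine and more explicit than the paper's one-line ``clearly''.

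The lower bound, however, is not a proof but a plan with admitted gaps (``The main obstacle is the case analysis\ldots'', ``Ruling out clever sign cancellations\ldots is where most of the technical work will lie''). The bilinear-block invariant you describe is not made precise, and it is unclear that a layer-by-layer tracking of the $A_{rs}$ matrices can be pushed through: weak matrices can have constant entries that move information between the two coordinates of the state vector, so the cross-block content you fear can in principle be created and then cancelled later. You have not isolated a quantity that is provably monotone or bounded under multiplication by a weak matrix.

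The paper's argument is both different and much shorter. It first establishes (reusing the Allender--Wang machinery) a general necessary condition: if $p$ is computed by a width-2 $\weak$-ABP, then there is an assignment $\pi$ of at most $6$ variables such that either $\pi(p)$ is affine linear or $\homog(\pi(p))$ factors into two positive-degree polynomials. The constant $6$ comes from the fact that each inner matrix has at most $3$ variable entries (else a $4$-variable assignment kills it outright), and one only ever needs to degenerate at most two not-indg inner matrices to reduce to the indg case handled by Allender--Wang's \protect \MakeUppercase {T}heorem\nobreakspace \ref {indgcor}. Given this necessary condition, the choice of $7$ variables per block is exactly so that after any $6$-variable assignment each of the four blocks still contains at least one live variable; a further substitution then reduces $\homog(\pi(p))$ to $x_ix_j+x_kx_\ell$ with distinct variables, which is irreducible. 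This is the role of the number $7$, not the ``counting tightness'' you speculate about.
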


We leave it as an open problem whether the inclusion $\VPk{2}^\weakest \subseteq \VPk{2}^\weak$ is strict.

To prove \protect \MakeUppercase {T}heorem\nobreakspace \ref {notwabp} we will review and reuse the arguments used by Allender and Wang~\cite{wang} to show that the polynomial $x_1x_2 + \cdots + x_{15}x_{16}$ cannot be computed by any width-2 $\gen$-ABP.

For the proof of \protect \MakeUppercase {T}heorem\nobreakspace \ref {notwabp} we may without loss of generality assume that the base field~$\FF$ is algebraically closed, because for any field $\FF$, if $p$ is not computable over the algebraic closure of $\FF$, then it is not computable over $\FF$ itself.
Let $\HH$ be the affine linear forms that are single variables $x_i$ or constants $\FF$.
Let $\SSS$ be the set of simple affine linear forms. Let $\LL$ be the set of general affine linear forms.
Let $\HH^{2\times 2}$, $\SSS^{2\times 2}$, $\LL^{2\times 2}$ be the sets of $2\times 2$ matrices with entries in $\HH$, $\SSS$, $\LL$ respectively.
In this subsection, all ABPs have width~2, and by a $\weakest$-, $\weak$- or $\gen$-ABP $\Gamma$ we will mean a sequence $\Gamma_k, \ldots, \Gamma_1$ with $\Gamma_k \in \FF^{1\times 2}$, $\Gamma_{k-1}, \ldots, \Gamma_2 \in X^{2\times 2}$, and $\Gamma_1 \in \FF^{2\times 1}$ with $X$ equal to $\HH$, $\SSS$ or $\LL$ respectively. We call $\Gamma_{k-1}, \ldots, \Gamma_2$ the \defin{inner matrices of $\Gamma$}.

\begin{definition}
A matrix $A\in \LL^{2\times 2}$ is called \defin{inherently nondegenerate} (indg) when $\det(A) \in \FF\setminus\{0\}$.
\end{definition}

Allender and Wang prove the following necessary condition for a polynomial to be computable by a $\weakest$-, $\weak$- or $\gen$-ABP whose inner matrices are indg.
Let~$\homog(p)$ denote the highest-degree homogeneous part of a polynomial $p$.

\begin{theorem}[{\cite[Thm.~3.9 and Lem.~4.7]{wang}}]\label{indgcor}
Let $p$ be a polynomial and $\Gamma$ a $\weakest$-, $\weak$- or $\gen$-ABP computing~$p$, whose inner matrices are indg.
Then $\homog(p)$ is a product of affine linear forms.
\end{theorem}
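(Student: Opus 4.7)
The plan is to use the indg condition to put each inner matrix $\Gamma_i$ into a canonical factored form, then expand the full product and read off the top-degree part.

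First, for $2\leq i\leq k-1$ write $\Gamma_i = M_i + L_i$, where $M_i\in\FF^{2\times 2}$ is the constant part and $L_i$ is the homogeneous degree-$1$ part. Comparing the degree-$0$, $1$, and $2$ components of $\det(M_i+L_i)$ with the constant $\det\Gamma_i\in\FF\setminus\{0\}$ yields three facts: $M_i\in\GL_2(\FF)$; $L_i$ has rank at most $1$ over $\FF[\xvars]$ (because its determinant, the quadratic part, must vanish); and the linear part of $\det\Gamma_i$ vanishes.

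Next, set $\tilde L_i := M_i^{-1}L_i$, which still has linear entries and rank at most $1$. The identities $\det\tilde L_i=0$ and $\det(I+\tilde L_i)=\det\Gamma_i/\det M_i=1$ force $\mathrm{tr}(\tilde L_i)=0$. Writing $\tilde L_i = uv^T$ with $v^Tu=0$ and using that every entry $u_jv_{j'}$ is a linear form, a short case analysis shows $\tilde L_i = \ell_i N_i$, where $\ell_i$ is a single linear form and $N_i\in\FF^{2\times 2}$ is a constant rank-$1$ traceless matrix; in particular $N_i^2=0$. Hence $\Gamma_i = M_i(I+\ell_i N_i)$.

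With this factorization in hand, I expand
\[
p \;=\; \Gamma_k \prod_{i=k-1}^{2} \bigl(M_i + \ell_i M_i N_i\bigr)\,\Gamma_1 \;=\; \sum_{T\subseteq\{2,\ldots,k-1\}}\Bigl(\prod_{i\in T}\ell_i\Bigr)\,c_T,
\]
where each $c_T\in\FF$ is $\Gamma_k$ sandwiching an interleaved product of constant matrices (using $M_i$ for $i\notin T$ and $M_iN_i$ for $i\in T$) applied to $\Gamma_1$. The ``maximal'' term $T=S:=\{2,\ldots,k-1\}$ contributes $c_S\prod_{i\in S}\ell_i$, which is manifestly a product of linear forms; so if $c_S\neq 0$ then $\homog(p)=c_S\prod_{i}\ell_i$ and we are done.

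The main obstacle is the cancellation case $c_S=0$, where $\homog(p)$ lives at some degree $d<k-2$. To handle this I factor each $N_i = u_iv_i^T$ with $v_i^Tu_i=0$ and use repeated rank-$1$ absorption $A u v^T = (Au)v^T$ to rewrite $c_T$ as a chain of scalars along the subsequence $T=\{i_1>\cdots>i_m\}$; the vanishing of any scalar in this chain forces many other $c_T$ to vanish simultaneously. A combinatorial analysis of the resulting surviving length-$d$ paths shows that they all share a common extremal index or common internal segment, so a common linear form $\ell_i$ can be pulled out of the top-degree sum, after which the remainder factors by induction on the number of inner matrices. This propagation-of-vanishing step is the technical heart of the argument and corresponds to Lemma~4.7 of \cite{wang}.
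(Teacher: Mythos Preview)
The paper does not prove this theorem; it merely cites \cite{wang} for Theorem~3.9 and Lemma~4.7 and uses the result as a black box. So there is no ``paper's own proof'' to compare against, and I assess your argument on its own merits.

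Your steps up through the case $c_S\neq 0$ are correct and clean. The decomposition $\Gamma_i=M_i(I+\ell_iN_i)$ with $N_i$ rank~$1$, traceless, constant (hence $N_i^2=0$) is exactly the right normal form, and your verification via $\det(M_i+L_i)$ and the UFD argument for $a^2=-bc$ is fine.

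The gap is in your handling of $c_S=0$. You correctly observe that $c_S$ factors as a chain of scalars $\alpha_0\alpha_1\cdots\alpha_{k-2}$ along the rank-$1$ decompositions, so some $\alpha_j=0$. But your proposed conclusion --- pull a common $\ell_i$ out of the top-degree sum and ``factor the remainder by induction'' --- does not work as stated: the remainder after extracting $\ell_i$ is not the value of any shorter ABP, so the induction hypothesis does not apply to it. Moreover, if several consecutive $\alpha_j$ vanish, the ``common extremal index'' picture becomes murky and you would have to iterate the cancellation analysis through all lower degrees.

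There is a much cleaner way to run the induction that your chain-of-scalars observation actually sets up. If $\alpha_0=\Gamma_kM_{k-1}u_{k-1}=0$ then $\Gamma_k\Gamma_{k-1}=\Gamma_kM_{k-1}$ is a constant row vector, and symmetrically for $\alpha_{k-2}=0$. If a middle link $\alpha_j=v_{i+1}^TM_iu_i=0$ vanishes, then $N_{i+1}M_iN_i=0$, so the quadratic term in $\Gamma_{i+1}\Gamma_i$ drops out and $\Gamma_{i+1}\Gamma_i$ again has \emph{affine linear} entries and nonzero constant determinant --- that is, it is itself a single indg inner matrix. In every case you obtain a strictly shorter ABP computing the same $p$ whose inner matrices are still indg, and induction on the number of inner matrices finishes immediately. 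Replace your final paragraph with this merge-and-induct step and the proof is complete.
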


Our next goal is to give a necessary condition for a polynomial $p$ to be computable by a $\weak$-ABP. We begin with a simple lemma,
which can essentially be found in \cite{wang}.

\begin{lemma}[\cite{wang}]\label{zerolem}
Let $p$ be a polynomial.
If $p$ is computed by a $\weak$-ABP that has an inner matrix containing 4 variables, then there is an assignment $\pi$ of 4 variables with $\pi(p) = 0$.
\end{lemma}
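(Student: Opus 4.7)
The plan is to exploit the severe restriction that each entry of a $\weak$-ABP inner matrix is a simple affine linear form $\alpha x + \beta$ in a \emph{single} variable. If some inner matrix $\Gamma_i$ contains $4$ variables, then necessarily each of its four entries depends on a \emph{different} variable, and in particular every entry has a nonzero coefficient of its respective variable (otherwise that entry would be a constant, not a variable).

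Write $(\Gamma_i)_{uv} = \alpha_{uv} x_{k(u,v)} + \beta_{uv}$ with $\alpha_{uv} \neq 0$ for $(u,v)\in\{1,2\}^2$, and with $(u,v)\mapsto k(u,v)$ a bijection onto four distinct indices. I would then define the assignment $\pi$ of exactly these four variables by
\[
\pi\bigl(x_{k(u,v)}\bigr) \coloneqq -\beta_{uv}/\alpha_{uv} \quad\text{for each } (u,v)\in\{1,2\}^2.
\]
This is well defined because the four variables $x_{k(u,v)}$ are distinct, so each is specified by exactly one equation. Under $\pi$, every entry of $\Gamma_i$ evaluates to $0$, so $\pi(\Gamma_i)$ is the zero matrix. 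Since $p = \Gamma_k\,\Gamma_{k-1}\cdots\Gamma_1$ as a product of matrices (with $\Gamma_k$ a row vector and $\Gamma_1$ a column vector), applying $\pi$ factor-by-factor yields
\[
\pi(p) = \pi(\Gamma_k)\cdots \pi(\Gamma_{i+1})\cdot 0 \cdot \pi(\Gamma_{i-1})\cdots \pi(\Gamma_1) = 0,
\]
as required.

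There is essentially no obstacle: the content of the lemma is the observation that, in the weak model, a single entry vanishes as soon as its one variable is specialized, and that four entries depending on four distinct variables can be vanished independently by a single joint assignment. The only subtle point is justifying that all four coefficients $\alpha_{uv}$ are nonzero, which I would make explicit by noting that "an inner matrix containing $4$ variables" means four entries that are each genuinely variable-dependent simple affine forms, and four distinct variables appear (one per entry) because each entry is in a single variable.
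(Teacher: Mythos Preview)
Your proposal is correct and follows essentially the same approach as the paper: write the offending inner matrix with entries $\alpha_{uv} x_{k(u,v)} + \beta_{uv}$, $\alpha_{uv}\neq 0$, set each of the four distinct variables to $-\beta_{uv}/\alpha_{uv}$ to zero out the matrix, and conclude that the whole product vanishes. Your version is slightly more explicit in justifying that the four variables are distinct and the assignment is well defined, but the argument is the same.
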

\begin{proof}
Let $M$ be such a matrix.
Since the ABP is of type $\weak$, $M$ is of the form
\[
M = \begin{pmatrix}
\alpha_{11} x_{11} + \beta_{11} & \alpha_{12} x_{12} + \beta_{12}\\
\alpha_{21} x_{21} + \beta_{21} & \alpha_{22} x_{22} + \beta_{22}
\end{pmatrix}
\]
for some constants $\alpha_{ij} \in \FF\setminus\{0\}$, $\beta_{ij}\in\FF$.
Applying the four assignments $x_{ij} \mapsto -\beta_{ij}/\alpha_{ij}$ makes~$M$ zero and thus $p$ zero.
\end{proof}

We need two more ideas before we will state and prove the necessary condition we are after. 
(1)~Let $A\in \LL^{2\times 2}$ be nonzero and not-indg (that is, $\det(A)$ is either 0 or a nonconstant polynomial).
Then there is an assignment~$\pi$ of the variables such that $\pi(A)$ has only constant entries and has rank~1.
(2)~Let $p$ be a polynomial computed by an ABP $\Gamma$, that is, $p = \Gamma_k \cdots \Gamma_1$.
Suppose that $\Gamma$ contains a matrix $\Gamma_i$ with only constant entries and with rank~1.
Then there is a $2\times1$ matrix $\Gamma_{i,2}$ and a $1\times 2$ matrix $\Gamma_{i,1}$ such that $\Gamma_i = \Gamma_{i,2} \Gamma_{i,1}$.
Then $p$ is a product
\[
p = p_2 p_1
\]
of polynomials $p_1$, $p_2$, each computable by an ABP, namely
\begin{align*}
p_2 &= \Gamma_k \cdots \Gamma_{i+1} \Gamma_{i,2}\\
p_1 &= \Gamma_{i,1} \Gamma_{i-1} \cdots \Gamma_1.
\end{align*}
We say that $\Gamma_i$ \defin{factors $p$ into $p_2p_1$}. Recall that $\homog(p)$ denotes the highest-degree homogeneous part of a polynomial $p$.
The following is implicit in \cite{wang}.

\begin{theorem}[\cite{wang}]\label{twoabpnec}
Let $p$ be a polynomial computed by a $\weak$-ABP $\Gamma$.
Then there is an assignment $\pi$ of at most 6 variables such that one of the following is true:
\begin{enumerate}
\item $\pi(p)$ is affine linear (including constant), or
\item $\homog(\pi(p))$ is a product of two polynomial of positive degree.
\end{enumerate}
\end{theorem}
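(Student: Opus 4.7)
The plan is a case analysis on the indg structure and variable count of the inner matrices of $\Gamma$, combining Theorem~\ref{indgcor} (the indg product-of-linear-forms structure), Lemma~\ref{zerolem} (the 4-variable zeroing trick), and the rank-$1$ factorization trick described in the two paragraphs preceding the theorem. Throughout I assume $\FF$ is algebraically closed, which is harmless as observed before Theorem~\ref{notwabp}.

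If every inner matrix of $\Gamma$ is indg, then Theorem~\ref{indgcor} applies directly to $p$: the top homogeneous part $\homog(p) = \ell_1 \cdots \ell_d$ is a product of affine linear forms. For $d \leq 1$ this gives case (1) with no assignment; for $d \geq 2$, grouping $\ell_1 \cdot (\ell_2 \cdots \ell_d)$ gives case (2). If instead some inner matrix of $\Gamma$ contains $4$ distinct variables, then Lemma~\ref{zerolem} yields an assignment $\pi$ of those $4$ variables with $\pi(p) = 0$, which is case (1).

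In the remaining case, every inner matrix of $\Gamma$ has at most $3$ variables and some inner matrix is not indg. Let $M_L$ and $M_R$ be the leftmost and rightmost not-indg inner matrices (possibly equal). Each has at most $3$ variables, and $\det(M_L), \det(M_R)$ are each either identically zero or non-constant polynomials in those $\leq 3$ variables. Algebraic closedness of $\FF$ lets me pick an assignment $\pi$ of at most $6$ variables (the union of the variables appearing in $M_L$ and $M_R$) such that both $\pi(M_L)$ and $\pi(M_R)$ are constant matrices of rank $\leq 1$. If either matrix becomes the zero matrix under $\pi$, then $\pi(p) = 0$, case (1). Otherwise I write $\pi(M_L) = u_L v_L^\top$ and $\pi(M_R) = u_R v_R^\top$ with constant vectors and apply the factorization idea from before the theorem twice to obtain $\pi(p) = p_3 \cdot p_2 \cdot p_1$ (or $\pi(p) = p_2 \cdot p_1$ when $M_L = M_R$), where $p_1$ and $p_3$ are computed by \emph{all-indg} sub-$\weak$-ABPs by the extremal choice of $M_L$ and $M_R$. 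Theorem~\ref{indgcor} then guarantees that $\homog(p_1)$ and $\homog(p_3)$ are products of affine linear forms.

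If $\deg \pi(p) \leq 1$ we are in case (1). Otherwise, whenever at least one of $p_1, p_3$ has positive degree, $\homog(\pi(p)) = \homog(p_3) \cdot \homog(p_2) \cdot \homog(p_1)$ contains a positive-degree affine linear factor coming from Theorem~\ref{indgcor} that can be split off to exhibit $\homog(\pi(p))$ as a product of two polynomials of positive degree, yielding case (2). The main obstacle is the remaining subcase in which both $p_1$ and $p_3$ are nonzero constants, so that all the degree of $\pi(p)$ is carried by the middle factor $p_2$, whose ABP can still contain not-indg inner matrices and to which Theorem~\ref{indgcor} does not directly apply. The technical core of the argument (implicit in \cite{wang}) is to use the freedom still available in the choice of $\pi$---after imposing $\det(\pi(M_L)) = \det(\pi(M_R)) = 0$, the vectors $v_L, u_R$ still vary in a positive-dimensional variety---to perturb $\pi$ within the $\leq 6$-variable budget so as to force $p_1 = 0$ (hence $\pi(p) = 0$, case (1)), handling the only case that would otherwise fall through.
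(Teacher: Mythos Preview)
Your case analysis tracks the paper's closely up to the final subcase, but the resolution you propose there is a genuine gap, and it is \emph{not} what the paper does.

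Concretely: in your hard subcase both $p_1$ and $p_3$ are nonzero constants and $p_2$ carries all the degree, but the inner matrices of the ABP computing $p_2$ need not be indg, so Theorem~\ref{indgcor} does not apply to $p_2$. You propose to ``perturb $\pi$ within the $\leq 6$-variable budget so as to force $p_1=0$''. This is not justified. The row direction $v_R$ in a rank-$1$ decomposition $\pi(M_R)=u_R v_R^\top$ need not sweep out $\mathbb{P}^1$ as you vary the assignment over $\{\det M_R=0\}$; for instance if $M_R=\begin{psmallmatrix}1&0\\x&0\end{psmallmatrix}$ then $M_R$ is never zero, always rank~$1$, and its row space is $(1,0)$ for every value of $x$. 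If the right sub-ABP then yields $p_1=(1,0)\,\Gamma_{i-1}\cdots\Gamma_1$ equal to a fixed nonzero constant, no perturbation of the $M_R$-variables forces $p_1=0$. The symmetric obstruction can occur simultaneously at $M_L$, so you cannot fall back on zeroing $p_3$ either. The appeal to ``implicit in \cite{wang}'' does not rescue this.

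The paper's proof avoids this obstacle by a different choice of the two not-indg matrices. Rather than the extremal pair $M_L,M_R$, it looks at the whole subsequence $(M_\ell,\ldots,M_1)$ of not-indg inner matrices and, in the remaining situation, locates a \emph{transition index}: the leftmost $M_i$ whose rank-$1$ factoring of $p$ yields a (nonconstant)$\cdot$(constant) split. One then factors again at the \emph{adjacent} not-indg matrix $M_{i+1}$, which necessarily produces a (constant)$\cdot$(nonconstant) split of the left piece. The crucial gain is that the middle factor $p_4$ now sits between two \emph{consecutive} not-indg matrices, so every inner matrix of the ABP computing $p_4$ is indg and Theorem~\ref{indgcor} applies to $p_4$. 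This is exactly the structural insight your argument is missing: you must choose the two not-indg matrices to be adjacent in the not-indg subsequence, not extremal, so that the middle ABP is guaranteed all-indg.
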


\begin{proof}
Let $(\Gamma_k, \ldots, \Gamma_1)$ be the matrices of $\Gamma$, so that $p = \Gamma_k \cdots \Gamma_1$.
If there is a $\Gamma_i$ containing 4 variables, then there is an assignment $\pi$ of these 4 variables with $\pi(p) = 0$ (\protect \MakeUppercase {L}emma\nobreakspace \ref {zerolem}), so we are in case 1.
Otherwise, all $\Gamma_i$ have at most 3 variables. If the inner $\Gamma_i$ are all indg, then $\homog(p)$ is a product of linear forms (\protect \MakeUppercase {T}heorem\nobreakspace \ref {indgcor}), so we are in case 1 or 2. 
Otherwise, there is at least one not-indg inner matrix.
Consider the nonempty subsequence $\mathcal{M} = (M_\ell, \ldots, M_1)$ of not-indg inner matrices. For each $M_i$ there is an assignment $\pi$ of at most 3 variables such that $\pi(M_i)$ has only constant entries and rank 1.
We consider four possible situations.

1. \emph{There is an $M \in \mathcal{M}$ and an assignment $\pi$ of at most~3 variables such that $\pi(M)$ factors $\pi(p)$ into a product of two constants or a product of two polynomials with positive degree.}
Then we are in case 1 or 2.

2. \emph{There is an assignment~$\pi$ of at most 3 variables such that $\pi(M_1)$ factors $\pi(p)$ into~$p_2 p_1$ with $p_2$ a constant and $p_1$ not constant.} Then $p_1$ is computed by an ABP consisting of indg inner matrices (since $M_1$ is the right-most not-indg inner matrix) and hence $\homog(p_1)$ is a product of linear forms (\protect \MakeUppercase {T}heorem\nobreakspace \ref {indgcor}), so we are in case 1 or 2.

3. \emph{There is an assignment~$\pi$ of at most 3 variables such that $\pi(M_\ell)$ factors $\pi(p)$ into~$p_2 p_1$ with $p_2$ not a constant and $p_1$ a constant.}
Then $p_2$ is computed by an ABP consisting of indg inner matrices (since $M_2$ is the left-most not-indg inner matrix) and one proceeds as in the previous situation.

4. \emph{Remaining situation.} In the remaining situation we do the following.
Let $M_i$ be the left-most  matrix in $\mathcal{M}$ such that there is an assignment $\pi$ of at most 3 variables such that $\pi(M_i)$ factors $\pi(p)$ into $p_2 p_1$ with $p_2$ not a constant and $p_1$ constant.
Then there is an assignment $\sigma$ of at most 3 variables such that $\sigma(\pi(M_{i+1}))$ factors $p_2 = p_3 p_4$ with $p_3$ constant and $p_4$ not constant. 
Then $p_4$ is computed by an ABP consisting of indg matrices, and so~$\homog(p_4)$ is a product of homogeneous linear forms.
Therefore we are in case 1 or 2.
\end{proof}

{
\renewcommand{\thetheorem}{\ref{notwabp}}
\begin{theorem}[repeated]
The polynomial
\begin{align*}
p(\xvars) &= (x_{11} + x_{12} + \cdots + x_{17})(x_{21} + x_{22} + \cdots + x_{27})\\ &\phantom{{}={}}{}+ (x_{31} + x_{32} + \cdots + x_{37})(x_{41} + x_{42} + \cdots + x_{47})
\end{align*}
is computable by a width-2 $\gen$-ABP, but not computable by any width-2 $\weak$-ABP.
\end{theorem}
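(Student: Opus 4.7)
The plan is to establish the upper bound by an explicit construction, and the lower bound by applying the structural result of Theorem~\ref{twoabpnec}. As noted before Theorem~\ref{notwabp}, we may freely assume $\FF$ is algebraically closed.

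For the upper bound, write $p = f_1 f_2 + f_3 f_4$ where $f_i = x_{i,1} + \cdots + x_{i,7}$. Then $p$ is the value of the width-$2$ $\gen$-ABP corresponding to the matrix product
\[
\begin{pmatrix} 1 & 1 \end{pmatrix}
\begin{pmatrix} f_1 & 0 \\ 0 & f_3 \end{pmatrix}
\begin{pmatrix} f_2 & 0 \\ 0 & f_4 \end{pmatrix}
\begin{pmatrix} 1 \\ 1 \end{pmatrix}
\;=\; f_1 f_2 + f_3 f_4,
\]
whose edge labels are the general affine linear forms $f_i$ (and the constants $0$).

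For the lower bound, suppose toward contradiction that $p$ is computed by some width-$2$ $\weak$-ABP. By Theorem~\ref{twoabpnec} there is an assignment $\pi$ of at most $6$ variables such that either (i) $\pi(p)$ is affine linear, or (ii) $\homog(\pi(p))$ is a product of two polynomials of positive degree. The key observation is that the four forms $f_1,f_2,f_3,f_4$ use $7$ pairwise disjoint variables each, so $\pi$ can kill fewer than $7$ variables of any single $f_i$. Hence for each $i$ the restriction $\pi(f_i) = g_i + c_i$ is a non-constant affine linear form, where $g_i$ is a nonzero linear form in the remaining variables of the $i$th row and $c_i\in\FF$. Because the variable sets of $g_1,g_2,g_3,g_4$ are still pairwise disjoint, we have
\[
\homog(\pi(p)) \;=\; g_1 g_2 + g_3 g_4 \;\neq\; 0,
\]
so $\pi(p)$ has degree exactly $2$ and option (i) fails.

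It remains to rule out option (ii). Since $\pi(p)$ is quadratic, such a factorization would have to be into two affine linear forms $h_1, h_2$, and then $\homog(\pi(p))$ would equal a product of two \emph{homogeneous} linear forms. View quadratic forms as symmetric matrices: a product of two linear forms has Gram matrix of rank at most $2$. On the other hand, because $g_1, g_2$ live in disjoint variable sets and both are nonzero, the quadratic form $g_1 g_2$ has rank exactly $2$; likewise $g_3 g_4$ has rank $2$, and the two are supported on disjoint variables. Hence $g_1 g_2 + g_3 g_4$ has rank $4$ as a quadratic form, contradicting the upper bound of $2$. Thus option (ii) also fails, and no width-$2$ $\weak$-ABP can compute $p$.

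The main obstacle is just the rank argument in the last paragraph; everything else is direct bookkeeping. The reason the choice of $p$ works where Allender--Wang's $x_1 x_2 + \cdots + x_{15} x_{16}$ also worked (against $\gen$-ABPs) is that padding each variable into a $7$-variable linear form defeats any attempt by $\pi$ to force one of the $f_i$ to become constant, while still producing an essentially rank-$4$ obstruction in $\homog(\pi(p))$; the extra assignment budget available in the weak setting (up to $6$ variables rather than the $4$ used against general ABPs) is precisely why we must inflate each block from a single variable to a $7$-term linear form.
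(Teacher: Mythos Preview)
Your proof is correct and follows essentially the same route as the paper: both invoke Theorem~\ref{twoabpnec}, observe that assigning at most six variables cannot make any of the four $7$-variable blocks constant, and then argue that the resulting degree-$2$ part $g_1g_2+g_3g_4$ is irreducible. The only cosmetic difference is that the paper finishes by further substituting down to $x_ix_j+x_kx_\ell$ and citing its irreducibility, whereas you use a rank-of-quadratic-form argument; the latter is fine (and in characteristic~$2$ should be read via the polar bilinear form rather than a literal symmetric Gram matrix), so nothing is missing.
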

}

\begin{proof} 
Clearly $p(\xvars)$ is computable by a width-2 $\gen$-ABP.
Suppose $p(\xvars)$ is computable by a width-2 $\weak$-ABP.
Then by \protect \MakeUppercase {T}heorem\nobreakspace \ref {twoabpnec} there is an assignment $\pi$ of at most 6 variables such that either $\pi(p)$ is affine linear or $\homog(\pi(p))$ is a product of two polynomials of positive degree.
The first option is impossible, because distinct variables do not cancel.
So $\homog(\pi(p))$ is a product of two polynomials of positive degree.
With another assignment $\sigma$ we can achieve that $\homog(\sigma(\pi(p))$ is of the form $x_i x_j + x_k x_\ell$ for some distinct variables $x_i,x_j,x_k,x_\ell$.
This is not a product of two polynomials of positive degree, so $\homog(\pi(p))$ is not either. 
\end{proof}

\subsection{Comparing different types of edge labels in width-1 ABPs}\label{sec:widthone}

Clearly, $\VPk{1}^\weakest \subseteq \VPk{1}^\weak \subseteq \VPk{1}^\gen$ and $\VPk{1}^* \subseteq \VPk{2}^*$, but this does not give a complete description of all inclusions among these classes.
The following two propositions realize a complete description among $\VPk{1}^*$ and $\VPk{2}^\weakest$.

\begin{proposition}\label{gen1}
$\VPk{1}^\gen \subseteq \VPk{2}^{\weakest}$.
\end{proposition}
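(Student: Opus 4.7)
The plan is to show that every polynomial computable by a polynomial-size width-$1$ $\gen$-ABP, namely a product $\ell_1 \cdots \ell_m$ of general affine linear forms in variables $x_1, \ldots, x_N$, is also computable by a polynomial-size width-$2$ $\weakest$-ABP. The approach proceeds in two stages. First, I would build, for any single affine linear form $\ell$, a width-$2$ $\weakest$-ABP of size $O(N)$ that computes $\ell$ and has single-vertex source and sink layers. Then I would chain $m$ such sub-ABPs together by identifying the sink of one with the source of the next, so that the identified vertices act as width-$1$ bottlenecks.

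For the first stage, given $\ell = \alpha_1 x_1 + \cdots + \alpha_N x_N + \beta$, I would use the sequence of transition matrices beginning with the column $\begin{psmallmatrix} 1 \\ 0 \end{psmallmatrix}$ out of the source, then appending for each index $i$ with $\alpha_i \neq 0$ the triple
\[
\begin{pmatrix}\alpha_i & 0 \\ 0 & 1\end{pmatrix}, \quad \begin{pmatrix}1 & 0 \\ x_i & 1\end{pmatrix}, \quad \begin{pmatrix}\alpha_i^{-1} & 0 \\ 0 & 1\end{pmatrix},
\]
then appending the single matrix $\begin{psmallmatrix}1 & 0 \\ \beta & 1\end{psmallmatrix}$ and ending with the row $\begin{psmallmatrix} 0 & 1 \end{psmallmatrix}$ into the sink. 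Every entry is a variable or constant, so the edge labels are $\weakest$. A direct calculation shows that processing one triple turns the state $\begin{psmallmatrix} 1 \\ Q \end{psmallmatrix}$ into $\begin{psmallmatrix} 1 \\ Q + \alpha_i x_i \end{psmallmatrix}$ and the constant matrix adds $\beta$ to the second coordinate; starting from $\begin{psmallmatrix} 1 \\ 0 \end{psmallmatrix}$ one therefore reaches $\begin{psmallmatrix} 1 \\ \ell \end{psmallmatrix}$, and pairing with the final row returns $\ell$.

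For the second stage, place $m+1$ single-vertex bottleneck layers $s = v_0, v_1, \ldots, v_m = t$ and insert between $v_{i-1}$ and $v_i$ the sub-ABP that computes $\ell_i$ with $v_{i-1}$ as its source and $v_i$ as its sink. The total width is at most $2$, all edge labels remain variables or constants, and the size is $O(mN)$. Since every $s$-$t$ path must pass through each $v_i$, the sum over paths factorises as the product of sub-ABP values, giving $\ell_1 \cdots \ell_m$ as required. The main obstacle is really the first stage: the diagonal matrix $\begin{psmallmatrix}\ell & 0 \\ 0 & 1\end{psmallmatrix}$ cannot be realised as a product of $\weakest$ $2 \times 2$ matrices for a general $\ell$, so one cannot simply mimic the width-$1$ ABP matrix by matrix; instead $\ell$ must be accumulated as a sum in one state coordinate, with conjugation by scalar diagonals used to introduce the coefficients $\alpha_i$ without ever placing the non-$\weakest$ expression $\alpha_i x_i$ on any single edge.
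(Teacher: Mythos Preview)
Your proposal is correct and follows the same two-stage strategy as the paper: build a width-$2$ $\weakest$-ABP computing each affine linear form, then concatenate these at single-vertex bottlenecks so that the values multiply. The only difference is the gadget for a single $\ell$: the paper uses one transition per variable, carrying the running ratio $\alpha_{i+1}/\alpha_i$ down one column while the crossing edge is labeled $x_i$, whereas you spend three transitions per variable via the conjugation $\diag(\alpha_i^{-1},1)\begin{psmallmatrix}1&0\\x_i&1\end{psmallmatrix}\diag(\alpha_i,1)$. Both gadgets have size $O(N)$, so the difference is cosmetic.
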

\begin{proof}
Let $(p_n) \in \VPk{1}^\gen$. 
Then each $p_n$ is a product of $\polyf(n)$ affine linear forms in $\polyf(n)$ variables. 
Let $\ell(\xvars) = \alpha_0 + \alpha_1 x_1 + \alpha_2 x_2 + \cdots + \alpha_m x_m$ be such an affine linear form with $\alpha_0 \in \FF$ and $\alpha_1, \ldots, \alpha_m \in \FF\setminus\{0\}$.
We can compute $\ell(\xvars)$ with the width-2 $\weakest$-ABP in Fig.\nobreakspace \ref {fig:vp2}.
A product of affine linear forms can be computed by the width-2 $\weakest$-ABP that is the concatenation of the width-2 $\weakest$-ABPs computing the affine linear forms. For $p_n$ the resulting ABP has $\polyf(n)$ size. Thus, $(p_n) \in \VPk{2}^\weakest$.
\end{proof}
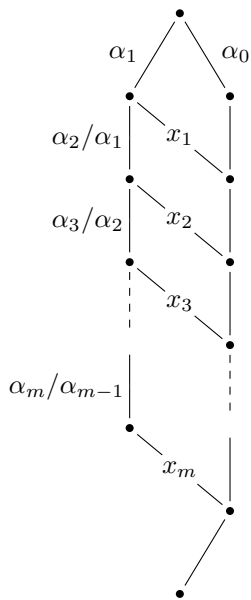
\begin{figure}[h!]
\centering
\begin{tikzpicture}[node distance=1.1cm and 0.6*1.1cm]
\node[state] (21)  {};
\node (22) [right=of 21] {};
\node[state] (31) [below left=of 21] {};
\node[state] (32) [below=of 22] {};
\node[state] (41) [below=of 31] {};
\node[state] (42) [below=of 32] {};
\node[state] (51) [below=of 41] {};
\node[state] (52) [below=of 42] {};
\node (61) [below=of 51] {};
\node[state] (62) [below=of 52] {};
\node (72) [below=of 62] {};
\node[state] (82) [below=of 72] {};
\node[state] (71) [below=of 61] {};
\node (81) [below=of 71] {};
\node[state] (91) [below right=of 81] {};
\node (72) [below=of 62] {};
\path (21) edge node[left, xshift=-0.4em] {$\alpha_1$} (31);
\path (21) edge node[right, xshift=0.4em] {$\alpha_0$} (32);
\path (32) edge (42);
\path (42) edge (52);
\path (52) edge (62);
\path (31) edge node[left] {$\alpha_2/\alpha_1$}(41);
\path (41) edge node[left] {$\alpha_3/\alpha_2$}(51);
\path (51) edge[dashed] (61);
\path (62) edge[dashed] (72);
\path (72) edge (82);
\path (82) edge (91);
\path (31) edge node {$x_1$} (42);
\path (41) edge node {$x_2$} (52);
\path (51) edge node {$x_3$} (62);
\path (61) edge node[left] {$\alpha_m/\alpha_{m-1}$} (71);
\path (71) edge node {$x_m$} (82);
\end{tikzpicture}
\caption{Width-2 $\weakest$-ABP computing $\ell(\xvars) = \alpha_0 + \alpha_1 x_1 + \alpha_2 x_2 + \cdots + \alpha_m x_m$}\label{fig:vp2}
\end{figure}

\begin{proposition}\label{VP1sep}
$\VPk{1}^\weakest \subsetneq \VPk{1}^{\weak} \subsetneq \VPk{1}^\gen \subsetneq \VPk{2}^\weakest$.
\end{proposition}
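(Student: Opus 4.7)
The three weak inclusions have already been established: the two width-1 inclusions are immediate from the definitions of $\weakest$, $\weak$, and $\gen$, and the last is Proposition~\ref{gen1}. My plan is therefore to produce, for each of the three inclusions, an explicit constant family $f_n$ witnessing strictness; the upper bounds will be clear by inspection, and the lower bounds will all rest on the fact that $\FF[\xvars]$ is a unique factorization domain together with elementary constraints on factorizations into affine linear forms.

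For $\VPk{1}^\weakest \subsetneq \VPk{1}^\weak$, I would first observe that any polynomial computed by a width-1 $\weakest$-ABP is a product of variables and constants, hence a scalar multiple of a monomial. Then I would take $f_n \coloneqq x_1 + 1$: it is a single simple affine linear form and thus lies in $\VPk{1}^\weak$, but it is not of the form $c \prod_i x_i^{e_i}$, so it cannot lie in $\VPk{1}^\weakest$.

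For $\VPk{1}^\weak \subsetneq \VPk{1}^\gen$, I would take $f_n \coloneqq x_1 + x_2$, which lies trivially in $\VPk{1}^\gen$. Any element of $\VPk{1}^\weak$ factors into simple affine linear forms, which are themselves irreducible (or units) in the UFD $\FF[\xvars]$; hence every irreducible factor of any polynomial in $\VPk{1}^\weak$ involves at most one variable. Since $x_1 + x_2$ is irreducible and involves two variables, it does not lie in $\VPk{1}^\weak$.

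For $\VPk{1}^\gen \subsetneq \VPk{2}^\weakest$, I would take $f_n \coloneqq x_1 x_2 + x_3 x_4$. The upper bound follows from an explicit width-2 $\weakest$-ABP with source $s$, one inner layer of two vertices $v_1, v_2$ reached from $s$ by edges labeled $x_1$ and $x_3$, a second inner layer of two vertices $w_1, w_2$ reached from $v_i$ by an edge labeled $x_2$ or $x_4$, and a sink $t$ reached from each $w_i$ by an edge labeled $1$. For the lower bound, any polynomial in $\VPk{1}^\gen$ is a product of affine linear forms, so any factorization of $f_n$ of this form must consist of two nonconstant linear factors $\ell_1 \ell_2$, since $\deg f_n = 2$. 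I expect the main obstacle (really the only calculation in the proposition) to be the short coefficient-matching step: writing $\ell_1 = \sum_i a_i x_i + a_0$ and $\ell_2 = \sum_i b_i x_i + b_0$, the vanishing coefficients of $x_i^2$ together with $a_1 b_2 + a_2 b_1 = 1$ and $a_3 b_4 + a_4 b_3 = 1$ will force supports like $\{a_1, b_2\}$ and $\{a_3, b_4\}$ to be nonzero, whereupon the vanishing of a cross coefficient such as that of $x_1 x_4$ produces a contradiction.
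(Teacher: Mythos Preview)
Your proof is correct and follows essentially the same approach as the paper: both exhibit small explicit witnesses and rule them out via elementary factorization constraints in $\FF[\xvars]$. The only differences are cosmetic --- for the second separation the paper uses $(x_1+x_2)^2$ (observing that a homogeneous element of $\VPk{1}^\weak$ must be a monomial) where you use $x_1+x_2$ directly, and for the third it notes that the highest-degree homogeneous part of any element of $\VPk{1}^\gen$ is a product of homogeneous linear forms, which shortcuts your coefficient chase.
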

\begin{proof}
If $(p_n)\in\VPk{1}^{\weakest}$, then $p_n$ is a monomial.
However, $(\alpha_0 + \alpha_1 x_1) \in \VPk{1}^{\weak}$ and $\alpha_0 + \alpha_1x_1$ is not a monomial, so $\VPk{1}^{\weakest}\subsetneq \VPk{1}^{\weak}$.
If $(p_n) \in \VPk{1}^{\weak}$ and $p_n$ is homogeneous, then $p_n$ is a monomial.
However, $\bigl( (x_1 + x_2)^2 \bigr) \in \VPk{1}^\gen$ and $(x_1+x_2)^2$ is not a monomial, so $\VPk{1}^\weak \subsetneq \VPk{1}^\gen$.
The last inclusion is Proposition\nobreakspace \ref {gen1}. To see the strictness, if $(p_n) \in \VPk{1}^\gen$, then the highest-degree homogeneous part $\homog(p_n)$ of $p_n$ is a product of homogeneous linear forms.
However, $(x_1x_2 + x_3x_4) \in \VPk{2}^{\weakest}$ and $x_1x_2 + x_3x_4$ is not a product of homogeneous linear forms, so $\VPk{1}^\gen \subsetneq \VPk{2}^{\weakest}$.
\end{proof}

\subsection{Approximation in width-1 ABPs}

The following proposition says that each of $\VPk{1}^\weakest$, $\VPk{1}^{\weak}$ and $\VPk{1}^\gen$ is closed under approximation.

\begin{proposition}\label{VP1isVP1bar}
$\VPk{1}^* = \approxbar{\VPk{1}^*}$.
\end{proposition}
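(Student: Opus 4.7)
The inclusion $\VPk{1}^* \subseteq \approxbar{\VPk{1}^*}$ is immediate from the definition of approximation closure (take the error function to be identically zero), so the plan is to establish the reverse inclusion $\approxbar{\VPk{1}^*} \subseteq \VPk{1}^*$. The key structural fact I will exploit is that a width-$1$ ABP is simply a product of edge labels; so membership in $\VPk{1}^*$ is the statement that $f_n$ factors as a polynomially long product of $*$-type affine linear forms over $\FF$.

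Given $(f_n) \in \approxbar{\VPk{1}^*}$, let $g_n \in \VPk{1}^*(\FF(\eps))$ be the approximating family from \protect \MakeUppercase {D}efinition\nobreakspace \ref {barclass}, so $g_n = f_n + \eps f_{n;1} + \cdots + \eps^{e(n)} f_{n;e(n)}$ is a polynomial in $\eps$ with coefficients in $\FF[\xvars]$, and $g_n = \prod_{i=1}^{k(n)} \ell_i$ for some polynomially bounded $k(n)$ with each $\ell_i$ an affine linear form in $\FF(\eps)[\xvars]$ of the appropriate type ($\weakest$, $\weak$ or $\gen$). The central trick is to normalize each factor. For each $i$, let $v_i \in \ZZ$ be the minimum $\eps$-adic valuation among the (finitely many) coefficients of $\ell_i$, and write $\ell_i = \eps^{v_i} \tilde\ell_i$. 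Then $\tilde\ell_i$ has all coefficients in the local ring $\FF[\eps]_{(\eps)}$, at least one coefficient is a unit, and reduction modulo $\eps$ yields a nonzero affine linear form $\hat\ell_i \coloneqq \tilde\ell_i|_{\eps=0}$ in $\FF[\xvars]$. A routine check shows that $\hat\ell_i$ inherits the type restriction of $\ell_i$: if $\ell_i$ was a single variable or a scalar in $\FF(\eps)$ then $\hat\ell_i$ is respectively a single variable or a scalar in $\FF$, and if $\ell_i = \alpha(\eps) x_j + \beta(\eps)$ is simple then so is $\hat\ell_i$.

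Setting $V \coloneqq \sum_i v_i$, we obtain $g_n = \eps^{V} \prod_i \tilde\ell_i$. Because $\FF[\xvars]$ is an integral domain and each $\hat\ell_i$ is nonzero, the product $\prod_i \hat\ell_i$ is nonzero, so $\prod_i \tilde\ell_i$ has $\eps$-valuation exactly zero. Since $g_n$ is an honest polynomial in $\eps$, this forces $V \geq 0$. If $V > 0$ then $f_n = g_n|_{\eps=0} = 0$, and the zero family is trivially in $\VPk{1}^*$. If $V = 0$ then $f_n = \prod_{i=1}^{k(n)} \hat\ell_i$, which is computed by the width-$1$ $*$-ABP over $\FF$ whose consecutive edge labels are $\hat\ell_1, \ldots, \hat\ell_{k(n)}$; this ABP has the same size as the original and so is polynomially bounded.

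I do not anticipate a real obstacle: the only subtlety is that the $\ell_i$ live in $\FF(\eps)$ rather than $\FF[\eps]$, so negative powers of $\eps$ can occur inside individual factors even though $g_n$ itself is $\eps$-polynomial; this is precisely what the valuation normalization $\ell_i = \eps^{v_i}\tilde\ell_i$ is designed to handle. The argument is genuinely specific to width $1$: it works because a width-$1$ computation is a \emph{multiplicative} object, whereas for width $\geq 2$ the matrix product interacts with addition and one loses the ability to clear $\eps$-denominators factor by factor — which is exactly why, in contrast, $\VPk{2} \subsetneq \approxbar{\VPk{2}}$ by \eqref{eq:vp2bar} together with the Allender--Wang separation.
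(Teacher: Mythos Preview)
Your proof is correct and follows essentially the same approach as the paper: normalize each factor $\ell_i$ by extracting its lowest power of $\eps$, argue that the total exponent must be nonnegative because $g_n$ is a polynomial in $\eps$, and then set $\eps=0$ factor by factor. Your use of the $\eps$-adic valuation and the local ring $\FF[\eps]_{(\eps)}$ is slightly more careful than the paper's Laurent-polynomial formulation (which tacitly treats the coefficients as Laurent polynomials rather than arbitrary rational functions), and you handle the degenerate case $V>0$ explicitly, but the underlying idea is identical.
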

\begin{proof}
Trivially, $\VPk{1}^*\subseteq \approxbar{\VPk{1}^*}$.
To prove the opposite inclusion, let $(f_n) \in \approxbar{\VPk{1}^*}$.
There are polynomials $g_n(\eps, \xvars) \in \FF[\eps, \xvars]$ such that $f_n + \eps g_n(\eps,\xvars)$ can be written as a product of $\polyf(n)$ affine linear forms in $\FF(\eps)[\xvars]$ in $\polyf(n)$ variables (these affine linear forms have either $\weakest$-, $\weak$- or $\gen$-type).
That is, (forgetting the subscript $n$ for the moment) $f(\xvars) + \eps g(\eps, \xvars)$ can be written as
\[
f(\xvars) + \eps g(\eps, \xvars) = \prod_{\smash{i=1}}^m \ell_i(\eps, \xvars)
\]
with
\[
\ell_i(\eps, \xvars) = \sum_{\mathclap{j=d_i}}^{\smash{e_i}} \eps^j k_{i,j}(\xvars)
\]
for some affine linear forms $k_{i,j} \in \FF[\xvars]$, such that $k_{i,d_i}(\xvars)\neq0$, and $d_i\leq e_i \in \ZZ$.
By shifting $\eps$-factors from $\ell_1, \ldots, \ell_{m-1}$ to $\ell_m$ we can assume that $d_i = 0$ for $i< m$. We claim that $d_m \geq 0$. If $d_m < 0$, then 
expanding $\prod_i \ell_i(\xvars)$ as a Laurent series in $\eps$ gives a term with a negative power of $\eps$.
This contradicts $f(\xvars) + \eps g(\xvars)$ having only nonnegative powers of $\eps$.
Therefore, the $\ell_i(\xvars)$ do not contain any negative powers of $\eps$ and we can safely substitute $\eps \mapsto 0$ in each linear form $\ell_i$ to obtain $f$ as a product of affine linear forms in $\FF[\xvars]$ (either of $\weakest$-, $\weak$- or $\gen$-type).
Remembering our subscript $n$ again, we have thus proven $(f_n) \in \VPk{1}^{*}$.
\end{proof}

\subsection{Nondeterminism in width-1 ABPs}

In the following proposition we compare $\VPk{1}^*$ to $\VNPk{1}^*$ for all three versions $*\in\{\weakest,\weak,\gen\}$.

\begin{proposition}\label{VP1isVNP1weakest}\hfil
\begin{itemize}
\item $\VPk{1}^* = \VNP_1^*$ for $*$ equal to $\weakest$ or $\weak$.
\item $\VPk{1}^\gen \subsetneq \VNP_1^\gen$ when $\characteristic(\FF)\neq2$. 
\end{itemize}
\end{proposition}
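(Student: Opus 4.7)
The plan is to settle each of the three assertions separately. The inclusion $\VPk{1}^* \subseteq \VNPk{1}^*$ is trivial in all three cases (take $p(n) = 0$ in Definition \ref{Ndef}), so only the reverse inclusions for $* \in \{\weakest, \weak\}$ and the strictness in the $\gen$ case require work.

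For $* = \weakest$, an element $g(\bvars, \xvars) \in \VPk{1}^\weakest$ is a scalar multiple of a monomial in $\bvars \cup \xvars$. Summing this monomial over the hypercube $\{0,1\}^{p(n)}$ peels off, for each hypercube variable $b_i$, a factor $\sum_{b_i \in \{0,1\}} b_i^{a_i}$ equal to $1$ or $2$; the result is a scalar multiple of a monomial in $\xvars$ and hence again in $\VPk{1}^\weakest$. For $* = \weak$, every factor of $g$ depends on at most one variable, so I would group factors by their underlying variable to decompose $g(\bvars, \xvars) = G(\xvars) \cdot \prod_j H_j(b_j)$, where $G$ is a product of simple affine linear forms in $\xvars$ and each $H_j \in \FF[b_j]$. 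The hypercube sum then splits as $G(\xvars) \cdot \prod_j (H_j(0) + H_j(1))$, which is a constant multiple of $G$, still in $\VPk{1}^\weak$.

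For the strictness $\VPk{1}^\gen \subsetneq \VNPk{1}^\gen$ in characteristic different from $2$, I would use the constant family $f_n \coloneqq x_1 x_2 + x_3 x_4$ as a separating witness. Membership in $\VNPk{1}^\gen$ is immediate from Theorem \ref{VNPmain}, since $f_n$ has trivial formula size and therefore lies in $\VNP = \VNPk{1}^\gen$. For $f_n \notin \VPk{1}^\gen$ I would argue as follows: a width-$1$ $\gen$-ABP computes a product of affine linear forms, and absorbing constant factors into a scalar leaves a product of nonconstant affine linear forms. Since $f_n$ has degree exactly $2$, there must be exactly two such factors, so $f_n = (a_1 + L_1)(a_2 + L_2)$ with $L_1, L_2$ homogeneous linear. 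Comparing the degree-$1$ and degree-$0$ parts of the expansion against $f_n$ forces $a_1 = a_2 = 0$, giving $f_n = L_1 L_2$. But $L_1 L_2$ has rank at most $2$ as a quadratic form, whereas $f_n$ has rank $4$ (its symmetric Gram matrix is block-diagonal with two hyperbolic $2 \times 2$ blocks), a contradiction. The only genuine content lies in invoking Theorem \ref{VNPmain}; the rest is either a direct calculation or the standard rank-of-a-quadratic-form observation.
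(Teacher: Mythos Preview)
Your arguments for $* \in \{\weakest, \weak\}$ are correct and essentially identical to the paper's. For the $\gen$ case your proof is also correct but differs from the paper's in both the choice of witness and both directions of the argument. The paper takes the constant family $p_n = \sum_{b\in\{0,1\}}(x_1+b)(x_2+b) = 2x_1x_2 + x_1 + x_2 + 1$, so membership in $\VNPk{1}^\gen$ is immediate from the definition (a one-variable hypercube sum over a product of two affine linear forms), and non-membership in $\VPk{1}^\gen$ is checked by an elementary coefficient-matching argument. You instead use $x_1x_2 + x_3x_4$, invoke Theorem~\ref{VNPmain} for membership, and use the rank of the associated symmetric bilinear form for non-membership. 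Your non-membership argument is cleaner and more conceptual, but your membership argument pays a real price: Theorem~\ref{VNPmain} is one of the main results of the paper, so your proof of this simple proposition now rests on it, whereas the paper's proof is entirely self-contained. If you want to keep your witness, note that $x_1x_2 + x_3x_4 = \tfrac12\sum_{b}(x_1 + x_3 + 1 - 2b)(x_2 + x_4 + 1 - 2b) - x_1x_4 - x_3x_2 - 1$ is not obviously simpler; it is much easier to swap in the paper's two-variable witness and retain your rank argument (which still works: $2x_1x_2$ has rank $2$, while a single product $L_1L_2$ in two variables with nonzero linear and constant parts cannot match all homogeneous components simultaneously---though at that point the paper's direct coefficient check is just as short).
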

\begin{proof}
Trivially, $\VPk{1}^* \subseteq \VNPk{1}^*$. Let $(p_n) \in \VNPk{1}^\weakest$.
Then $p_n$ can be written as a hypercube-sum over a monomial,
\[
p(\xvars) = \sum_{\mathclap{\bvars \in \{0,1\}^{\polyf(n)}}} m(\bvars, \xvars)
\]
with $m$ a monomial (subscripts $n$ are implied). 
For any $\bvars$-variable that does not occur in $m$, we remove that $\bvars$-variable form the summation and at the same time multiply the expression by 2, to again have an expression for $p(\xvars)$.
Assuming all $\bvars$-variables occur in $m$, only for $\bvars=(1,1,\ldots,1)$ can $m(\bvars, \xvars)$ be nonzero. 
So $p(\xvars) = m((1,\ldots,1), \xvars)$. 
Remembering the subscript $n$, we proved $(p_n) \in \VPk{1}^{\weakest}$.

Let $(p_n) \in \VNPk{1}^{\weak}$. 
Then, (forgetting the subscript $n$)
\[
p(\xvars) = \sum_{\bvars \in \{0,1\}^{\polyf(n)}} \prod_i \ell_i(\bvars) \prod_j k_j(\xvars)
\]
for some simple affine linear forms $\ell_i$ in the variables $\bvars$ and some simple affine linear forms~$k_j$ in the variables $\xvars$. 
The product $\prod_j k_j(\xvars)$ is independent of $\bvars$, while $\sum_{\bvars} \prod_i \ell_i(\bvars)$ is a constant. 
We can thus write $p(\xvars)$ as a constant times $\prod_j k_j(\xvars)$.
Therefore (remembering $n$), $p_n(\xvars) \in \VPk{1}^\weak$.
This proves the first line of the proposition.

To prove the second line, recall that if $(p_n) \in \VPk{1}^\gen$, then $p_n$ is a product of affine linear forms.
However, let $p_n(x_1, x_2) = \sum_{b\in \{0,1\}} (x_1 + b)(x_2 + b) = 2x_1x_2 + x_1 + x_2 + 1$. Then $(p_n) \in \VNPk{1}^\gen$, but $p_n(x_1, x_2)$ is a not a product of affine linear forms, as we will now verify.
Suppose $2x_1x_2 + x_1 + x_2 + 1 = (\alpha_0 + \alpha_1x_1 + \alpha_2 x_2)(\beta_0 + \beta_1 x_1 + \beta_2 x_3)$.
Then $\alpha_1\beta_1=0$ and $\alpha_2 \beta_2 = 0$.
Since $\alpha_1 \beta_1 = 0$,
we may assume without loss of generality that $\alpha_1 = 0$.
Since not both $\alpha_1$ and $\alpha_2$ can be 0 (otherwise $(\alpha_0 + \alpha_1 x_1 + \alpha_2 x_2)(\beta_0 + \beta_1 x_1 + \beta_2 x_2)$ has degree 1) and since $\alpha_2\beta_2 = 0$, we have $\beta_2 = 0$. Hence, $2 x_1 x_2 + x_1 + x_2 + 1 = (\alpha_0 + \alpha_2 x_2)(\beta_0 + \beta_1 x_1)$. Then $\alpha_0 \beta_0 = 1$,  $\alpha_0 \beta_1 = 1$, $\alpha_2 \beta_0 = 1$, and $\alpha_2 \beta_1 = 2$. The first two of these equations imply $\beta_0=\beta_1$, which contradicts the last two of these equations.
So $\VPk{1}^\gen \subsetneq \VNPk{1}^\gen$.
\end{proof}

\begin{remark}\label{vnpstrict}
It follows directly from Proposition\nobreakspace \ref {VP1isVNP1weakest} and Proposition\nobreakspace \ref {VP1sep} that we have strict inclusions
$\VNP_1^\weakest \subsetneq \VNP_1^\weak \subsetneq \VNP_1^\gen$, when $\characteristic(\FF)\neq 2$.
\end{remark}

\section{Alternative proof of $\VNPk{1} = \VNP$ via $\VPk{3}$}\label{sec:VNPdirect}

Recall that in Section\nobreakspace \ref {sec:secondmain} we proved that
\begin{equation}\label{eq:vnp1vnp}
\VNPk{1}^\gen = \VNP
\end{equation}
using the completeness of the permanent (\protect \MakeUppercase {T}heorem\nobreakspace \ref {VNPmain}). We will present an alternative proof of~\eqref{eq:vnp1vnp} inspired by the proof of the following theorem by Ben-Or and Cleve. The alternative proof of~\eqref{eq:vnp1vnp} has the benefit that it can be extended to show a slightly stronger result, see \protect \MakeUppercase {T}heorem\nobreakspace \ref {thm:secondmainstronger}. 

\begin{theorem}[Ben-Or and Cleve, \cite{ben1992computing}]\label{benorcleve}
For $k\geq 3$, $\VPk{k}^* = \VPe$.
\end{theorem}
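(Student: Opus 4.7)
The statement contains two inclusions: the routine $\VPk{k}^* \subseteq \VPe$ for all $k$ and the substantive $\VPe \subseteq \VPk{3}^\weakest$ of Ben--Or and Cleve. For the easy direction one uses that the value of a width-$k$ ABP of size $s$ equals a fixed entry of an iterated product $M_r \cdots M_1$ of at most $r \le s$ matrices in $\FF[\xvars]^{k \times k}$ whose entries are affine linear forms. Splitting the product in two balanced halves $(M_r\cdots M_{\lceil r/2\rceil +1})(M_{\lceil r/2\rceil}\cdots M_1)$, recursively computing formulas for each of the $k^2$ entries of both halves, and combining them via the definition of matrix multiplication gives formulas of size at most $r^{O(\log k)}$ for every entry, which is polynomial in~$s$. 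Since $\VPk{k}^\weakest \subseteq \VPk{k}^\weak \subseteq \VPk{k}^\gen$, this handles all three edge-label variants.

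For the harder inclusion $\VPe \subseteq \VPk{3}^\weakest$, first apply Brent's depth-reduction theorem (cited already as \cite{brent1974parallel}) to assume that the formula computing $f_n$ has size $\poly(n)$ and depth $O(\log n)$. Represent each intermediate polynomial $g$ of the formula by one of the six elementary matrices
\[
E_{ij}(g) \;\coloneqq\; I_3 + g\cdot e_{ij}, \qquad i\neq j,
\]
where $e_{ij}$ is the $3\times 3$ matrix unit. The plan is to construct, recursively on the formula tree, a width-3 $\weakest$-ABP whose associated matrix product equals $E_{ij}(g)$ for some pair $(i,j)$. At a leaf labeled by a variable $x$ or a constant $c$, the matrix $E_{ij}(x)$ or $E_{ij}(c)$ is already a single primitive matrix of a width-3 $\weakest$-ABP (diagonal entries $1$, one off-diagonal entry $x$ or $c$, all other entries~$0$).

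The inductive step uses the two classical identities in $\mathrm{SL}_3$:
\[
E_{ij}(g_1)\cdot E_{ij}(g_2) \;=\; E_{ij}(g_1+g_2), \qquad
E_{ij}(-g_1)\,E_{jk}(-g_2)\,E_{ij}(g_1)\,E_{jk}(g_2) \;=\; E_{ik}(g_1 g_2),
\]
valid whenever $i,j,k$ are pairwise distinct in $\{1,2,3\}$. For an addition gate $g=g_1+g_2$ we concatenate the two sub-ABPs; for a multiplication gate we use the commutator identity after appropriately choosing indices and inverting the sign of the argument (note that $E_{ij}(-g)$ is built from the same recipe as $E_{ij}(g)$, replacing each constant $c$ by $-c$ and each variable $x$ by multiplying one edge by $-1$, both of which cost nothing in a $\weakest$-ABP since $-1 \in \FF$). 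Letting $T(d)$ denote the worst-case number of primitive matrices required for any $E_{ij}$ built from a depth-$d$ formula, the identities give $T(d) \le 4\, T(d-1) + O(1)$, hence $T(d) = O(4^d)$. With $d = O(\log n)$ this is $\poly(n)$.

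Finally, to read off $f_n$ from the $E_{ij}(f_n)$-ABP one prepends and appends two constant edges that project onto the $(i,j)$-entry (e.g., multiply by the row $(1,0,0)$ and subtract $1$ to kill the diagonal 1, which can be absorbed into the ABP by an extra addition). For $k > 3$ one pads every primitive $3\times 3$ matrix by a $(k-3)\times(k-3)$ identity block, giving a width-$k$ $\weakest$-ABP of the same size. The main obstacle in carrying out this plan cleanly is the index bookkeeping in the commutator step: one must argue that whichever pair $(i,j)$ the sub-ABP produces can, by relabeling layers (equivalently, conjugating by a fixed permutation matrix, which is itself a constant primitive matrix), be turned into the specific pair required by the commutator identity, so the recursion composes without blowing up the constant in $T(d)$.
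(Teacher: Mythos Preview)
Your approach is essentially identical to the paper's: both use Brent's depth reduction followed by an inductive construction with $3\times 3$ elementary matrices $E_{ij}(g)=I_3+g\,e_{ij}$, the identity $E_{ij}(g_1)E_{ij}(g_2)=E_{ij}(g_1+g_2)$ for addition, and the commutator relation for multiplication (the paper writes the latter as an explicit product involving permutation and diagonal conjugations of the single matrix $M(h)=E_{21}(h)$ rather than tracking general index pairs), yielding the same $O(4^d)$ size bound. Two small wording wobbles aside---no diagonal $1$ needs ``killing'' since you read off entry $(i,j)$ with $i\neq j$, and negating $g$ is cleanest via conjugating by a diagonal $\pm1$ matrix rather than the recipe you sketch---the argument is correct.
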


\begin{proof} 
Proposition\nobreakspace \ref {prop:vpkinvpe} says that $\VPk{k}^* \subseteq \VPe$.
We will prove that $\VPe \subseteq \VPk{3}^\weakest$, from which it follows that $\VPe \subseteq \VPk{k}^*$ and thus $\VPk{k}^* = \VPe$. For a polynomial~$h$, define the matrix
\[
M(h) \coloneqq \begin{pmatrix}
1 & 0 & 0\\
h & 1 & 0\\
0 & 0 & 1\\
\end{pmatrix}
\]
which, as part of an ABP, looks like
\[
\begin {tikzpicture}
\node[state] (11) {};
\node[state] (12) [right=of 11] {};
\node[state] (13) [right=of 12] {};
\node[state] (21) [below=of 11] {};
\node[state] (22) [right=of 21] {};
\node[state] (23) [below=of 13] {};
\path (11) edge node {$h$} (22);
\path (12) edge (22);
\path (13) edge (23);
\path (11) edge (21);
\end{tikzpicture}
\]
We call the following matrices \defin{primitive}:
\begin{itemize}
\item $M(h)$ with $h$ any variable or any constant in $\FF$
\item every $3\times 3$ permutation matrix $M_\pi$ with $\pi \in S_3$ any permutation
\item every diagonal matrix $M_{a,b,c} \coloneqq \diag(a,b,c)$ with $a,b,c$ any constants in $\FF$
\end{itemize}
The entries of the primitives are variables or constants in $\FF$, making them suitable to use in the construction of a width-3 $\weakest$-ABP (\protect \MakeUppercase {D}efinition\nobreakspace \ref {modelsubsets}).

Let $(f_n) \in \VPe$. Then $f_n$ can be computed by a formula of size $s(n) \in \poly(n)$.
By Brent's depth-reduction theorem for formulas (\cite{brent1974parallel}) $f_n$ can then also be computed by a formula of size $\poly(n)$ and depth $d(n) \in \Oh(\log n)$.

We will construct a sequence of primitives $A_1, \ldots, A_{m(n)}$ such that
\[
A_1 \cdots A_{m(n)} =
\begin{pmatrix}
1 & 0 & 0\\
f_n & 1 & 0\\
0 & 0 & 1
\end{pmatrix}
\]
with $m(n) \in \Oh(4^{d(n)}) = \poly(n)$. Then
\[
f_n(\xvars) = \begin{psmallmatrix}1 & 1 & 1\end{psmallmatrix} M_{-1,1,0} A_1 \cdots A_m \begin{psmallmatrix}1 \\ 1 \\ 1\end{psmallmatrix},
\]
so $f_n(\xvars)$ can be computed by a width-3 $\weakest$-ABP of size $\poly(n)$, proving the theorem.

To explain the construction, let $h$ be a polynomial and consider a formula computing~$h$ of depth $d$. The goal is to construct (recursively on the formula structure) primitives $A_1, \ldots, A_m$ such that
\begin{equation}\label{eq:goal}
A_1\cdots A_m = 
\begin{pmatrix}
1 & 0 & 0\\
h & 1 & 0\\
0 & 0 & 1
\end{pmatrix}
\quad \textnormal{with $m \in \Oh(4^d)$.}
\end{equation}

Suppose $h$ is a variable or a constant. Then $M(h)$ is itself a primitive matrix.

Suppose $h = f+g$ is a sum of two polynomials $f,g$ and suppose $M(f)$ and $M(g)$ can be written as a product of primitives. Then $M(f+g)$ equals a product of primitives, because $M(f+g) = M(f)M(g)$. This can easily be verified directly, or by noting that in the corresponding partial ABPs the top-bottom paths ($u_i$-$v_j$ paths) have the same value:
\[
\begin{minipage}{7em}
\begin {tikzpicture}
\node[state] (11) [label={$u_1$}]{};
\node[state] (12) [right=of 11, label={$u_2$}] {};
\node[state] (13) [right=of 12, label={$u_3$}] {};

\node[state] (21) [below=of 11] {};
\node[state] (22) [right=of 21] {};
\node[state] (23) [below=of 13] {};

\node[state] (31) [below=of 21, label={-90:$v_1$}] {};
\node[state] (32) [below=of 22, label={-90:$v_2$}] {};
\node[state] (33) [below=of 23, label={-90:$v_3$}] {};

\path (11) edge node {$f$} (22);
\path (12) edge (22);
\path (13) edge (23);
\path (23) edge (33);
\path (21) edge node {$g$} (32);

\path (11) edge (21);
\path (21) edge (31);
\path (22) edge (32);
\end{tikzpicture}
\end{minipage}
\quad\sim\quad
\begin{minipage}{7em}
\begin {tikzpicture}
\node[state] (11) [label={$u_1$}]{};
\node[state] (12) [right=of 11, label={$u_2$}] {};
\node[state] (13) [right=of 12, label={$u_3$}] {};

\node[state] (21) [below=of 11, label={-90:$v_1$}] {};
\node[state] (22) [below=of 12, label={-90:$v_2$}] {};
\node[state] (23) [below=of 13, label={-90:$v_3$}] {};
\path (11) edge node {$f\!+\!g$} (22);
\path (12) edge (22);
\path (13) edge (23);
\path (11) edge (21);
\end{tikzpicture}
\end{minipage}
\]

Suppose $h = fg$ is a product of two polynomials $f,g$ and suppose $M(f)$ and $M(g)$ can be written as a product of primitives. Then $M(fg)$ equals a product of primitives, because
\[
M(f\cdot g) = M_{(23)}  \bigl(M_{1,-1,1} M_{(123)} M(g) M_{(132)} M(f)\bigr)^2 M_{(23)}
\]
(here $(23)\in S_3$ denotes the transposition $1\mapsto1,2\mapsto3,3\mapsto2$ and $(123)\in S_3$ denotes the cyclic shift $1\mapsto 2, 2\mapsto3, 3\mapsto1$) as can be verified either directly or by checking that in the corresponding partial ABPs the top-bottom paths ($u_i$-$v_j$ paths) have the same value:
\[
\begin{minipage}{7em}
\begin {tikzpicture}
\node[state] (01) [label={$u_1$}] {};
\node[state] (02) [right=of 01, label={$u_2$}] {};
\node[state] (03) [right=of 02, label={$u_3$}] {};
\node[state] (11) [below=of 01] {};
\node[state] (12) [below=of 02] {};
\node[state] (13) [below=of 03] {};
\node[state] (21) [below=of 11] {};
\node[state] (22) [below=of 12] {};
\node[state] (23) [below=of 13] {};
\node[state] (31) [below=of 21] {};
\node[state] (32) [below=of 22] {};
\node[state] (33) [below=of 23] {};
\node[state] (41) [below=of 31] {};
\node[state] (42) [below=of 32] {};
\node[state] (43) [below=of 33] {};
\node[state] (51) [below=of 41] {};
\node[state] (52) [below=of 42] {};
\node[state] (53) [below=of 43] {};
\node[state] (61) [below=of 51] {};
\node[state] (62) [below=of 52] {};
\node[state] (63) [below=of 53] {};
\node[state] (71) [below=of 61] {};
\node[state] (72) [below=of 62] {};
\node[state] (73) [below=of 63] {};
\node[state] (81) [below=of 71, label={-90:$v_1$}] {};
\node[state] (82) [below=of 72, label={-90:$v_2$}] {};
\node[state] (83) [below=of 73, label={-90:$v_3$}] {};
\path (01) edge (11);
\path (02) edge (13);
\path (03) edge (12);
\path (11) edge node {$f$} (22);
\path (12) edge (22);
\path (13) edge (23);
\path (23) edge (33);
\path (22) edge (32);
\path (32) edge node[left,overlay] {$-1$} (42);
\path (33) edge (43);
\path (22) edge node {$g$} (33);
\path (11) edge (21);
\path (21) edge (31);
\path (31) edge (41);
\path (41) edge (52);
\path (62) edge (72);
\path (41) edge (51);
\path (41) edge node {$f$} (52);
\path (42) edge (52);
\path (43) edge (53);
\path (51) edge (61);
\path (52) edge (62);
\path (52) edge node {$g$} (63);
\path (61) edge (71);
\path (62) edge node[left] {$-1$} (72);
\path (63) edge (73);
\path (53) edge (63);
\path (63) edge (73);
\path (71) edge (81);
\path (72) edge (83);
\path (73) edge (82);
\end{tikzpicture}
\end{minipage}
\quad\sim\quad
\begin{minipage}{7em}
\begin {tikzpicture}
\node[state] (11) [label={$u_1$}]{};
\node[state] (12) [right=of 11, label={$u_2$}] {};
\node[state] (13) [right=of 12, label={$u_3$}] {};
\node[state] (21) [below=of 11, label={-90:$v_1$}] {};
\node[state] (22) [below=of 12, label={-90:$v_2$}] {};
\node[state] (23) [below=of 13, label={-90:$v_3$}] {};
\path (11) edge node {$f\!\cdot\! g$} (22);
\path (12) edge (22);
\path (13) edge (23);
\path (11) edge (21);
\end{tikzpicture}
\end{minipage}
\]
This completes the construction.

The length $m$ of the construction is $m(h) = 1$ for $h$ a variable or constant and recursively $m(f+g) = m(f) + m(g)$, $m(f\cdot g) = 2(m(f) + m(g)) + \Oh(1)$, so $m\in \Oh(4^d)$ where $d$ is the formula depth of $h$. The construction thus satisfies \eqref{eq:goal}, proving the theorem.
\end{proof}

We will now give an alternative proof of \protect \MakeUppercase {T}heorem\nobreakspace \ref {VNPmain}.
{
\renewcommand{\thetheorem}{\ref{VNPmain}}
\begin{theorem}[repeated]
$\VNP_{1} = \VNP$ when $\characteristic(\FF) \neq 2$.
\end{theorem}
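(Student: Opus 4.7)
The plan is to prove the theorem by simulating, within $\VNPk{1}$, the primitives used in the proof of $\VPe = \VPk{3}^{\weakest}$ of Ben-Or and Cleve (Theorem~\ref{benorcleve}). Since $\VNP = \VNPe$ by Theorem~\ref{valiant} and since $\class{C}\mapsto\N(\class{C})$ is an idempotent closure operator, it suffices to show the inclusion $\VPe \subseteq \VNPk{1}$; the reverse inclusion is standard. Given $(f_n)\in\VPe$, Theorem~\ref{benorcleve} presents $f_n$ as a polynomial-size iterated product of $3\times3$ Ben-Or-Cleve primitives of three types: $M(h)$ with $h$ a variable or constant, a permutation matrix $M_\pi$, and a diagonal matrix $M_{a,b,c}$.

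First I would expand this iterated matrix product into its path sum
\[
f_n = \sum_{i_0,\ldots,i_{m(n)}\in\{1,2,3\}}\ \prod_{j=1}^{m(n)} (A_j)_{i_{j-1},i_j}
\]
(after absorbing the boundary vectors $(1,1,1)^T$) and associate to every layer $j$ a triple of hypercube bits $a^{(j)}_1,a^{(j)}_2,a^{(j)}_3$, intended as a one-hot encoding of the index $i_j$. Replacing each selector $[i_j=q]$ by $a^{(j)}_q$ converts the path sum into a hypercube summation, but the summand is a product, over layers, of per-layer factors that are themselves sums of bilinear monomials of the form $h\cdot a^{(j-1)}_p a^{(j)}_q$; this is not yet a product of affine linear forms as required by $\VNPk{1}$. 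The core of the proof is to rewrite each per-layer factor, together with the one-hot constraints that isolate the valid paths, as a hypercube summation of a product of affine linear forms. Once this local rewriting is in hand, the distributive identity $\prod_j \sum_{c_j} F_j = \sum_{c_1,\ldots,c_m}\prod_j F_j$ pushes all the fresh hypercube summations outside the product, yielding a single product of polynomially many affine linear forms inside a polynomially bounded hypercube summation.

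The central tool will be Lemma~\ref{tricklem}, which rewrites $1+xy$ as $\tfrac12\sum_{b\in\{0,1\}}(x+1-2b)(y+1-2b)$ and thereby converts a bilinear monomial into a hypercube summation of a product of two affine linear forms; this lemma is also the source of the hypothesis $\characteristic(\FF)\neq 2$. I would apply it to each bilinear term $h\cdot a^{(j-1)}_p a^{(j)}_q$ arising from a variable-valued entry of a primitive, and similarly to convert the one-hot constraints and the few-term sums inside each matrix-entry expression into the required shape. Since each Ben-Or-Cleve primitive has only $\Oh(1)$ nonzero entries and at most one variable-valued entry, this introduces only $\Oh(1)$ auxiliary hypercube bits and affine linear factors per layer, so both totals remain in $\polyf(n)$.

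The main obstacle is exactly the structural mismatch between an iterated matrix product, whose natural expansion is a sum of products of selectors, and the definition of $\VNPk{1}$, which demands a single product of affine linear forms inside the hypercube summation. Lemma~\ref{tricklem} is precisely what allows one to cross this mismatch at the level of each individual bilinear or one-hot factor, and afterwards distributing the summations out of the product is routine. A welcome side effect of this layer-local simulation is that every affine linear form that arises involves only the edge label of a single ABP edge together with a bounded number of hypercube bits, which is what later permits the strengthening to the at-most-two-variables statement of Theorem~\ref{thm:secondmainstronger}.
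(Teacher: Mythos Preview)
Your proposal is correct and follows essentially the same route as the paper's proof given immediately after the repeated statement (the ``second proof'' in Section~\ref{sec:VNPdirect}): reduce to $\VPe\subseteq\VNPk{1}$, expand the Ben-Or--Cleve width-$3$ product as a path sum, encode path indices by per-layer one-hot hypercube bits, enforce the one-hot constraint by a separate polynomial factor, and use Lemma~\ref{tricklem} to turn the remaining degree-$2$ factors into products of affine linear forms. The paper carries this out by writing, for each primitive type $M(x)$, $M_\pi$, $M_{a,b,c}$, an explicit product $S_i(\cdot)$ of factors of the shape $1+uv$ (or $\Eq(u,v)=(1-u-v)^2$) that, under the one-hot constraint, evaluates to the selected matrix entry; your sketch leaves these case formulas implicit but correctly anticipates that the sparse, highly structured primitives admit such $\Oh(1)$-size product representations, and you also correctly note the consequence for Theorem~\ref{thm:secondmainstronger}.
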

}

\begin{proof}\label{altproof}
Clearly, $\VNPk{1}^\gen \subseteq \VNP$ by Proposition\nobreakspace \ref {prop:vpkinvpe} and taking the nondeterminism closure~$\N$. We will prove that $\VNP\subseteq \VNPk{1}^\gen$.

Recall that in the proof  of $\VPe \subseteq \VPk{3}^\weakest$ (\protect \MakeUppercase {T}heorem\nobreakspace \ref {benorcleve}), we defined for any polynomial~$h$ the matrix 
\[
M(h) \coloneqq \begin{pmatrix}
1 & 0 & 0 \\
h & 1 & 0 \\
0 & 0 & 1 \\
\end{pmatrix}
\]
and we called the following matrices \emph{primitive}:
\begin{itemize}
\item $M(h)$ with $h$ any variable or any constant in $\FF$
\item every $3\times 3$ permutation matrix $M_\pi$ with $\pi \in S_3$ any permutation
\item every diagonal matrix $M_{a,b,c} \coloneqq \diag(a,b,c)$ with $a,b,c$ any constants.
\end{itemize}
In the proof of $\VPe \subseteq \VPk{3}^\weakest$ we constructed, for any family $(f_n) \in \VPe$ a sequence of primitives $A_{n,1}, \ldots, A_{n,t(n)}$ with $t(n) \in \polyf(n)$ such that 
\[
f_n(\xvars) = \begin{psmallmatrix}1 & 1 & 1\end{psmallmatrix} M_{-1,1,0} A_1 \cdots A_m \begin{psmallmatrix}1 \\ 1 \\ 1\end{psmallmatrix}.
\]
We will construct a hypercube sum over a width-1 $\gen$-ABP that evaluates the right-hand side, to show that $\VPe \subseteq \VNPk{1}^\gen$. This implies $\VNPe \subseteq \VNPk{1}^\gen$. Then by Valiant's \protect \MakeUppercase {T}heorem\nobreakspace \ref {valiant}, $\VNP\subseteq \VNPk{1}^\gen$.

Let $f(\xvars)$ be a polynomial and let $A_1, \ldots, A_k$ be primitives such that $f(\xvars)$ is computed as 
\[
f(\xvars) = \begin{psmallmatrix}1 & 1 & 1\end{psmallmatrix}  A_k \cdots A_1 \begin{psmallmatrix}1 \\ 1 \\ 1\end{psmallmatrix}.
\]
View this expression as a width-3 ABP $G$, with vertex layers labeled as shown in the left diagram of Fig.\nobreakspace \ref {vnpfig}.
\begin{figure}
\centering
\begin{minipage}{18em}
\begin {tikzpicture}
\node (01) {};
\node[state] (02) [right=of 01, label={$s$}] {};
\node (03) [right=of 02] {};

\node[state] (11) [below=of 01] {};
\node[state] (12) [below=of 02] {};
\node[state] (13) [below=of 03] {};
\node (0) [left=of 11] {$0$};

\node[state] (21) [below=of 11] {};
\node[state] (22) [below=of 12] {};
\node[state] (23) [below=of 13] {};
\node (1) [left=of 21] {$1$};

\node[state] (31) [below=of 21] {};
\node[state] (32) [below=of 22] {};
\node[state] (33) [below=of 23] {};
\node (2) [left=of 31] {$2$};

\node[state] (41) [below=of 31] {};
\node[state] (42) [below=of 32] {};
\node[state] (43) [below=of 33] {};
\node (3) [left=of 41] {$k\compactminus1$};

\node[state] (51) [below=of 41] {};
\node[state] (52) [below=of 42] {};
\node[state] (53) [below=of 43] {};
\node (4) [left=of 51] {$k$};

\node (61) [below=of 51] {};
\node[state] (62) [below=of 52, label={-90:$t$}] {};
\node (63) [below=of 53] {};

\path (02) edge (11);
\path (02) edge (13);
\path (02) edge (12);

\path (11) edge (21);
\path (11) edge (22);
\path (11) edge (23);
\path (12) edge (21);
\path (12) edge (22);
\path (12) edge (23);
\path (13) edge (21);
\path (13) edge (22);
\path (13) edge node[right=1em]{$A_1$} (23);

\path (21) edge (31);
\path (21) edge (32);
\path (21) edge (33);
\path (22) edge (31);
\path (22) edge (32);
\path (22) edge (33);
\path (23) edge (31);
\path (23) edge (32);
\path (23) edge node[right=1em]{$A_2$} (33);

\path (31) edge[dashed] (41);
\path (32) edge[dashed] (42);
\path (33) edge[dashed] (43);

\path (41) edge (51);
\path (41) edge (52);
\path (41) edge (53);
\path (42) edge (51);
\path (42) edge (52);
\path (42) edge (53);
\path (43) edge (51);
\path (43) edge (52);
\path (43) edge node[right=1em]{$A_k$} (53);

\path (51) edge (62);
\path (52) edge (62);
\path (53) edge (62);
\end{tikzpicture}
\end{minipage}
\begin{minipage}{10em}
\begin {tikzpicture}
\node (01) {};
\node[state] (02) [right=of 01, label={$s$}] {};
\node (03) [right=of 02] {};

\node[state] (11) [below=of 01] {};
\node[state] (12) [below=of 02] {};
\node[state] (13) [below=of 03] {};
\node (0) [right=of 13] {1\,0\,0};

\node[state] (21) [below=of 11] {};
\node[state] (22) [below=of 12] {};
\node[state] (23) [below=of 13] {};
\node (1) [right=of 23] {0\,1\,0};

\node[state] (31) [below=of 21] {};
\node[state] (32) [below=of 22] {};
\node[state] (33) [below=of 23] {};
\node (2) [right=of 33] {0\,1\,0};

\node[state] (41) [below=of 31] {};
\node[state] (42) [below=of 32] {};
\node[state] (43) [below=of 33] {};
\node (3) [right=of 43] {0\,0\,1};

\node[state] (51) [below=of 41] {};
\node[state] (52) [below=of 42] {};
\node[state] (53) [below=of 43] {};
\node (4) [right=of 53] {0\,1\,0};

\node (61) [below=of 51] {};
\node[state] (62) [below=of 52, label={-90:$t$}] {};
\node (63) [below=of 53] {};

\path (02) edge[gray] (13);
\path (02) edge[gray] (12);
\path (02) edge[line width=1.5pt, shorten <=-1pt, shorten >= -1pt] (11);

\path (11) edge[gray] (21);
\path (11) edge[gray] (23);
\path (12) edge[gray] (21);
\path (12) edge[gray] (22);
\path (12) edge[gray] (23);
\path (13) edge[gray] (21);
\path (13) edge[gray] (22);
\path (13) edge[gray] (23);
\path (11) edge[line width=1.5pt, shorten <=-1pt, shorten >= -1pt] (22);

\path (21) edge[gray] (31);
\path (21) edge[gray] (32);
\path (21) edge[gray] (33);
\path (22) edge[gray] (31);
\path (22) edge[gray] (33);
\path (23) edge[gray] (31);
\path (23) edge[gray] (32);
\path (23) edge[gray] (33);
\path (22) edge[line width=1.5pt, shorten <=-1pt, shorten >= -1pt] (32);

\path (31) edge[dashed, gray] (41);
\path (32) edge[dashed, gray] (42);
\path (33) edge[dashed, gray] (43);

\path (41) edge[gray] (51);
\path (41) edge[gray] (52);
\path (41) edge[gray] (53);
\path (42) edge[gray] (51);
\path (42) edge[gray] (52);
\path (42) edge[gray] (53);
\path (43) edge[gray] (53);
\path (43) edge[line width=1.5pt, shorten <=-1pt, shorten >= -1pt] (52);

\path (51) edge[gray] (62);
\path (53) edge[gray] (62);
\path (52) edge[line width=1.5pt, shorten <=-1pt, shorten >= -1pt] (62);
\end{tikzpicture}
\end{minipage}
\caption{Illustration of layer labelling and path labelling in the proof of \protect \MakeUppercase {T}heorem\nobreakspace \ref {VNPmain}.}\label{vnpfig}
\end{figure}
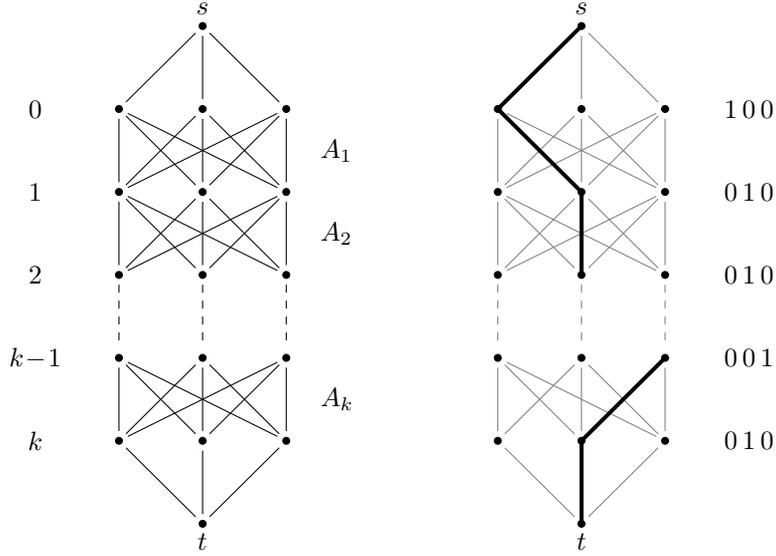
Assume for simplicity that all edges between layers are present, possibly with label 0. The sum of the values of every $s$-$t$ path in $G$ equals $f(\xvars)$,
\begin{equation}\label{eq:paths}
f(\xvars) = \sum_{\mathclap{j \in [3]^k}} A_k[j_k,j_{k-1}] \cdots A_1[j_2, j_1].
\end{equation}

We now introduce some hypercube variables. To every vertex, except $s$ and $t$, we associate a bit; the bits in the $i$th layer we call $b_1[i]$, $b_2[i]$, $b_3[i]$. To an $s$-$t$ path in $G$ we associate an assignment of the $b_j[i]$ by setting the bits of vertices visited by the path to 1 and the others to 0. For example, in the right diagram in Fig.\nobreakspace \ref {vnpfig} we show an $s$-$t$ path with the corresponding assignment of the bits $b_1[i]$, $b_2[i]$, and $b_3[i]$.
The assignments of $b_j[i]$ corresponding to $s$-$t$ paths are the ones such that for every $i\in [k]$ exactly one of $b_1[i]$, $b_2[i]$, $b_3[i]$ equals 1. Let
\begin{equation}\label{Vdef}
V(b_1, b_2, b_3) \coloneqq \prod_{i\in [k]} (b_1[i]+b_2[i]+b_3[i]) \prod_{\mathclap{\substack{s,t\in [3]:\\s\neq t}}} \bigl(1-b_s[i] b_t[i]\bigr).
\end{equation}
The assignments of $b_j[i]$ corresponding to $s$-$t$ paths are thus the ones such that $V(b_1,b_2,b_3) = 1$. Otherwise, $V(b_1, b_2, b_3) = 0$.

We will now write $f(\xvars)$ as a hypercube sum by replacing each $A_i[j_i, j_{i-1}]$ in \eqref{eq:paths} by a product of affine linear forms $S_i(A_i)$ with variables $\bvars$ and $\xvars$ as follows
\[
\sum_{\bvars} V(b_1, b_2, b_3) S_k(A_k) \cdots S_1(A_1).
\]
Define $\Eq(\alpha, \beta) : \{0,1\}^2 \to \{0,1\}$ by $(1-\alpha-\beta)(1-\alpha-\beta)$. This function is 1 if $\alpha=\beta$ and~0 otherwise.

\begin{itemize}
\item
For any variable or constant $x$ define
\begin{align*}
\Sop_i(M(x)) &\coloneqq \bigl(1+ (x-1)(b_1[i] - b_1[i\compactminus 1])\bigr)\\
&\quad \cdot \bigl(1-(1-b_2[i]) b_2[i\compactminus 1]\bigr)\\
&\quad \cdot \Eq\bigl(b_3[i\compactminus 1], b_3[i]\bigr).
\end{align*}
\item
For any permutation $\pi \in S_3$ define
\begin{align*}
\Sop_i(M_\pi) &\coloneqq \Eq\bigl(b_1[i\compactminus 1], b_{\pi(1)}[i]\bigr)\\
&\quad\cdot\Eq\bigl(b_2[i\compactminus 1], b_{\pi(2)}[i]\bigr)\\ &\quad\cdot\Eq\bigl(b_3[i\compactminus1], b_{\pi(3)}[i]\bigr).
\end{align*}
\item 
For any constants $a,b,c \in \FF$ define
\begin{align*}
\Sop_i(M_{a,b,c})  &\coloneqq \bigl( a\cdot b_1[i\compactminus1] + b\cdot b_2[i\compactminus1] + c\cdot b_3[i-1]\bigr)\\
&\quad \cdot\Eq\bigl(b_1[i\compactminus1], b_1[i]\bigr)\\
&\quad \cdot\Eq\bigl(b_2[i\compactminus1], b_2[i]\bigr)\\ 
&\quad \cdot\Eq\bigl(b_3[i\compactminus1], b_3[i]\bigr).
\end{align*}
\end{itemize}
One verifies that with these definitions indeed
\[
f(\xvars) = \sum_{\bvars} V(b_1, b_2, b_3) S_k(A_k) \cdots S_1(A_1).
\]
Some of the factors in the $S_i(A_i)$ are not affine linear. As a final step we apply the equation
$1 + xy = \tfrac{1}{2} \sum_{c\in \{0,1\}} (x + 1-2c)(y+1-2c)$ (\protect \MakeUppercase {L}emma\nobreakspace \ref {tricklem})
to write these factors as products of affine linear forms, introducing new hypercube variables.
\end{proof}

Combining \protect \MakeUppercase {T}heorem\nobreakspace \ref {VNPmain} and \protect \MakeUppercase {R}emark\nobreakspace \ref {vnpstrict} gives the separation $\VNP_1^\weak \subsetneq \VNP_1^\gen = \VNP$. 
We can prove a slightly stronger separation by adjusting the construction in the above proof of \protect \MakeUppercase {T}heorem\nobreakspace \ref {VNPmain}. 
Namely, let $\SSS^+ \coloneqq \{\alpha x_i + \beta x_j + \gamma \mid \alpha, \beta, \gamma \in \FF\}$ be the set of affine linear forms in at most two variables and let $\VPk{1}^{\weak+}$ be the class of families that can be computed by width\nobreakdash-1 ABPs over $\mathbb{S}^+$ of polynomial size. 
Define $\VNP_1^{\weak+}$ accordingly (\protect \MakeUppercase {D}efinition\nobreakspace \ref {Ndef}). 
Then we can adjust the construction in the above proof of \protect \MakeUppercase {T}heorem\nobreakspace \ref {VNPmain} to show the following.

\begin{theorem}\label{thm:secondmainstronger}
$\VNP_1^\weak \subsetneq \VNP_1^{\weak+} = \VNP$ when $\characteristic(\FF) \neq 2$.
\end{theorem}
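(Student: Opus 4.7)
For the strict inclusion $\VNP_1^\weak \subsetneq \VNP_1^{\weak+}$ I would reuse the polynomial $p(x_1,x_2) = 2x_1 x_2 + x_1 + x_2 + 1 = \sum_{b \in \{0,1\}}(x_1+b)(x_2+b)$ from Proposition~\ref{VP1isVNP1weakest}. Each factor in the product uses at most two variables, so this witnesses $p \in \VNP_1^{\weak+}$. Proposition~\ref{VP1isVNP1weakest} shows $\VNP_1^\weak = \VPk{1}^\weak \subseteq \VPk{1}^\gen$ and that $p$ is not a product of general affine linear forms, so $p \notin \VNP_1^\weak$.

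\textbf{Equality.} The inclusion $\VNP_1^{\weak+} \subseteq \VNP$ is immediate since $\SSS^+ \subseteq \LL$ and by Theorem~\ref{VNPmain}. For the converse $\VNP \subseteq \VNP_1^{\weak+}$ my plan is to revisit the alternative proof of Theorem~\ref{VNPmain} step by step and verify that, after the adjustments below, every affine linear form in the final hypercube expression $f_n(\xvars) = \sum_\bvars V(\bvars) \prod_i \Sop_i(A_i)$ uses at most two variables. A quick inspection reveals three places where the original construction exceeds two variables per form: (i)~the ``at least one'' summand $b_1[i] + b_2[i] + b_3[i]$ in $V(\bvars)$; (ii)~the diagonal summand $a\, b_1[i-1] + b\, b_2[i-1] + c\, b_3[i-1]$ in $\Sop_i(M_{a,b,c})$; and (iii)~the linear form $b_1[i] - b_1[i-1] + 1 - 2d$ produced when Lemma~\ref{tricklem} is applied to the first sub-factor of $\Sop_i(M(x))$. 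For (i) and (ii), since the remaining 2-variable factors of $V$ already enforce that at most one $b_j[i]$ equals $1$, I would introduce two fresh hypercube bits $e_1[i], e_2[i]$ binary-encoding the active vertex at layer $i$ together with 2-variable ``identification'' factors between the $e$'s and the $b$'s; under this setup both 3-variable linear forms rewrite locally as hypercube sums of products of 2-variable forms. For (iii), I would apply the algebraic identity
\[
1 + u(a - a') = (1 + u a)(1 - u a') + u^2 a a'
\]
with $u = x-1$ and $a, a'$ standing for $b_1[i-1], b_1[i]$, splitting the offending factor into two summands. Each summand is a product of 2-variable bilinear expressions, so another application of Lemma~\ref{tricklem} turns it into a hypercube sum of products of 2-variable affine linear forms.

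\textbf{Main obstacle.} The delicate step is (iii): after the split one obtains a sum $P + Q$ of two pieces each expressible in $\VNP_1^{\weak+}$-form, and these must be recombined into a single hypercube sum of products of 2-variable forms. The naive selector identity $sP + (1-s)Q = P + Q$ produces summands that are sums rather than products. My plan is to use the Boolean identity $\prod_j (s \ell_j + (1-s)) = s \prod_j \ell_j + (1-s)$, valid for $s \in \{0,1\}$, which converts each local sum-of-products back into a product in which the selector bit $s$ is confined to a single linear form per factor; a further application of Lemma~\ref{tricklem} absorbs any new 3-variable forms $s \ell_j + (1-s)$ into hypercube sums of 2-variable forms. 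Once every local sub-factor of $V(\bvars) \prod_i \Sop_i(A_i)$ has been rewritten as a hypercube sum of products of 2-variable forms, Fubini collapses these local sums into one global hypercube sum, yielding the desired $\VNP_1^{\weak+}$-representation of any $\VNP$ family.
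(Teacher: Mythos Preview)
Your strictness argument and the identification of the three offending sub-expressions match the paper exactly. The gap is in how you repair expression~(iii).

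Your selector trick does produce $P+Q=\sum_{s}\prod_j\bigl(s\,p_j+(1-s)\bigr)\prod_k\bigl((1-s)q_k+s\bigr)$, and the $Q$-factors stay two-variable because the $q_k$ are one-variable. But the $P$-factors $p_j$ coming out of Lemma~\ref{tricklem} are already two-variable (one ``real'' variable and one fresh hypercube bit $c$), so $s\,p_j+(1-s)=1+s(p_j-1)$ is three-variable. Your fix is another application of Lemma~\ref{tricklem}: $\tfrac12\sum_d(s+1-2d)(p_j-2d)$. The first factor is fine, but $p_j-2d$ now carries \emph{three} variables (the two of $p_j$ plus $d$). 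You are back where you started, and nothing in your argument breaks this regress. The same issue would bite any attempt to push the selector bit through once the $p_j$ already carry a hypercube variable.

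The paper avoids this entirely by introducing a three-term analogue of Lemma~\ref{tricklem},
\[
\tfrac12\sum_{b\in\{0,1\}}(x+1-2b)(y+1-2b)(z+1-2b)=x+y+z+xyz,
\]
and applying it directly to each three-variable linear form. Each resulting factor has exactly two variables (one of $x,y,z$ plus the fresh bit $b$), so there is no regress. The price is a spurious cubic term $xyz$; the key observation is that, for all three occurrences (i), (ii), (iii), this extra term is annihilated by factors already present in $V(b_1,b_2,b_3)$ or in $\Sop_i(M(x))$ (for instance $\prod_{s\neq t}(1-b_s[i]b_t[i])$ kills the extra term in (i) and (ii), and the factors $1-(1-b_2[i])b_2[i{-}1]$ and $\Eq(b_3[i{-}1],b_3[i])$ kill it in (iii)). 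Your binary-encoding idea for (i) and (ii) might be made to work, but it is not spelled out, and in any case the three-variable identity handles all three spots uniformly and with no recursion.
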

\begin{proof}
We only need to show $\VNPk{1}^{\weak+} = \VNP$, as $\VNPk{1}^\weak \subsetneq \VNP$ was shown in \protect \MakeUppercase {R}emark\nobreakspace \ref {vnpstrict}.
The adjustments we have to make to the construction in the proof  
of \protect \MakeUppercase {T}heorem\nobreakspace \ref {VNPmain} are as follows.
Most of the resulting polynomial of the construction is already of the correct form where each linear forms contains at most two variables, since the expression $\Eq(x,y) = (1-x-y)^2$ and the expression $1 + xy = \tfrac{1}{2} \sum_{c\in \{0,1\}} (x + 1-2c)(y+1-2c)$ are of this form.
Three expressions occur that are not of the correct form:
\begin{enumerate}
\item $b_1[i] + b_2[i] + b_3[i]$ \quad in $V(b_1, b_2, b_3)$,
\item $a\cdot b_1[i\compactminus1] + b\cdot b_2[i\compactminus1] + c\cdot b_3[i-1]$\quad in $\Sop(M_{a,b,c})$, and
\item $1 + (x-1)(b_1[i] - b_1[i\compactminus1])$\quad in $\Sop(M(x))$
\end{enumerate}
Expression 1 and expression 2 we can write in the correct form using the identity
\begin{equation}\label{tripletrick}
\tfrac{1}{2} \sum_{\mathclap{b\in \{0,1\}}} (x + 1-2b)(y + 1-2b)(z + 1-2b) = x + y + z + xyz.
\end{equation}
Indeed, expression 1 can be replaced by
\begin{align*}
&\tfrac{1}{2} \sum_{\mathclap{c\in \{0,1\}}} (b_1[i] + 1-2c)(b_2[i] + 1-2c)(b_3[i] + 1-2c)\\ &\quad= b_1[i] + b_2[i] + b_3[i] + b_1[i]b_2[i]b_3[i],
\end{align*}
since the unwanted term $b_1[i]b_2[i]b_3[i]$ will always vanish in our construction (because in \eqref{Vdef} we multiply with $1-b_s[i] b_t[i]$ for every $s\neq t$). Similarly for expression 2.

For expression 3, we first replace the expression $1+(x-1)(b_1[i] - b_1[i\compactminus1])$ by the expression $\tfrac{1}{2} \sum_{c\in \{0,1\}} (x-1 + 1-2c)(b_1[i] - b_1[i\compactminus 1] + 1 - 2c)$. The second factor has too many variables. We replace it, using identity \eqref{tripletrick}, by 
\begin{align*}
&\tfrac{1}{2} \sum_{\mathclap{c'\in \{0,1\}}} \bigl(b_1[i] + 1-2c'\bigr)\bigl(- b_1[i\compactminus1] + 1 + 1-2c' \bigr)\bigl(-2c + 1-2c'\bigr)\\
&\quad = b_1[i] - b_1[i\compactminus1] + 1 - 2c +  b_1[i]\bigl(1-b_1[i\compactminus1]\bigr)\bigl(-2c\bigr).
\end{align*}
The first four summands in the right-hand side are as we want. The last summand is only nonzero if $b_1[i] = 1$ and $b_1[i\compactminus1] = 0$. However, since $\Sop_i(M(x))$ contains a factor $1-(1-b_2[i])b_2[i\compactminus1]$ and a factor $\Eq(b_3[i\compactminus1],b_3[i])$, it can be checked that this last summand will always vanish.

In the new construction thus obtained each linear form is in $\SSS^+$. This completes the necessary adjustments to the construction.
\end{proof}

\section{Constant-width ABPs have small formulas}
The following well-known proposition says that the iterated product of constant-size matrices can be efficiently computed by a formula.

\begin{proposition}\label{prop:vpkinvpe}
Let $k\geq 1$. Then $\VPk{k} \subseteq \VPe$.
\end{proposition}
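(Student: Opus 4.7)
The plan is to view a width-$k$ ABP of length $L$ as an iterated product of $k\times k$ matrices $M_L \cdots M_1$ whose entries are affine linear forms, and to compute entries of this product by a standard divide-and-conquer formula. Since the value of the ABP is one specific entry of $M_L \cdots M_1$ (or, more uniformly, equals $\begin{psmallmatrix}1 & 0 & \cdots & 0\end{psmallmatrix} M_L \cdots M_1 \begin{psmallmatrix}1 \\ 0 \\ \vdots \\ 0\end{psmallmatrix}$), it suffices to bound the formula size of any single entry of such a product.

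First I would define $F(L)$ to be the maximum, over all $k \times k$ matrices $A^{(1)}, \ldots, A^{(L)}$ with affine linear form entries and all pairs $(i,j) \in [k]^2$, of the formula size needed to compute $(A^{(L)} \cdots A^{(1)})_{ij}$. For the base case $L=1$, an affine linear form can be written by a formula of size $O(n)$ (at most the number of variables $n$ of the ABP, times a constant), where $n \leq \mathrm{poly}(L)$ since there are polynomially many variables. For the recursive step, split the product as $B \cdot C$ with $B = A^{(L)} \cdots A^{(\lceil L/2\rceil + 1)}$ and $C = A^{(\lceil L/2 \rceil)} \cdots A^{(1)}$; then $(BC)_{ij} = \sum_{\ell=1}^k B_{i\ell} C_{\ell j}$, which uses $k$ entries of $B$ and $k$ entries of $C$, giving the recursion
\[
F(L) \;\leq\; 2k \cdot F(\lceil L/2 \rceil) + O(k).
\]

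Unrolling this recurrence yields $F(L) \leq (2k)^{\lceil \log_2 L \rceil} \cdot F(1) + O(k) \cdot \sum_{j} (2k)^j = L^{1 + \log_2 k} \cdot \mathrm{poly}(n) = \mathrm{poly}(L, n)$ for any fixed constant $k$. Hence if $(f_n) \in \VPk{k}$ is computed by width-$k$ ABPs of polynomially bounded size $L(n)$ on polynomially many variables, then $f_n$ admits a formula of size $\mathrm{poly}(n)$, i.e.\ $(f_n) \in \VPe$.

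There is no real obstacle here; the only thing to be slightly careful about is that since a formula is a tree and cannot reuse subcomputations, the $k$ entries of $B$ (and of $C$) needed for a single entry of $BC$ must each be recomputed from scratch, which is exactly why the factor $2k$ (rather than just $2$) appears in the recursion. Because $k$ is a fixed constant, this multiplicative blow-up is absorbed into the polynomial bound, so the construction goes through as sketched.
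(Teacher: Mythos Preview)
Your proposal is correct and follows essentially the same divide-and-conquer approach as the paper: both split the length-$L$ matrix product in half, observe that computing an entry of the product requires $k$ entries from each half (which must be recomputed since formulas cannot share subexpressions), obtain the recurrence $F(L)\le 2k\,F(\lceil L/2\rceil)+O(k)$, and conclude $F(L)=L^{O(1)}\cdot\mathrm{poly}(n)$ for fixed $k$. The only cosmetic difference is that the paper phrases the recursion in terms of a multi-output formula computing the entire $k\times k$ product matrix (and then reads off the trace), whereas you bound the formula size of a single entry directly; the recurrences and bounds are identical.
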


\begin{proof}
Let $(f_n) \in \VPk{k}$, so $f_n$ has $v(n) \in \polyf(n)$ variables.
There is a function $m(n)\in \polyf(n)$ and there are $k\times k$ matrices $M_{n,1}, \ldots, M_{n,m(n)}$ with affine linear forms as entries, such that $\tr(M_{n,1}\cdots M_{n,m(n)}) = f_n$.
We may assume that each affine linear form occurring in $M_{n,i}$ has $v(n)$ variables, since $f_n$ has $v(n)$ variables.
We will recursively construct a multi-output formula computing the product $M_{n,1}\cdots M_{n,m(n)}$.
From this one can efficiently compute the trace. Let $\size(m,n)$ denote the size of the formula that we construct. 
Let $w(n):=2v(n)$.
A single matrix $M_{n,i}$ we can compute by a multi-output formula of size~$k^2 w(n)$ (the $w(n)$ is needed to compute each affine linear form). So $\size(1,n) = k^2 w(n)$. 
Suppose that matrices $A$ and $B$ can  be computed by a multi-output formulas $F$ and $G$ respectively, each of size $s$. 
Then the product $AB$ can be computed by a multi-output formula of size $2k s + c(k)$ with $c(k)\in \Oh(k^3)$, as follows: take $k$ copies of the formula $F$ for $A$ and take~$k$ copies of the formula~$G$ for $B$ and appropriately add $c(k)$ $\times$- and $+$-gates in order to perform the matrix multiplication. 
(The reason that we take $k$ copies of the formulas~$F$ and $G$ is that in the matrix multiplication, each input entry is used~$k$ times, and in formulas we cannot use intermediate results more than once.)  
Therefore, the recurrence relation $\size(m,n) = 2k\size(m/2, n) + c(k)$ holds. 
Working out the recurrence relation gives $\size(m,n) = (2k)^{\log_2(m)} k^2 w(n) + \bigl[ (2k)^{\log_2(m)} + (2k)^{\log_2(m) -1} + \cdots + 1 \bigr] c(k)$. 
Since $v(n),m(n) \in \polyf(n)$ and since $k$ is constant in~$n$, we have $\size(m(n),n) \in \polyf(n)$. 
We can thus also compute the trace of $M_{n,1} \cdots M_{n,m(n)}$ with a $\polyf(n)$-size formula. 
This shows that $(f_n) \in \VPe$. We thus have $\VPk{k} \subseteq \VPe$. 
\end{proof}

\section{Poly-approximation in width-2 ABPs}\label{sec:cor:interpolation}
We give the interpolation argument that completes the proof of  \protect \MakeUppercase {C}orollary\nobreakspace \ref {cor:interpolation}, which says that the poly-approximation closure of $\VPk{2}$ equals $\VPe$ when $\characteristic(\FF)\neq2$ and $\FF$ is infinite.

\begin{proposition}\label{interp}
$\polyapproxbar{\VPe} = \VPe$ when $\characteristic(\FF) \neq 2$ and $\FF$ is infinite.
\end{proposition}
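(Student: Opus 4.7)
The plan is to establish the nontrivial inclusion $\polyapproxbar{\VPe} \subseteq \VPe$ by a standard polynomial interpolation in $\eps$; the reverse inclusion is immediate by taking the error degree identically zero.

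First I would unpack the hypothesis. Suppose $(f_n) \in \polyapproxbar{\VPe}$. By \protect \MakeUppercase {D}efinition\nobreakspace \ref {barclass} there are polynomials $f_{n;i}(\xvars) \in \FF[\xvars]$ and a function $e(n) \in \polyf(n)$ such that
$$g_n(\eps, \xvars) \coloneqq f_n(\xvars) + \eps f_{n;1}(\xvars) + \cdots + \eps^{e(n)} f_{n;e(n)}(\xvars)$$
lies in $\VPe(\FF(\eps))$. Thus there is a formula $\Phi_n$ of polynomially bounded size $s(n)$ over $\FF(\eps)$ computing $g_n$. Only finitely many constants of $\FF(\eps)$ occur in $\Phi_n$, each a quotient of polynomials in $\eps$; let $S_n \subset \FF$ be the finite set of zeros of all these denominators.

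Since $\FF$ is infinite, I can pick $e(n)+1$ pairwise distinct values $\alpha_0, \ldots, \alpha_{e(n)} \in \FF \setminus S_n$. Substituting $\eps \mapsto \alpha_i$ in $\Phi_n$ yields a formula $\Phi_n^{(i)}$ over $\FF$ of size at most $s(n)$ computing $g_n(\alpha_i, \xvars) \in \FF[\xvars]$. Because $g_n$ has degree at most $e(n)$ in $\eps$, Lagrange interpolation recovers it from these samples; specializing at $\eps = 0$ gives
$$f_n(\xvars) = g_n(0, \xvars) = \sum_{i=0}^{e(n)} \lambda_i \, g_n(\alpha_i, \xvars), \qquad \lambda_i \coloneqq \prod_{j \neq i} \frac{-\alpha_j}{\alpha_i - \alpha_j} \in \FF.$$

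Finally I would combine the formulas $\Phi_n^{(i)}$ by scaling each with the constant $\lambda_i$ and summing, producing a formula over $\FF$ for $f_n$ of total size $\Oh(e(n) \cdot s(n)) \in \polyf(n)$; this shows $(f_n) \in \VPe$. There is no real obstacle here: the only thing to verify is that infiniteness of $\FF$ supplies enough evaluation points to avoid the finitely many poles of the $\FF(\eps)$-constants used by the approximating formula. In particular, the $\characteristic(\FF)\neq 2$ hypothesis carried by the statement plays no role in the interpolation itself.
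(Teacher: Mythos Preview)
Your argument is correct and is essentially identical to the paper's own proof: both establish $\polyapproxbar{\VPe}\subseteq\VPe$ by choosing $e(n)+1$ distinct evaluation points in $\FF$ avoiding the poles of the $\FF(\eps)$-constants in the approximating formula, and then applying Lagrange interpolation at $\eps=0$ to express $f_n$ as an $\FF$-linear combination of the specialized formulas. Your closing observation that the hypothesis $\characteristic(\FF)\neq 2$ is not actually used in the interpolation is also accurate; it is inherited from the surrounding context rather than being needed here.
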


\begin{proof}
The inclusion $\VPe \subseteq \polyapproxbar{\VPe}$ is clear.
For the other direction, let $(f_n) \in \polyapproxbar{\VPe}$. Then there are polynomials $f_{n;i}(\xvars) \in \FF[\xvars]$, $e(n) \in \polyf(n)$ such that
\[
f_n(\xvars) + \eps f_{n;1}(\xvars) + \eps^2 f_{n;2}(\xvars) + \cdots + \eps^{e(n)} f_{n;e(n)}(\xvars)
\]
is computed by a poly-size formula $\Gamma$ over $\FF(\eps)$. Let $\alpha_0, \alpha_1, \ldots, \alpha_{e(n)}$ be distinct elements in~$\FF$ such that replacing $\eps$ by $\alpha_j$ in $\Gamma$ is a valid substitution (these $\alpha_j$ exist since by assumption our field is infinite). View 
\[
g_n(\eps) \coloneqq f_n(\xvars) + \eps f_{n;1}(\xvars) + \eps^2 f_{n;2}(\xvars) + \cdots + \eps^{e(n)} f_{n;e(n)}(\xvars)
\]
as a polynomial in $\eps$. The polynomial $g_n(\eps)$ has degree at most $e(n)$ so we can write $g_n(\eps)$ as follows (Lagrange interpolation on $e(n) + 1$ points)
\begin{equation}
\label{eq:lagrange}
g_n(\varepsilon) = \sum_{j=0}^{e(n)} g_n(\alpha_j) \prod_{\substack{0 \leq m \leq e(n):\\ m\neq j}} \frac{\varepsilon - \alpha_m}{\alpha_j - \alpha_m}.
\end{equation}
Clearly, $f_n(\xvars) = g_n(0)$. From \eqref{eq:lagrange} we see directly how to write $g_n(0)$ as a linear combination of the values $g_n(\alpha_j)$, namely
\[
g_n(0) = \sum_{j=0}^{e(n)} g_n(\alpha_j) \prod_{\substack{0 \leq m \leq e(n):\\ m\neq j}} \frac{-\alpha_m}{\alpha_j - \alpha_m},
\]
that is,
\[
g_n(0) = \sum_{j=0}^{e(n)}  \beta_j\, g_n(\alpha_j) \quad \textnormal{with} \quad \beta_j \coloneqq \prod_{\substack{0 \leq m \leq e(n):\\ m\neq j}} \frac{\alpha_m}{\alpha_m - \alpha_j}.
\]
The value $g_n(\alpha_j)$ is computed by the formula $\Gamma$ with $\eps$ replaced by $\alpha_j$, which we denote by~$\Gamma|_{\eps=\alpha_j}$. 
Thus $f_n(\xvars)$ is computed by the poly-size formula $\sum_{j=0}^{e(n)} \beta_j\, \Gamma|_{\eps = \alpha_j}$. 
Therefore we have $(f_n) \in \VPe$.
\end{proof}

\begin{remark}\label{interpext}
Proposition\nobreakspace \ref {interp} also holds with $\VPe$ replaced by $\VPs$ or $\VP$ by a similar proof.
\end{remark}

\section{$\VNPk{1}\subsetneq \VNP$ when $\FF = \FF_2$}\label{pro:FIIproof}

In our proofs of $\VNP_1 = \VNP$ (Section\nobreakspace \ref {sec:secondmain} and Section\nobreakspace \ref {sec:VNPdirect}) the assumption  $\characteristic(\FF)\neq 2$ played a crucial role. We can prove that over the finite field $\FF_2$ the inclusion $\VNP_1 \subseteq \VNP$ is indeed strict.

\begin{proposition}\label{pro:FII} 
$\VNPk{1} \subsetneq \VNP$ when $\FF = \FF_2$.
\end{proposition}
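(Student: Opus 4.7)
The plan is to prove a structural restriction on every polynomial in $\VNPk{1}(\FF_2)$ evaluated on the Boolean cube, and then exhibit a family in $\VNP(\FF_2)$ violating this restriction. Suppose $(f_n) \in \VNPk{1}(\FF_2)$, so that $f_n(\xvars) = \sum_{\bvars \in \{0,1\}^{p}} \prod_{i=1}^{k} \ell_i(\bvars,\xvars)$ with affine linear forms $\ell_i$ over $\FF_2$. Write $\ell_i = \lambda_i(\bvars) + c_i(\xvars)$, where $\lambda_i$ is linear in $\bvars$ and $c_i$ is affine in $\xvars$, let $A \in \FF_2^{k\times p}$ have rows $\lambda_i$, and let $c(\xvars)$ be the column vector of the $c_i(\xvars)$. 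For $\xvars \in \FF_2^{v(n)}$ every $\ell_i(\bvars,\xvars)$ lies in $\FF_2$, so the product equals $1$ iff $A\bvars = \mathbf{1} + c(\xvars)$, yielding
\[
f_n(\xvars) \;\equiv\; \#\{\bvars \in \FF_2^p : A\bvars = \mathbf{1} + c(\xvars)\} \pmod{2}.
\]

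An affine linear system over $\FF_2$ has either no solutions or exactly $2^{p-\rk A}$ solutions. If $\rk A < p$, the parity above is always $0$, so $f_n$ vanishes on $\FF_2^{v(n)}$. If $\rk A = p$, the parity is $1$ iff the system is consistent, i.e., iff $\mathbf{1} + c(\xvars)$ lies in the column span of $A$; writing the column span as the kernel of $k-p$ linear functionals $\mu_1,\ldots,\mu_{k-p}$ on $\FF_2^k$, this is the conjunction of the affine linear equations $\mu_j(c(\xvars)) = \mu_j(\mathbf{1})$ in $\xvars$. In both cases $\{\xvars \in \FF_2^{v(n)} : f_n(\xvars) = 1\}$ is either empty or an affine subspace of $\FF_2^{v(n)}$.

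To conclude, take the constant family $h_n(x_1,x_2,x_3) \coloneqq x_1 x_2 + x_3$, which lies in $\VPe \subseteq \VNP$. Direct tabulation gives $\{\xvars \in \FF_2^3 : h_n(\xvars) = 1\} = \{(0,0,1),(0,1,1),(1,0,1),(1,1,0)\}$, but the affine combination $(0,0,1)+(0,1,1)+(1,0,1) = (1,1,1)$ falls outside this set, so it is not closed under ternary affine combinations and hence is not an affine subspace of $\FF_2^3$. The structural restriction above therefore forces $(h_n) \notin \VNPk{1}(\FF_2)$, which together with the inclusion $\VNPk{1} \subseteq \VNP$ yields the strict separation. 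The only delicate step is the linear-algebra identification of the Boolean evaluation as an affine-subspace indicator; the counterexample itself is verified by a finite enumeration.
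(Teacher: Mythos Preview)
Your proof is correct and takes a genuinely different route from the paper's.

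The paper argues directly that the specific polynomial $1+xy$ cannot be written as a hypercube sum of a product of affine linear forms: it classifies each factor by its $\xvars$-part (one of $x$, $y$, $x+y$), plugs in the four points of $\FF_2^2$, and obtains a system of four parity equations that is shown to be inconsistent by an ad~hoc analysis of affine solution spaces.

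You instead extract a general structural restriction valid for \emph{every} family in $\VNPk{1}(\FF_2)$: the $1$-set of $f_n$ on $\FF_2^{v(n)}$ is always an affine subspace (or empty). The key observation---that over $\FF_2$ the product $\prod_i \ell_i$ equals $1$ precisely when all factors equal $1$, turning the hypercube sum into a parity count of solutions to a fixed linear system $A\bvars = \mathbf{1}+c(\xvars)$---is the same one the paper uses implicitly, but you push it to its natural conclusion rather than specializing to one polynomial. Your witness $x_1x_2+x_3$ then falls out immediately by a four-point check; the paper's witness $1+xy$ would work equally well under your lemma (its $1$-set has three elements). Your argument is shorter, yields a reusable characterization of $\VNPk{1}(\FF_2)$ as a function class, and avoids the paper's case split on the $\xvars$-part of each factor. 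The paper's approach, on the other hand, is entirely self-contained on a two-variable example and does not need the rank dichotomy or the description of a column span by annihilating functionals.
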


\begin{proof}
Let $\FF = \FF_2$.
Clearly $(1+xy) \in \VNP$. 
However, we will prove that $1+xy$ cannot be written as a hypercube sum of affine linear forms. 
In fact, we will prove something stronger, namely that the \emph{function} $(x,y) \mapsto 1 + xy$ cannot be written as a hypercube sum of a product of affine linear forms.

Assume the contrary: the function $(x,y) \mapsto 1 + xy$ can be written as a hypercube sum of a product of affine linear forms. 
We can thus write
\begin{equation}\label{eq:setup}
1+xy = \textstyle\sum_{\bvars} L_{\bvars} \quad \textnormal{with}\quad L_{\bvars} \coloneqq \textstyle\prod_{i=1}^\alpha (x+A_i) \textstyle\prod_{j=1}^\beta (y+B_j) \textstyle\prod_{k=1}^\gamma (x+y+C_k)
\end{equation}
for some affine linear forms $A_i(\bvars)$, $B_j(\bvars)$, $C_k(\bvars)$ in the hypercube variables $\bvars$. On $\FF_2$ the functions $x, x^2, x^3, \ldots$ coincide; the functions $y,y^2,y^3,\ldots$ coincide; and the functions $x+y, (x+y)^2$, $(x+y)^3, \ldots$ coincide, so
\begin{align*}
\textstyle\prod_i (x+A_i) &= \textstyle\prod_i A_i + x \Bigl(\textstyle\prod_i(1+A_i)+\textstyle\prod_i A_i\Bigr),\\
\textstyle\prod_j (y+B_j) &= \textstyle\prod_j B_j + y \Bigl(\textstyle\prod_j(1+B_j)+\textstyle\prod_j B_j\Bigr), \\
\textstyle\prod_k (x+y+C_k) &= \textstyle\prod_k C_k + (x+y) \Bigl(\textstyle\prod_k(1+C_k)+\textstyle\prod_k C_k\Bigr).
\end{align*}
Multiplying the three expressions and simplifying powers of $x$ and $y$ gives
\begin{align*}
L_\bvars &= \textstyle\prod_{i,j,k} A_i B_j C_k + x\Bigl( \textstyle\prod_{i,j,k} (1+A_i)B_j(1+C_k) + \textstyle\prod_{i,j,k}A_i B_j C_k \Bigr)\\
&+y\Bigl( \textstyle\prod_{i,j,k} A_i(1+B_j)(1+C_k) + \textstyle\prod_{i,j,k}A_i B_j C_k \Bigr)\\
&+xy \Bigl( \textstyle\prod_{i,j,k} A_i (1+B_j) (1+C_k) +\textstyle\prod_{i,j,k} (1+A_i) B_j (1+C_k)\\ 
&\qquad + \textstyle\prod_{i,j,k} (1+A_i) (1+B_j) C_k + \textstyle\prod_{i,j,k} A_i B_j C_k\Bigr).
\end{align*}
Plugging in the four possible assignments $(x,y) \in \FF_2\times \FF_2$ into
$1+xy = \sum_{\bvars} L_{\bvars}$,
we get the following system of equations
\begin{align}
\textstyle\sum_\bvars \textstyle\prod_{i,j,k} A_i B_j C_k &= 1,\label{eqline1}\\
\textstyle\sum_\bvars \textstyle\prod_{i,j,k} (1+A_i) B_j (1+C_k) &= 1,\label{eqline2}\\
\textstyle\sum_\bvars \textstyle\prod_{i,j,k} A_i (1+B_j) (1+C_k) &= 1,\label{eqline3}\\
\textstyle\sum_\bvars \textstyle\prod_{i,j,k} (1+A_i) (1+B_j) C_k &= 0.\label{eqline4}
\end{align}

We will show that the above system of equations is inconsistent.
Note that \eqref{eqline1} asserts that an odd number of vectors $\bvars$ satisfy the system of equations
\begin{align*}
A_i &= 1 \ \forall i\\
B_j &= 1 \ \forall j\\
C_k &= 1 \ \forall k.
\end{align*}
Recall that we defined $\alpha, \beta, \gamma$ as the number of factors $x+A_i$, $y+B_j$, $x + y + C_k$ in \eqref{eq:setup}, respectively.
Let $m\coloneqq \alpha + \beta + \gamma$. Recall that we defined $n$ as the number of hypercube variables $b_\ell$. 
As we work over $\FF_2$, any affine linear form in $\bvars$ can be written as $\alpha_0 + \sum_{\ell=1}^n \alpha_\ell b_\ell$ with $\alpha_i \in \{0,1\}$. Write the $i$th linear form in $(A_1, \ldots, A_\alpha, B_1, \ldots, B_\beta, C_1, \dots, C_\gamma)$ as $v_{0,i} + \sum_{\ell=1}^n b_\ell v_{\ell,i}$, and let $v_\ell = (v_{\ell,1}, \dots, v_{\ell,m})$ for $0 \leq \ell \leq n$.
We define the linear map $M: \FF_2^n \to \FF_2^m$ by $M(\bvars) = \sum_{\ell=1}^n b_\ell v_\ell$. We call a bit vector $\bvars \in \FF_2^n$ a \defin{solution} of \eqref{eqline1} if $M(b) = v_0 + 1^\alpha 1^\beta 1^\gamma$, where $1^\alpha 1^\beta 1^\gamma$ is the all-ones vector.
Observe that \eqref{eqline1} says that there is an odd number of solutions of \eqref{eqline1}.
Since the set of solutions of \eqref{eqline1} forms an affine linear subspace of $(\FF_2)^n$, its cardinality is a power of two.
The only odd power of two is $1$, so there is exactly one solution of \eqref{eqline1}.
Let $b^{(1)}$ be this unique solution: $M(b^{(1)})=v_0+1^\alpha 1^\beta 1^\gamma$.
We do the same for \eqref{eqline2} and \eqref{eqline3} and find unique solutions
$M(b^{(2)})=v_0+0^\alpha 1^\beta 0^\gamma$ and
$M(b^{(3)})=v_0+1^\alpha 0^\beta 0^\gamma$.
Equation \eqref{eqline4} asserts that the number of solutions of \eqref{eqline4} is even.
One solution of \eqref{eqline4} is given by $M(b^{(1)}+b^{(2)}+b^{(3)})= 3v_0 + 1^\alpha1^\beta1^\gamma + 0^\alpha 1^\beta 0^\gamma + 1^\alpha 0^\beta 0^\gamma = v_0+0^\alpha 0^\beta 1^\gamma$.
Let $b^{(4')}$ and $b^{(4'')}$ be two distinct solutions of \eqref{eqline4} with
$M(b^{(4')})=M(b^{(4'')})=v_0+0^\alpha 0^\beta 1^\gamma$.
Then $M(b^{(2)}+b^{(3)}+b^{(4')}) = v_0+1^\alpha 1^\beta 1^\gamma = M(b^{(2)}+b^{(3)}+b^{(4'')})$,
which contradicts the uniqueness of $b^{(1)}$.
\end{proof}

\begin{remark}
Our proof of Proposition\nobreakspace \ref {pro:FII} does not generalize to all fields $\FF$ of characteristic~2, because the polynomial $1+xy$ is in fact computable by a hypercube sum of a product of affine linear forms when $\FF=\FF_4$ (and thus when $\FF = \FF_{2^{2k}}$, $k \in \NN$).
Indeed, $\FF_4 \cong \FF_2[Z]/(Z^2+Z+1)$, so the element $Z \in \FF_4$ is a third root of unity ($Z^3=1$)
and satisfies $Z^2+Z+1=0$.
It can be checked that therefore
$\sum_{b=0}^1 (x+Z^2y+Zb)\cdot(x+Zy+Z^2b)\cdot(x+y+b)$ equals $1+xy$.
\end{remark}

\bibliography{smallwidthabp_arxiv_v2.bib}

\begin{thebibliography}{10}

\bibitem{AV:08}
Manindra Agrawal and V.~Vinay.
\newblock Arithmetic circuits: A chasm at depth four.
\newblock {\em 2013 IEEE 54th Annual Symposium on Foundations of Computer
  Science}, pages 67--75, 2008.
\newblock \href {http://dx.doi.org/10.1109/FOCS.2008.32}
  {\path{doi:10.1109/FOCS.2008.32}}.

\bibitem{wang}
Eric Allender and Fengming Wang.
\newblock On the power of algebraic branching programs of width two.
\newblock {\em Comput. Complexity}, 25(1):217--253, 2016.
\newblock \href {http://dx.doi.org/10.1007/s00037-015-0114-7}
  {\path{doi:10.1007/s00037-015-0114-7}}.

\bibitem{Alper2015}
Jarod Alper, Tristram Bogart, and Mauricio Velasco.
\newblock A lower bound for the determinantal complexity of a hypersurface.
\newblock {\em Found. Comput. Math.}, pages 1--8, 2015.
\newblock \href {http://arxiv.org/abs/1505.02205} {\path{arXiv:1505.02205}},
  \href {http://dx.doi.org/10.1007/s10208-015-9300-x}
  {\path{doi:10.1007/s10208-015-9300-x}}.

\bibitem{Andreev87}
Alexander~E. Andreev.
\newblock A method for obtaining more than quadratic effective lower estimates
  of complexity of $\pi$ schemes.
\newblock {\em Moscow Univ. Math. Bull.}, 42(1):63--66, 1987.

\bibitem{BO:11}
Daniel~J. Bates and Luke Oeding.
\newblock Toward a salmon conjecture.
\newblock {\em Exp. Math.}, 20(3):358--370, 2011.
\newblock \href {http://dx.doi.org/10.1080/10586458.2011.576539}
  {\path{doi:10.1080/10586458.2011.576539}}.

\bibitem{ben1992computing}
Michael Ben-Or and Richard Cleve.
\newblock Computing algebraic formulas using a constant number of registers.
\newblock {\em SIAM J.~Comput.}, 21(1):54--58, 1992.
\newblock \href {http://dx.doi.org/10.1137/0221006}
  {\path{doi:10.1137/0221006}}.

\bibitem{BCRL:79}
Dario Bini, Milvio Capovani, Francesco Romani, and Grazia Lotti.
\newblock {O}$(n^{2.7799})$ complexity for $n \times n$ approximate matrix
  multiplication.
\newblock {\em Inf. Process. Lett.}, 8(5):234--235, 1979.
\newblock \href {http://dx.doi.org/10.1016/0020-0190(79)90113-3}
  {\path{doi:10.1016/0020-0190(79)90113-3}}.

\bibitem{brent1974parallel}
Richard~P. Brent.
\newblock The parallel evaluation of general arithmetic expressions.
\newblock {\em J.~ACM}, 21(2):201--206, April 1974.
\newblock \href {http://dx.doi.org/10.1145/321812.321815}
  {\path{doi:10.1145/321812.321815}}.

\bibitem{burgisser2000completeness}
Peter B{\"u}rgisser.
\newblock {\em Completeness and reduction in algebraic complexity theory},
  volume~7 of {\em Algorithms Comput. Math.}
\newblock Springer-Verlag, Berlin, 2000.
\newblock \href {http://dx.doi.org/10.1007/978-3-662-04179-6}
  {\path{doi:10.1007/978-3-662-04179-6}}.

\bibitem{MR2097213}
Peter B{\"u}rgisser.
\newblock The complexity of factors of multivariate polynomials.
\newblock {\em Found. Comput. Math.}, 4(4):369--396, 2004.
\newblock \href {http://dx.doi.org/10.1007/s10208-002-0059-5}
  {\path{doi:10.1007/s10208-002-0059-5}}.

\bibitem{burgisser1997algebraic}
Peter B{\"u}rgisser, Michael Clausen, and M.~Amin Shokrollahi.
\newblock {\em Algebraic complexity theory}, volume 315 of {\em Grundlehren
  Math. Wiss.}
\newblock Springer-Verlag, Berlin, 1997.
\newblock \href {http://dx.doi.org/10.1007/978-3-662-03338-8}
  {\path{doi:10.1007/978-3-662-03338-8}}.

\bibitem{BI:10}
Peter B{\"u}rgisser and Christian Ikenmeyer.
\newblock Geometric complexity theory and tensor rank.
\newblock {\em Proceedings 43rd Annual ACM Symposium on Theory of Computing
  2011}, pages 509--518, 2011.
\newblock \href {http://dx.doi.org/10.1145/1993636.1993704}
  {\path{doi:10.1145/1993636.1993704}}.

\bibitem{BI:13}
Peter B{\"u}rgisser and Christian Ikenmeyer.
\newblock Explicit lower bounds via geometric complexity theory.
\newblock {\em Proceedings 45th Annual ACM Symposium on Theory of Computing
  2013}, pages 141--150, 2013.
\newblock \href {http://dx.doi.org/10.1145/2488608.2488627}
  {\path{doi:10.1145/2488608.2488627}}.

\bibitem{BIP:16}
Peter B{\"u}rgisser, Christian Ikenmeyer, and Greta Panova.
\newblock No occurrence obstructions in geometric complexity theory.
\newblock {\em Proceedings IEEE 57th Annual Symposium on Foundations of
  Computer Science (FOCS)}, pages 386--395, 2016.
\newblock \href {http://dx.doi.org/10.1109/FOCS.2016.49}
  {\path{doi:10.1109/FOCS.2016.49}}.

\bibitem{MR2861717}
Peter B{\"u}rgisser, Joseph~M. Landsberg, Laurent Manivel, and Jerzy Weyman.
\newblock An overview of mathematical issues arising in the geometric
  complexity theory approach to {${\rm VP}\neq{\rm VNP}$}.
\newblock {\em SIAM J.~Comput.}, 40(4):1179--1209, 2011.
\newblock \href {http://dx.doi.org/10.1137/090765328}
  {\path{doi:10.1137/090765328}}.

\bibitem{2016arXiv160902103E}
Klim Efremenko, Joseph~M. Landsberg, Hal Schenck, and Jerzy Weyman.
\newblock {The method of shifted partial derivatives cannot separate the
  permanent from the determinant}.
\newblock {\em ArXiv}, 2016.
\newblock \href {http://arxiv.org/abs/1609.02103} {\path{arXiv:1609.02103}}.

\bibitem{For:16}
Michael Forbes.
\newblock Some concrete questions on the border complexity of polynomials.
\newblock Presentation given at the Workshop on Algebraic Complexity Theory
  WACT 2016 in Tel Aviv, \url{https://www.youtube.com/watch?v=1HMogQIHT6Q},
  2016.

\bibitem{FSV:17}
Michael~A. Forbes, Amir Shpilka, and Ben~Lee Volk.
\newblock Succinct hitting sets and barriers to proving algebraic circuits
  lower bounds.
\newblock {\em ArXiv}, 2017.
\newblock \href {http://arxiv.org/abs/1701.05328} {\path{arXiv:1701.05328}}.

\bibitem{Ges:15}
Fulvio Gesmundo.
\newblock Geometric aspects of iterated matrix multiplication.
\newblock {\em J. Algebra}, 461:42--64, 2016.
\newblock \href {http://arxiv.org/abs/1512.00766} {\path{arXiv:1512.00766}},
  \href {http://dx.doi.org/10.1016/j.jalgebra.2016.04.028}
  {\path{doi:10.1016/j.jalgebra.2016.04.028}}.

\bibitem{grochow2013unifying}
Joshua~A. Grochow.
\newblock Unifying known lower bounds via geometric complexity theory.
\newblock {\em Comput. Complexity}, 24(2):393--475, 2015.
\newblock \href {http://dx.doi.org/10.1007/s00037-015-0103-x}
  {\path{doi:10.1007/s00037-015-0103-x}}.

\bibitem{grochow_et_al:LIPIcs:2016:6314}
Joshua~A. Grochow, Ketan~D. Mulmuley, and Youming Qiao.
\newblock {Boundaries of VP and VNP}.
\newblock In {\em 43rd International Colloquium on Automata, Languages, and
  Programming (ICALP 2016)}, volume~55, pages 34:1--34:14, 2016.
\newblock \href {http://arxiv.org/abs/1605.02815} {\path{arXiv:1605.02815}},
  \href {http://dx.doi.org/10.4230/LIPIcs.ICALP.2016.34}
  {\path{doi:10.4230/LIPIcs.ICALP.2016.34}}.

\bibitem{Gua:15}
Yonghui Guan.
\newblock {Brill's equations as a GL(V)-module}.
\newblock {\em ArXiv}, 2015.
\newblock \href {http://arxiv.org/abs/1508.02293} {\path{arXiv:1508.02293}}.

\bibitem{approaching-the-chasm-at-depth-four}
Ankit Gupta, Pritish Kamath, Neeraj Kayal, and Ramprasad Saptharishi.
\newblock Approaching the chasm at depth four.
\newblock In {\em 2013 {IEEE} {C}onference on {C}omputational
  {C}omplexity---{CCC} 2013}, pages 65--73. IEEE Computer Soc., Los Alamitos,
  CA, 2013.
\newblock \href {http://dx.doi.org/10.1109/CCC.2013.16}
  {\path{doi:10.1109/CCC.2013.16}}.

\bibitem{chasm3}
Ankit Gupta, Pritish Kamath, Neeraj Kayal, and Ramprasad Saptharishi.
\newblock Arithmetic circuits: a chasm at depth three.
\newblock In {\em 2013 {IEEE} 54th {A}nnual {S}ymposium on {F}oundations of
  {C}omputer {S}cience---{FOCS} 2013}, pages 578--587. IEEE Computer Soc., Los
  Alamitos, CA, 2013.
\newblock \href {http://dx.doi.org/10.1109/FOCS.2013.68}
  {\path{doi:10.1109/FOCS.2013.68}}.

\bibitem{Hastad98}
Johan H{\aa}stad.
\newblock The shrinkage exponent of {D}e {M}organ formulas is {$2$}.
\newblock {\em SIAM J. Comput.}, 27(1):48--64, 1998.
\newblock \href {http://dx.doi.org/10.1137/S0097539794261556}
  {\path{doi:10.1137/S0097539794261556}}.

\bibitem{HIL:13}
Jonathan~D. Hauenstein, Christian Ikenmeyer, and Joseph~M. Landsberg.
\newblock Equations for lower bounds on border rank.
\newblock {\em Exp. Math.}, 22(4):372--383, 2013.
\newblock \href {http://dx.doi.org/10.1080/10586458.2013.825892}
  {\path{doi:10.1080/10586458.2013.825892}}.

\bibitem{PI:15}
Pavel Hrube\v{s} and Iddo Tzameret.
\newblock Short proofs for the determinant identities.
\newblock {\em SIAM J. Comput.}, 44(2):340--383, 2015.
\newblock \href {http://dx.doi.org/10.1137/130917788}
  {\path{doi:10.1137/130917788}}.

\bibitem{IN:93}
Russell Impagliazzo and Noam Nisan.
\newblock The effect of random restrictions on formula size.
\newblock {\em Random Structures Algorithms}, 4(2):121--133, 1993.
\newblock \href {http://dx.doi.org/10.1002/rsa.3240040202}
  {\path{doi:10.1002/rsa.3240040202}}.

\bibitem{Kal:85}
K.~A. Kalorkoti.
\newblock A lower bound for the formula size of rational functions.
\newblock {\em SIAM J. Comput.}, 14(3):678--687, 1985.
\newblock \href {http://dx.doi.org/10.1137/0214050}
  {\path{doi:10.1137/0214050}}.

\bibitem{KOIRAN201256}
Pascal Koiran.
\newblock Arithmetic circuits: the chasm at depth four gets wider.
\newblock {\em Theoret. Comput. Sci.}, 448:56--65, 2012.
\newblock \href {http://dx.doi.org/10.1016/j.tcs.2012.03.041}
  {\path{doi:10.1016/j.tcs.2012.03.041}}.

\bibitem{Lan:05}
Joseph~M. Landsberg.
\newblock The border rank of the multiplication of {$2\times2$} matrices is
  seven.
\newblock {\em J.~Amer. Math. Soc.}, 19(2):447--459, 2006.
\newblock \href {http://dx.doi.org/10.1090/S0894-0347-05-00506-0}
  {\path{doi:10.1090/S0894-0347-05-00506-0}}.

\bibitem{lamare:13}
Joseph~M. Landsberg, Laurent Manivel, and Nicolas Ressayre.
\newblock Hypersurfaces with degenerate duals and the geometric complexity
  theory program.
\newblock {\em Comment. Math. Helv.}, 88(2):469--484, 2013.
\newblock \href {http://dx.doi.org/10.4171/CMH/292}
  {\path{doi:10.4171/CMH/292}}.

\bibitem{LM:16}
Joseph~M. Landsberg and Mateusz Micha\l{}ek.
\newblock A $2n^2-\log(n)-1$ lower bound for the border rank of matrix
  multiplication.
\newblock {\em arXiv}, 2016.
\newblock \href {http://arxiv.org/abs/1608.07486} {\path{arXiv:1608.07486}}.

\bibitem{LO:11v3}
Joseph~M. Landsberg and Giorgio Ottaviani.
\newblock New lower bounds for the border rank of matrix multiplication.
\newblock {\em Theory Comput.}, 11:285--298, 2015.
\newblock \href {http://arxiv.org/abs/1112.6007} {\path{arXiv:1112.6007}},
  \href {http://dx.doi.org/10.4086/toc.2015.v011a011}
  {\path{doi:10.4086/toc.2015.v011a011}}.

\bibitem{Lic:84}
Thomas Lickteig.
\newblock A note on border rank.
\newblock {\em Inf. Process. Lett.}, 18(3):173--178, 1984.
\newblock \href {http://dx.doi.org/10.1016/0020-0190(84)90023-1}
  {\path{doi:10.1016/0020-0190(84)90023-1}}.

\bibitem{malod2008}
Guillaume Malod and Natacha Portier.
\newblock Characterizing {V}aliant's algebraic complexity classes.
\newblock {\em J.~Complexity}, 24(1):16--38, 2008.
\newblock \href {http://dx.doi.org/10.1016/j.jco.2006.09.006}
  {\path{doi:10.1016/j.jco.2006.09.006}}.

\bibitem{MR:04}
Thierry Mignon and Nicolas Ressayre.
\newblock A quadratic bound for the determinant and permanent problem.
\newblock {\em Int. Math. Res. Not.}, 2004(79):4241--4253, 2004.
\newblock \href {http://dx.doi.org/10.1155/S1073792804142566}
  {\path{doi:10.1155/S1073792804142566}}.

\bibitem{MR1861288}
Ketan~D. Mulmuley and Milind Sohoni.
\newblock Geometric complexity theory. {I}. {A}n approach to the {P} vs.\ {NP}
  and related problems.
\newblock {\em SIAM J.~Comput.}, 31(2):496--526, 2001.
\newblock \href {http://dx.doi.org/10.1137/S009753970038715X}
  {\path{doi:10.1137/S009753970038715X}}.

\bibitem{gct2}
Ketan~D. Mulmuley and Milind Sohoni.
\newblock Geometric complexity theory~{II}. {T}owards explicit obstructions for
  embeddings among class varieties.
\newblock {\em SIAM J. Comput.}, 38(3):1175--1206, 2008.
\newblock \href {http://dx.doi.org/10.1137/080718115}
  {\path{doi:10.1137/080718115}}.

\bibitem{Nisan:Lower_bounds_non_commutative_compts}
Noam Nisan.
\newblock Lower bounds for non-commutative computation.
\newblock In {\em Proceedings of the twenty-third annual ACM symposium on
  Theory of computing}, pages 410--418. ACM, 1991.
\newblock \href {http://dx.doi.org/10.1145/103418.103462}
  {\path{doi:10.1145/103418.103462}}.

\bibitem{Nisan1996}
Noam Nisan and Avi Wigderson.
\newblock Lower bounds on arithmetic circuits via partial derivatives.
\newblock {\em Comput. Complexity}, 6(3):217--234, 1996/97.
\newblock \href {http://dx.doi.org/10.1007/BF01294256}
  {\path{doi:10.1007/BF01294256}}.

\bibitem{OS:16}
Luke Oeding and Steven~V. Sam.
\newblock Equations for the fifth secant variety of {S}egre products of
  projective spaces.
\newblock {\em Exp. Math.}, 25(1):94--99, 2016.
\newblock \href {http://dx.doi.org/10.1080/10586458.2015.1037872}
  {\path{doi:10.1080/10586458.2015.1037872}}.

\bibitem{PZ93}
Michael~S. Paterson and Uri Zwick.
\newblock Shrinkage of {D}e {M}organ formulae under restriction.
\newblock {\em Random Structures Algorithms}, 4(2):135--150, 1993.
\newblock \href {http://dx.doi.org/10.1002/rsa.3240040203}
  {\path{doi:10.1002/rsa.3240040203}}.

\bibitem{Raz:2009:MFP:1502793.1502797}
Ran Raz.
\newblock Multi-linear formulas for permanent and determinant are of
  super-polynomial size.
\newblock {\em J. ACM}, 56(2):Art. 8, 17, 2009.
\newblock \href {http://dx.doi.org/10.1145/1502793.1502797}
  {\path{doi:10.1145/1502793.1502797}}.

\bibitem{MR0150048}
Herbert~John Ryser.
\newblock {\em Combinatorial mathematics}.
\newblock The Carus Mathematical Monographs, No. 14. Published by The
  Mathematical Association of America; distributed by John Wiley and Sons,
  Inc., New York, 1963.

\bibitem{SSS:09}
Chandan Saha, Ramprasad Saptharishi, and Nitin Saxena.
\newblock The power of depth 2 circuits over algebras.
\newblock In {\em IARCS Annual Conference on Foundations of Software Technology
  and Theoretical Computer Science}, volume~4, pages 371--382, 2009.
\newblock \href {http://arxiv.org/abs/0904.2058} {\path{arXiv:0904.2058}},
  \href {http://dx.doi.org/10.4230/LIPIcs.FSTTCS.2009.2333}
  {\path{doi:10.4230/LIPIcs.FSTTCS.2009.2333}}.

\bibitem{ramprasad}
Ramprasad Saptharishi.
\newblock A survey of lower bounds in arithmetic circuit complexity 3.0.2,
  2016.
\newblock Online survey, \url{https://github.com/dasarpmar/lowerbounds-survey}.
\newblock URL: \url{https://github.com/dasarpmar/lowerbounds-survey}.

\bibitem{stra:83-2}
Volker Strassen.
\newblock Rank and optimal computation of generic tensors.
\newblock {\em Linear Algebra Appl.}, 52/53:645--685, 1983.
\newblock \href {http://dx.doi.org/10.1016/0024-3795(83)80041-X}
  {\path{doi:10.1016/0024-3795(83)80041-X}}.

\bibitem{Subbo61}
Bella~Abramovna Subbotovskaya.
\newblock Realizations of linear functions by formulas using +,$\cdot$,-.
\newblock {\em Doklady Akademii Nauk SSSR}, 136(3):553--555, 1961.

\bibitem{Tal14}
Avishay Tal.
\newblock Shrinkage of {D}e {M}organ formulae by spectral techniques.
\newblock In {\em 55th {A}nnual {IEEE} {S}ymposium on {F}oundations of
  {C}omputer {S}cience---{FOCS}}, pages 551--560. IEEE Computer Soc., Los
  Alamitos, CA, 2014.
\newblock \href {http://dx.doi.org/10.1109/FOCS.2014.65}
  {\path{doi:10.1109/FOCS.2014.65}}.

\bibitem{Tavenas20152}
S\'ebastien Tavenas.
\newblock Improved bounds for reduction to depth 4 and depth 3.
\newblock {\em Inform. and Comput.}, 240:2--11, 2015.
\newblock \href {http://dx.doi.org/10.1016/j.ic.2014.09.004}
  {\path{doi:10.1016/j.ic.2014.09.004}}.

\bibitem{toda:92}
Seinosuke Toda.
\newblock Classes of arithmetic circuits capturing the complexity of computing
  the determinant.
\newblock {\em IEICE Trans. Inf. \& Syst.}, 75(1):116--124, 1992.

\bibitem{Val:79b}
Leslie~G. Valiant.
\newblock Completeness classes in algebra.
\newblock In {\em Conference {R}ecord of the {E}leventh {A}nnual {ACM}
  {S}ymposium on {T}heory of {C}omputing ({A}tlanta, {G}a., 1979)}, pages
  249--261. ACM, New York, 1979.
\newblock \href {http://dx.doi.org/10.1145/800135.804419}
  {\path{doi:10.1145/800135.804419}}.

\bibitem{valiant1980reducibility}
Leslie~G. Valiant.
\newblock {\em Reducibility by algebraic projections}.
\newblock University of Edinburgh, Department of Computer Science, 1980.
\newblock Internal Report.

\bibitem{Zui:15}
Jeroen Zuiddam.
\newblock A note on the gap between rank and border rank.
\newblock {\em Linear Algebra Appl.}, 525:33--44, 2017.
\newblock \href {http://arxiv.org/abs/1504.05597} {\path{arXiv:1504.05597}},
  \href {http://dx.doi.org/10.1016/j.laa.2017.03.015}
  {\path{doi:10.1016/j.laa.2017.03.015}}.

\end{thebibliography}

\appendix
\section{Overview figure}\label{sec:figures}
The diagram in Fig.\nobreakspace \ref {fig:overview} gives an overview of inclusions and separations of complexity classes.

\makeatletter
\tikzset{
  edge node/.code={%
      \expandafter\def\expandafter\tikz@tonodes\expandafter{\tikz@tonodes #1}}}
\makeatother

{

\tikzset{
  equals/.style={
    draw=none,
    edge node={node [sloped, allow upside down, auto=false]{$=$}}},
  subseteq/.style={
    draw=none,
    edge node={node [sloped, allow upside down, auto=false]{$\subseteq$}}},
  subsetneq/.style={
    draw=none,
    edge node={node [sloped, allow upside down, auto=false]{$\subsetneq$}}}
}

\begin{sidewaysfigure}[h!]
\centering
\begin{tikzpicture}
  \matrix (m) [matrix of math nodes,row sep={4em,between origins},column sep={6em,between origins},rectangle]{
	\overline{\VNP_1^\weakest} & \overline{\VNP_1^\weak} & \overline{\VNP_1^\gen} & \overline{\VNP_2^\weakest} & \overline{\VNP_2^\weak} & \overline{\VNP_2^\gen} & \overline{\VNPe} & \overline{\VNPs} & \overline{\VNP} \\
    \VNP_1^\weakest & \VNP_1^\weak & \VNP_1^\gen & \VNP_2^\weakest & \VNP_2^\weak & \VNP_2^\gen & \VNPe & \VNPs & \VNP \\
    \overline{\VP_1^\weakest} & \overline{\VP_1^\weak} & \overline{\VP_1^\gen} & \overline{\VP_2^\weakest} & \overline{\VP_2^\weak} & \overline{\VP_2^\gen} & \overline{\VPe} & \overline{\VPs} & \overline{\VP} \\
    \overline{\VP_1^\weakest}^\poly & \overline{\VP_1^\weak}^{\poly} & \overline{\VP_1^\gen}^\poly & \overline{\VP_2^\weakest}^\poly & \overline{\VP_2^\weak}^\poly & \overline{\VP_2^\gen}^\poly & \overline{\VPe}^\poly & \overline{\VPs}^\poly & \overline{\VP}^\poly \\
    \VP_1^\weakest & \VP_1^\weak & \VP_1^\gen & \VP_2^\weakest & \VP_2^\weak & \VP_2^\gen & \VPe & \VPs & \VP \\
  };
 \path[auto]           (m-1-1) edge[subsetneq] (m-1-2)
                       (m-1-2) edge[subsetneq] (m-1-3)
                       (m-1-3) edge[equals]    (m-1-4)
                       (m-1-4) edge[equals]    (m-1-5)
                       (m-1-5) edge[equals]    (m-1-6)
                       (m-1-6) edge[equals]    (m-1-7)
                       (m-1-7) edge[equals]    (m-1-8)
                       (m-1-8) edge[equals]    (m-1-9)

                       (m-2-1) edge[equals]    (m-1-1)
                       (m-2-2) edge[equals]    (m-1-2) 
                       (m-2-3) edge[subseteq]  (m-1-3) 
                       (m-1-4) edge[subseteq]  (m-2-4) 
                       (m-2-5) edge[subseteq]  (m-1-5) 
                       (m-2-6) edge[subseteq]  (m-1-6)
                       (m-2-7) edge[subseteq]  (m-1-7)
                       (m-2-8) edge[subseteq]  (m-1-8)
                       (m-2-9) edge[subseteq]  (m-1-9)

                       (m-2-1) edge[subsetneq] (m-2-2)
                       (m-2-2) edge[subsetneq] (m-2-3)
                       (m-2-3) edge[equals]    (m-2-4)
                       (m-2-4) edge[equals]    (m-2-5)
                       (m-2-5) edge[equals]    (m-2-6)
                       (m-2-6) edge[equals]    (m-2-7)
                       (m-2-7) edge[equals]    (m-2-8)
                       (m-2-8) edge[equals]    (m-2-9)

                       (m-3-1) edge[equals]    (m-2-1)
                       (m-3-2) edge[equals]    (m-2-2)
                       (m-3-3) edge[subsetneq] (m-2-3)

                       (m-3-1) edge[subsetneq] (m-3-2)
                       (m-3-2) edge[subsetneq] (m-3-3)
                       (m-3-3) edge[subsetneq] (m-3-4)
                       (m-3-4) edge[equals]    (m-3-5)
                       (m-3-5) edge[equals]    (m-3-6)
                       (m-3-6) edge[equals]    (m-3-7)
                       (m-3-7) edge[subseteq]  (m-3-8)
                       (m-3-8) edge[subseteq]  (m-3-9)
                       (m-4-1) edge[equals]    (m-3-1)
                       (m-4-2) edge[equals]    (m-3-2)
                       (m-4-3) edge[equals]    (m-3-3)
                       (m-3-4) edge[subseteq]  (m-4-4) 
                       (m-4-5) edge[subseteq]  (m-3-5) 
                       (m-4-6) edge[subseteq]  (m-3-6) 
                       (m-4-7) edge[subseteq]  (m-3-7)
                       (m-4-8) edge[subseteq]  (m-3-8)
                       (m-4-9) edge[subseteq]  (m-3-9)
                       (m-4-1) edge[subsetneq] (m-4-2)
                       (m-4-2) edge[subsetneq] (m-4-3)
                       (m-4-3) edge[subsetneq] (m-4-4)
                       (m-4-4) edge[equals]    (m-4-5)
                       (m-4-5) edge[equals]    (m-4-6)
                       (m-4-6) edge[equals]    (m-4-7)
                       (m-4-7) edge[subseteq]  (m-4-8)
                       (m-4-8) edge[subseteq]  (m-4-9)
                       (m-5-1) edge[equals]    (m-4-1)
                       (m-5-2) edge[equals]    (m-4-2)
                       (m-5-3) edge[equals]    (m-4-3)
                       (m-5-4) edge[subsetneq] (m-4-4) 
                       (m-5-5) edge[subsetneq] (m-4-5) 
                       (m-5-6) edge[subsetneq] (m-4-6) 
                       (m-5-7) edge[equals]    (m-4-7)
                       (m-5-8) edge[equals]    (m-4-8)
                       (m-5-9) edge[equals]    (m-4-9)
                       (m-5-1) edge[subsetneq] (m-5-2)
                       (m-5-2) edge[subsetneq] (m-5-3)
                       (m-5-3) edge[subsetneq] (m-5-4)
                       (m-5-4) edge[subseteq]  (m-5-5)
                       (m-5-5) edge[subsetneq] (m-5-6)

                       (m-5-6) edge[subsetneq] (m-5-7)
                       (m-5-7) edge[subseteq]  (m-5-8)
                       (m-5-8) edge[subseteq]  (m-5-9)
                       
                       (m-5-1) edge[draw=none] node[yshift=0.4em] {\footnotesize\ref{VP1sep}} (m-5-2)
                       (m-5-2) edge[draw=none] node[yshift=0.4em] {\footnotesize\ref{VP1sep}} (m-5-3)
                       (m-5-3) edge[draw=none] node[yshift=0.4em] {\footnotesize\ref{VP1sep}} (m-5-4)
                       (m-5-5) edge[draw=none] node[yshift=0.4em] {\footnotesize\ref{th:sepweakgen}} (m-5-6)
                       (m-5-6) edge[draw=none] node[yshift=0.4em,rectangle] {\footnotesize\cite{wang}} (m-5-7)

                       (m-1-1) edge[draw=none] node[left,xshift=-0.5em] {\footnotesize\ref{VP1isVNP1weakest}} (m-2-1)
                       (m-2-1) edge[draw=none] node[left,xshift=-0.5em] {\footnotesize\ref{VP1isVNP1weakest}} (m-3-1)
                       (m-3-1) edge[draw=none] node[left,xshift=-0.5em] {\footnotesize\ref{VP1isVP1bar}} (m-4-1)
                       (m-4-1) edge[draw=none] node[left,xshift=-0.5em] {\footnotesize\ref{VP1isVP1bar}} (m-5-1)

                       (m-1-2) edge[draw=none] node[left,xshift=-0.5em] {\footnotesize\ref{VP1isVNP1weakest}} (m-2-2)
                       (m-2-2) edge[draw=none] node[left,xshift=-0.5em] {\footnotesize\ref{VP1isVNP1weakest}} (m-3-2)
                       (m-3-2) edge[draw=none] node[left,xshift=-0.5em] {\footnotesize\ref{VP1isVP1bar}} (m-4-2)
                       (m-4-2) edge[draw=none] node[left,xshift=-0.5em, yshift=1em] {\footnotesize\ref{VP1isVP1bar}} (m-5-2)

                       (m-2-3) edge[draw=none] node[left,xshift=-0.5em] {\footnotesize\ref{VP1isVNP1weakest}} (m-3-3)
                       (m-3-3) edge[draw=none] node[left,xshift=-0.5em] {\footnotesize\ref{VP1isVP1bar}} (m-4-3)
                       (m-4-3) edge[draw=none] node[left,xshift=-0.5em, yshift=1em] {\footnotesize\ref{VP1isVP1bar}} (m-5-3)

                       (m-4-4) edge[draw=none] node[yshift=0.4em] {\footnotesize\ref{moreprecise}} (m-4-5)
                       (m-4-5) edge[draw=none] node[yshift=0.4em] {\footnotesize\ref{moreprecise}} (m-4-6)
                       (m-4-6) edge[draw=none] node[yshift=0.4em] {\footnotesize\ref{approxcor}} (m-4-7)

                       (m-3-4) edge[draw=none] node[yshift=0.4em] {\footnotesize\ref{moreprecise}} (m-3-5)
                       (m-3-5) edge[draw=none] node[yshift=0.4em] {\footnotesize\ref{moreprecise}} (m-3-6)
                       (m-3-6) edge[draw=none] node[yshift=0.4em] {\footnotesize\ref{approxcor}} (m-3-7)
                       
                       (m-2-3) edge[draw=none] node[yshift=0.4em] {\footnotesize\ref{VNPmain}} (m-2-4)
                       (m-2-4) edge[draw=none] node[yshift=0.4em] {\footnotesize\ref{VNPmain}} (m-2-5)
                       (m-2-5) edge[draw=none] node[yshift=0.4em] {\footnotesize\ref{VNPmain}} (m-2-6)
                       (m-2-6) edge[draw=none] node[yshift=0.4em] {\footnotesize\ref{VNPmain}} (m-2-7)
                       (m-2-7) edge[draw=none] node[yshift=0.4em, rectangle] {\footnotesize\cite{valiant1980reducibility}} (m-2-8)
                       (m-2-8) edge[draw=none] node[yshift=0.4em, rectangle] {\footnotesize\cite{valiant1980reducibility}} (m-2-9)
                       
                       (m-5-7) edge[draw=none] node[yshift=0.4em, rectangle] {\footnotesize\cite{Val:79b}} (m-5-8)
                       
                       (m-5-7) edge[draw=none] node[left,xshift=-0.5em, yshift=1em] {\footnotesize\ref{interp}} (m-4-7)
                       (m-5-8) edge[draw=none] node[left,xshift=-0.5em, yshift=1em] {\footnotesize\ref{interpext}} (m-4-8)
                       (m-5-9) edge[draw=none] node[left,xshift=-0.5em, yshift=1em] {\footnotesize\ref{interpext}} (m-4-9)

                       (m-3-4) edge[gray, anchor=center, allow upside down, -latex, shorten >=0pt, shorten <=12pt,bend left=48pt,in=140, edge node={node[sloped, pos=0.1, align=center] {$\subseteq$}}]  (m-1-4)
                       (m-3-5) edge[gray, anchor=center, allow upside down, -latex, shorten >=0pt, shorten <=12pt,bend left=40pt,in=140, edge node={node[sloped, pos=0.1, align=center] {$\subseteq$}}]  (m-1-5)

                       (m-3-6) edge[gray, anchor=center, allow upside down, -latex, shorten >=0pt, shorten <=12pt,bend left=40pt,in=140, edge node={node[sloped, pos=0.1, align=center] {$\subseteq$}}]  (m-1-6)
                       (m-3-7) edge[gray, anchor=center, allow upside down, -latex, shorten >=0pt, shorten <=12pt,bend left=35pt,in=140, edge node={node[sloped, pos=0.1, align=center] {$\subseteq$}}]  (m-1-7) 
                       (m-3-8) edge[gray, anchor=center, allow upside down, -latex, shorten >=0pt, shorten <=12pt,bend left=35pt,in=140, edge node={node[sloped, pos=0.1, align=center] {$\subseteq$}}] (m-1-8)
                       (m-3-9) edge[gray, anchor=center, allow upside down, -latex, shorten >=0pt, shorten <=12pt,bend left=35pt,in=140, edge node={node[sloped, pos=0.1, align=center] {$\subseteq$}}] (m-1-9)

                       (m-4-4) edge[gray, anchor=center, allow upside down, -latex, shorten >=0pt, shorten <=15pt,bend left=38pt, in=130, edge node={node[sloped, pos=0.1, align=center] {$\subseteq$}}]  (m-2-4)
                       (m-4-5) edge[gray, anchor=center, allow upside down, -latex, shorten >=0pt, shorten <=15pt,bend left=35pt, in=130, edge node={node[sloped, pos=0.1, align=center] {$\subseteq$}}]  (m-2-5)

                       (m-4-6) edge[gray, anchor=center, allow upside down, -latex, shorten >=0pt, shorten <=15pt,bend left=35pt, in=130, edge node={node[sloped, pos=0.1, align=center] {$\subseteq$}}]  (m-2-6)
                       (m-4-7) edge[gray, anchor=center, allow upside down, -latex, shorten >=0pt, shorten <=15pt,bend left=35pt, in=140, edge node={node[sloped, pos=0.1, align=center] {$\subseteq$}}]  (m-2-7)
                       (m-4-8) edge[gray, anchor=center, allow upside down, -latex, shorten >=0pt, shorten <=15pt,bend left=35pt, in=140, edge node={node[sloped, pos=0.1, align=center] {$\subseteq$}}] (m-2-8)
                       (m-4-9) edge[gray, anchor=center, allow upside down, -latex, shorten >=0pt, shorten <=15pt,bend left=35pt, in=140, edge node={node[sloped, pos=0.1, align=center] {$\subseteq$}}] (m-2-9);

\end{tikzpicture}
\caption{Overview of inclusions and separations among $\VPk{k}^*$, $\VPe$, $\VPs$, $\VPe$ and their closures when $\characteristic(\FF)\neq 2$.}
\label{fig:overview}
\end{sidewaysfigure}
}



\end{document}